\author{Tomasz Badowski}
\title{Variance-based sensitivity analysis for stochastic chemical kinetics}
\date{Warsaw, Semptember 2011}
\keywords{analiza wra\.zliwo\'sci oparta o wariancj\c{e}, symulacje stochastyczne, Gillespie's direct method, Monte Carlo, proces Markowa, reakcja chemiczna}
 \newcommand{\wt}[1]{\widetilde{{#1}}}
 \newcommand{\mc}[1]{\mathcal{{#1}}}
\newcommand{\N}{\mathbb{N}}
\newcommand{\R}{\mathbb{R}}
\newcommand{\I}{\mathbb{1}}
\newcommand{\PR}{\mathbb{P}}
\newtheorem{theorem}{Theorem}
\newtheorem{lemma}[theorem]{Lemma}
\newtheorem{defin}[theorem]{Definition}
\newtheorem{constr}{Construction}
\DeclareMathOperator{\Var}{Var}
\DeclareMathOperator{\RN}{RN}
\DeclareMathOperator{\hist}{hist}
\newcommand{\E}{\mathbb{E}}
\DeclareMathOperator{\U}{U}
\DeclareMathOperator{\Exp}{Exp}
\DeclareMathOperator{\Pois}{Pois}
\DeclareMathOperator{\Cov}{Cov}
\DeclareMathOperator{\err}{err}
\DeclareMathOperator{\dist}{dist}
\begin{document}
\maketitle
\begin{abstract}
%TODO Global sensitivity analysis valuable tool for analysis of models with with highly uncertain arguments
%Variance based sensitivity indices provide direct quantitative answers to questions concerning
%Marrel?
%1 finish Degasperi
%2 Read param estim + krzykacz & write
Sensitivity analysis is a process of computing sensitivity indices, which are
certain measures of importance of parameters in influencing the outputs of mathematical models.
Sensitivity indices computed in variance-based sensitivity analysis yield quantitative answers to questions
like how much on average the variance of model output, measuring its uncertainty, decreases, if exact values of certain
unknown parameters are determined, e. g. in an experiment.
We propose new schemes for estimation of variance-based sensitivity
indices of outputs of stochastic models, their conditional expectations and histograms given the parameters.
Unbiased estimators obtained in these schemes can be used in a Monte Carlo (MC) procedure approximating sensitivity indices.
We derive relations between variances of final estimators of MC procedures making the same number of evaluations of given function, but using
different schemes, both for the newly introduced schemes and for some used before in the literature.
Numerical experiment for a discrete state stochastic Markov model of a chemical reaction network (DM) shows that our method
can lead to much lower error than method analogous to the one offered by Degasperi et al. \cite{Degasperi2008}.
Further numerical experiments demonstrate that the
application of random time change (RTC) algorithm due to Rathinam et al. for simulation of DM
can lead to over 30 times lower variance of estimators of certain sensitivity indices than when Gillespie's direct (GD) method is used, and
that this variance may significantly depend on the order of reactions in GD method.
We provide some intuitions explaining these effects.
We generalize measures used for comparing dispersion of different distributions, such as coefficient of variation and Fano factor
to the random parameters case, in a way that they can be computed along with variance-based sensitivity indices.
%  In order to facilitate the utilization of proposed methods we review and provide new interpretations of variance-based sensitivity
%  indices in connection to their potential applications and considering the new stochastic setting.
The methods proposed in this work are general and can be used for variance-based sensitivity analysis of stochastic models in any discipline.

% W celu ułatwienia wykorzystania zaproponowanych metod dokonujemy przeglądu znanych i podajemy nowe interpretacje współczynników 
% wrażliwości opartych o wariancję 
%  In order to facilitate the utilization of proposed methods we review and provide new interpretations of variance-based sensitivity 
%  indices, considering the new stochastic setting. 

\end{abstract}
\tableofcontents
\chapter*{Introduction}
\addcontentsline{toc}{section}{Introduction}
Mathematical models often take some parameters as inputs and return some results, which we call outputs of the model, and
which are certain functions of the inputs.
For instance in popular ordinary differential equations based models \cite{Atkins_2006} in chemical reaction kinetics the
parameters can be initial concentrations of reacting species and reaction rate constants, while the output
can be concentration of any species at a given time.
Many physical systems, such as chemical reaction networks involving small
concentrations of certain species, are well described by stochastic models \cite{Pahle2009, vanKampen_B07}. It is for instance becoming clear that
such models can successfully
describe the functioning of a number of important biochemical systems, including certain gene regulatory networks \cite{Arkin1999, Rao_Wolf_Arkin_2002}
and signaling pathways \cite{Lipniacki2007, Tay_2010}.
The output of a stochastic model with given parameters is usually not a single value but random variable with distribution specified by the
 parameters.
Well-stirred chemical reaction networks with small numbers of certain particles are often described using discrete stochastic
 Markov model (DM), the history of which is reviewed in \cite{mcquarrie}.
%A popular method for simulating
% DM was offered by Gillespie \cite{Gillespie1976} and is known
%as Gillespie's direct method.
The inputs of DM can be reaction rates and some parameters describing
initial distributions of particles and the output might be for instance
the random number of particles at a given time.
A number of other stochastic formalisms of chemical kinetics
have also been used such as chemical Langevin equation or hybrid stochastic-deterministic models \cite{Pahle2009}, the latter being particularly
useful for modelling reactions with both small and large concentrations of different species.

Sensitivity analysis methods are concerned with computing different measures of relative importance of
arguments in influencing the value of a function, and in particular can be applied to outputs and inputs of mathematical models.
In stochastic models parameters of distribution of the output like mean \cite{Rathinam_2010}, variance \cite{Barmassa} or histograms
\cite{Degasperi2008} are often taken as functions %values of function
 whose sensitivity indices with respect to model inputs are computed.
Sensitivity analysis has found applications in such diverse fields as chemical kinetics \cite{Rabitz_Kramer_Dacol_1983, Turanyi_1990,
Saltelli2005, Van_Riel_2006}, nuclear safety \cite{Iooss2008}, environmental science \cite{Tarantola_Giglioli_Jesinghaus_Saltelli_2002}
or molecular dynamics \cite{Cooke_Schmidler_2008}.
In chemical kinetics sensitivity analysis has been used among others for
%experimental design \cite{Bentele_2004},
%identifying targets for intervention
%, system optimization
parameter estimation \cite{Kim_Spencer_Albeck_Burke_Sorger_Gaudet_Kim_2010, Juillet2009},
and model simplification \cite{Cristaldi_2011, Okino_Mavrovouniotis_1998, Liu_Swihart_Neelamegham_2005, Degenring2004}.
%and model corraboration \cite{Kim_Spencer_Albeck_Burke_Sorger_Gaudet_Kim_2010}.
Main types of sensitivity analysis methods are local and global ones.
Local sensitivity analysis methods deal with approximating derivatives of the function with respect to its arguments
in a given point of the parameter space. %We thus call these derivatives local sensitivity indices. % It can be useful for a number of applications, such as solutions of inverse problems \cite{Saltelli2005}.
A number of attempts were made to speed up the approximation of these derivatives for DM \cite{Rathinam_2010, Plyasunov2007}.
In particular Rathinam et al. \cite{Rathinam_2010}
showed that using random time change (RTC) algorithm, which is based on the representation of discrete Markov processes due to Kurtz \cite{Kurtz1986},
may lead to much lower variance of estimators of finite difference of mean values of particle numbers at a given time,
 than when Gillespie's direct (GD) method is used.

In global sensitivity analysis the arguments of a function are considered to be random variables. 
They might be for instance results of uncertain measurements of some rate constants.
Global sensitivity analysis methods provide measures of importance of parameters
in influencing the value of a function over the whole range of their possible values.
Variance-based sensitivity analysis is a well established type of global sensitivity analysis, %TODO Saltelli2005
which has a long history of being used in chemical kinetics, its
first formulation known as FAST method having been introduced for this purpose in the seventies \cite{Saltelli2008, Cukier1973}.
Importance indices provided by variance-based sensitivity analysis yield quantitative answers to questions which might serve as reasons for
 undertaking the analysis.
Some of these questions are what average reduction of variance and thus improvement of precision of the model
can be achieved if some of the unknown constants are determined by a precise measurement \cite{Saltelli2008}, or what average error is caused by
fixing a parameter for instance to simplify the model \cite{Sobol2007}. %, or how well one can approximate the output with the given set of arguments.
Variance-based sensitivity analysis has been used among others for parameter estimation in a linear compartmental model \cite{Juillet2009},
and was demonstrated useful for reducing a model of a certain stage of production of an anti parasitic
medicine Ivermectin \cite{Cristaldi_2011}.

The only work so far, in which variance-based sensitivity analysis was performed for parameters of distribution of outputs of
stochastic models, was a paper by Degasperi et al. \cite{Degasperi2008}. Authors performed simulations in every point
of a grid in the parameter space to approximate conditional histograms given the parameters. Unfortunately,
the generalization of variance to the case of vector-valued functions they used causes basic properties of variance-based sensitivity indices, crucial for
their certain applications, to be lost.
Furthermore, their method provides no error estimates for the results.

We propose how to generalize variance to vector-valued functions, like conditional histograms, so that interpretations of variance-based sensitivity indices hold.
We introduce the concept of a scheme for estimation of sensitivity index, containing among others the information of an unbiased estimator
for the index and the number of function evaluations needed to compute it.
We propose different schemes for estimation of variance-based sensitivity indices of conditional expectations and histograms of outputs of
stochastic models given the parameters, which can be used to compute estimators in each step of Monte Carlo (MC) procedure. Thus we can obtain
not only estimates of the indices, but also estimates of error of the result, that is of variance of final MC estimator.
We introduce inefficiency constant of a scheme in estimating given index equal to variance of estimator given by the scheme
times the number of function evaluations used to compute it.
Ratio of such constants for two different schemes is equal to the ratio
 of variances of final MC estimators using
these schemes and the same number of evaluations of a given function with certain distribution of its parameters. Thus if function
evaluations are the main cost of MC step, as is in case of our experiments, these constants can be used to compare error resulting from using different
 schemes for the same computation time. We derive relations between inefficiency constants for different schemes,
both the ones already used to deterministic chemical models
in the literature and the ones introduced in this work.

Numerical experiments on example for which analytic values
of sensitivity indices can be obtained demonstrate better performance of our method in comparison to a method analogous to
Degasperi's et al. in computing sensitivity indices of conditional means, and the fact that
 quasi-Monte Carlo can lead to speed-up in computation of some indices. Further numerical experiments demonstrate that
using random time change (RTC) algorithm can lead to lower variance of certain estimators computed in our schemes, than when
Gillespie's direct method (GD) is used. We also show that this
variance is influenced by order of reactions used in GD method and give some intuitions concerning possible reasons for this dependence.
 Along with sensitivity indices we compute newly introduced generalizations
of measures of dispersion of distribution of outputs of stochastic models to the random parameters case.

The rest of this work is organized as follows. In Chapter \ref{discrStoch} we
define chemical reaction network, define DM for the constant parameters case and provide its constructions.
In Chapter \ref{randPar} we discuss possible interpretations of random parameters in models and extend DM construction to the random parameters case.
In Chapter \ref{varBased} we define variance-based sensitivity indices and describe their interpretations as well as
possible applications.
In Chapter \ref{MC} we define and provide schemes for estimation of sensitivity indices, define their inefficiency constants and
 derive relations between them.
We also discuss implementation details, introduce method analogous to Degasperi's et al. and method depending on using quasi-Monte Carlo sampling
in the parameter space. Chapter \ref{chapNumExp} is devoted to numerical experiments.

We assume basic knowledge of the reader about probability theory, such as contained in the first Chapter of  \cite{Durrett}.
More advanced definitions and theorems as well as assumptions we are making throughout the text are given or referenced to in the main text
or Appendix \ref{appMath}.

%\chapter{Podstawy teoretyczne}
\chapter{Stochastic chemical models with constant parameters}\label{discrStoch}
\section{Chemical reaction network}\label{CRN}
A chemical reaction network contains $N$ chemical species with symbols $X_1,...,X_N$. In DM formalism the
state of the system at a given moment is characterized by a vector of natural numbers
$x = (x_1,\ldots,x_N)$ from some set of admissible states $E \subset \N^N$.
Coordinates of $x$ describe the numbers of species of each kind.
% Denoting the space of all possible states by E,
%  we thus have $E \subset \N^N$.
$L$ chemical reactions $(R_1,\ldots,R_L)$ can occur. The $l$-th reaction is described by formula
\begin{eqnarray}\label{reac}
\underline{s}_{l,1} X_{1} + ... + \underline{s}_{l,N} X_{N} \longrightarrow  \overline{s}_{l,1} X_{1} + ... + \overline{s}_{l,N} X_{N}.
\end{eqnarray}
Vector $\underline{s}_l = (\underline{s}_{l,i})_{i=1}^{N}$
is called the stoichiometric vector of reactants and $ \overline{s}_l = (\overline{s}_{l,i})_{i=1}^{N}$ of products
of reaction $R_l$.
In all this work we denote $I_n = \{1,\ldots, n\}$.
We require that $\underline{s}_l \geq 0 $, which means $\underline{s}_{l,i} \geq 0$  for $i \in I_N$ and similarly $\overline{s_l} \geq 0$.
We define transition vector as $s_{l} =  \overline{s_l} -  \underline{s_l}$.
Occurrence of $l$-th reaction makes the system at state $x$ to transition to state $x +  s_l$.
%TODO not different - sleep and change everything! once you finish finish the example
With every reaction $R_l$ is associated a propensity $a_l(k)(x)$ -
a non negative
% and not identically zero
function of state $x \in E$ and real vector $k$, called (vector of) rate constants, which can have values in some set $S_k \subset \R^{n_k}$
for certain $n_k$ natural positive.
Intuitively speaking, propensity describes how quickly reaction is proceeding in state $x$.
For propensities in DM
we require that $a_l(k)(x)  = 0$ if for some $i\in I_n$ $x_i < \underline{s}_i $, that is if there are too few particles of certain reactant in
the system for the reaction to happen. For instance in the stochastic version of mass action kinetics \cite{Kurtz1986} we take
$k = (k_i)_{i=1}^{L}$ and
\begin{eqnarray}
a_l(k)(x) = k_l{ x \choose \underline{s}_{l}},
\end{eqnarray}
where ${x \choose \underline{s}_{l} } = \prod_{i=1}^N{x_i \choose \underline{s}_{l,i} }$ is the number of
possible ways in which the reactants can collide for the $l$-th reaction to occur, and $k_l$ is called the rate constant of this reaction.
%TODOIn deterministic kinetics a_l(x)(k_l) = k_l { x \choose \underline{s}_{l} },
%\cite{Zhang2009}.
%In general $a_l(k)(x)$ may have different \cite{Gardner2000}.
% We denote the vector of all constants
% appearing in kinetic formulas of the network as $k$ and call it the vector of rate constants or simply rate constants.
Formally, we describe chemical reaction $R_l$ as a triple, which is a function of the rate constants
\begin{equation}
R_l(k) = (a_l(k)(x), \underline{s}_l, \overline{s}_l),
\end{equation}
 and chemical reaction network RN as a tuple containing sequence of reactions and the set of species
\begin{equation}\label{NTfun}
\RN(k) = \big((R_l)_{l=1}^L, \{X_1,..., X_N\}\big),
\end{equation}
also being a function of the rate constants.
A reverse reaction to a given is one in which stoichiometric vectors of reactants and products are replaced. We say that reaction
is reversible if both the reaction and its reverse are present in the reaction network.

%Note that our notation differes from that used in chemical reaction network theory, where, among others the order of reactions does not play a role\ref{}.
%As we will see in section \ref{} this order will have influence on the variance of estimators of sensitivity indices based on Gillepie's construction
%of DM to be
\section{Discrete stochastic model with constant parameters}
%In this section we define discrete stochastic chemical reaction network model with constant parameters $p= (c, k)$ (DMCP),
%where $k$ are rate constants and $c$ is the initial number of particles of each type.
We define stochastic chemical reaction network with constant parameters $p = (c,k)$ (DMCP)
as a certain Markov process on $E$ with allowed times $T = [0, \infty)$, which is a
type of right-continuous stochastic process \cite{Norris1998}. 
%We provide a short introduction to stochastic processes and continuous-time Markov chains in particular in appendix \ref{appa}.
Shortly, stochastic process $Y$ with values in $E$ and allowed times $T$ is a family of random variables $(Y_t)_{t \in T}$ with values in $E$.
One can treat $Y$ as a random variable whose values, known as trajectories of the process, are elements of $E^T$   \cite{billingsley1979}.
%$(Y_t)_{t \in T}$ with values in $E$.
Right-continuous process is one that behaves as follows. It starts in some state $Z_0 \in E$, where it waits for time period 
$S_0$ at the end of which it transitions to another state $Z_1$, where it waits for another time $S_1$ and so on for some discrete process $(Z_i)_{i\geq 0}$,
known as jump process of $Y$ and $(S_i)_{j \geq 0}$ known as its holding times of $Y$ (see \cite{Norris1998} for precise definitions).
The moment 
\begin{equation}\label{expTime}
\chi = \sum_{i=0}^{\infty}S_n,
\end{equation}
when a process makes infinitely many jumps for the first time is called its explosion time.
If $\chi = \infty$, that is no infinite number of transitions can occur in finite time, we call the process nonexplosive.
Since in our case transition corresponds to firing of a reaction and it is nonphysical for infinite number of reactions to occur in finite time
we require the process to be nonexplosive.
A right-continuous nonexplosive stochastic process is uniquely defined by its jump chain and holding times.
An important property of stochastic process $Y$ is its distribution defined similarly as for random variable \cite{billingsley1979}.
%, asit determines the distributions of all functions of the process $g(Y)$, which we also call observables of $Y$.
The distribution of a Markov process is defined by its distribution at time 0 and 
non negative numbers $q_{ x,y}$ for ${ x, y} \in E$, ${x} \neq {y}$ known as intensities of going from state
$x$ to $y$, fulfilling
\begin{equation}
\forall{x \in E}\quad \sum_{y \in E}\ q_{ x,y} < \infty.
\end{equation}
By DMCP with parameters $p = (c,k)$ and corresponding to a given chemical reaction network RN we mean
 a nonexplosive Markov process on $E$ with deterministic initial distribution $\delta_{c}$ and 
 intensities for $x,y \in E$, $x \neq y,$ equal to
\begin{equation}\label{qxyEq}
 q_{ x,y} = \sum_{l: \  y = x + s_l}\ a_l(k)(x).
\end{equation}
Unfortunately, not for all reaction networks and values of parameters $p$ DMCP exists \cite{KurtzReview2010}.
We give some sufficient conditions in the next Section. 
%Any Markov process is nonexplosive for instance when the space of its possible states $E$ is finite \cite{Norris1998}.
%It turns that $Y$ is a DMCP with parameters $p$ if and only if for all $t \geq 0$ for certain independent Poisson processes
%$(N_i)_{i=1}^L$\cite{Norris1998, billingsley1979} it holds (\cite{Kurtz1986}, Chap. 11)
%\begin{equation}\label{IntEq}
%Y_t =  c + \sum_{l=1}^L s_l N_l(\int_0^t \! a_l(k)(Y_s) \, \mathrm{d}s).
%\end{equation}
%The above statement can serve as an alternative definition of DMCP.
The vector of parameters  $p = (k, c)$ of DMCP corresponding to a given chemical reaction network RN
uniquely determines its distribution, which we denote $\mu_{DMCP}(RN(k), c)$. 
One can model reaction networks using other types of processes whose distributions can also be specified using certain parameters.
For instance for the chemical Langevin equation \cite{Wilkinson2006} such parameter 
 vector $p =(c,k)$ would contain initial species concentrations $c$ instead of species numbers.
One can also consider models incorporating different types of events
during the simulation, whose distribution depends on some additional parameters characterizing
these events.
For instance in the numerical simulations of the stimulation of NF-$\kappa$B regulatory network in \cite{Lipniacki2007} with
tumor necrosis factor-alpha (TNF-$\alpha$) one could consider the dose of TNF-$\alpha$ used for stimulation or the time when the stimulation begins
as such additional parameters.

\section{Constructions of DMCP}\label{secConstr}
%TODO make nice shema
We show two possible constructions of DMCP with constant parameters $p = (c,k)$ corresponding to a given reaction network RN (\ref{NTfun}),
assuming that any such process exists.
The first one is based on GD method introduced in \cite{Gillespie1976} and the second is based on RTC algorithm introduced
in \cite{Rathinam_2010} and is equivalent to random time change representation of Markov processes due to Kurtz \cite{Kurtz1986}.
In both constructions we inductively define the jump chain $(Z_n)_{n \geq 0} $ and holding
times $(S_n)_{n \geq 1}$.  %, since as we remarked along with initial conditions this uniquely determines the process if it is nonexplosive.
Notation $X \sim \text{U}(0,1)$ means
random variable $X$ has distribution  U($0,1$), which in this case means uniform on the interval $[0,1]$. Exp($1$) means exponential distribution with
parameter $1$ \cite{Norris1998}.
Notation $X \sim Y$ means that random variables $X$ and $Y$ have the same distribution.
\begin{constr}[GD construction]\label{GCon}
Let $U_1,U_2,\ldots $ be independent identically distributed (i. i. d) random variables,  $U_1 \sim \U(0,1)$,
 and $T_1,T_2,\ldots$ i. i. d. $T_1 \sim \Exp(1)$.
%Let us enumerate the elements of set $E$: $E = \{i_1, i_2 \ldots\}$.
Let us assume that $Z_i$, $S_i$ are defined for some $i \geq 0$. We set
\[ q: = \sum_{l=1}^L a_l(k)(Z_i). \]
If $q = 0$, then we place
\[ S_{i+1} := \infty,\ Z_{i+1} := Z_{i}.\]
Otherwise, we set
\[S_{i+1} := \frac{T_{i+1}}{q}\]
and for
\[l = \min\{m \in I_{L}: \frac{1}{q}\sum_{n=1}^{m} a_{n}(k)(Z_i) \geq \U_{i}\} \]
we place
\[Z_{i + 1} = x + s_l.\]
\end{constr}
For a given chemical reaction network and value of rate constants $k$ we denote
$A(k)(x) = \{l \in I_L:\ a_l(k)(x) > 0\} = \{l_1, \ldots, l_{L(x)}\}$ - the
set of $L(x)$ indices of reactions which can occur in state $x$.
\begin{constr}[RTC construction]\label{RTCCon}
Let us consider $L$ independent Poisson processes $(N_l)_{l=1}^L$ with unit rates.
The second construction tries to find the solution of the following integral equation
\begin{equation}\label{intEqu}
Y_t =  c + \sum_{l=1}^L s_l N_l(\int_0^t \! a_l(k)(Y_s) \, \mathrm{d}s).
\end{equation}
%$t \ < S_{\infty}$ and
%$\Delta,$ for $t > S_\infty$
%The solution to \ref{inteq} is unique and can be constructed as follows.
Let the $i$-th call of function $N_l.next$ return the $i$-th holding time of the Poisson process $N_l$.
We set for $l\in I_L$
\[ \tau_{0,l} := N_l.next. \]
Let us assume that  $Z_{i}$, $S_{i}$ and $\{\tau_{i ,l} \}_{l \in I_L}$ for some $i \geq 0$ were already defined.
We set
\begin{equation}
 S_{i+1} := \min_{l \in A(k)(Z_i)} \left\{ \frac{\tau_{i,l}}{a_l(k)(Z_i)}  \right\}.
\end{equation}
For a certain $l$ realizing the above minimum, we place
\begin{equation}
 Z_{i+1} := Z_i + s_l,\quad \tau_{i+1,l} = N_l.next.
\end{equation}
For $m \in A(k)(Z_i),\ m \neq l$, we place
\[ \tau_{m,i+1} := \tau_{m,i} - a_l(k)(Z_i) S_{i+1} \]
and for the remaining reaction indices $l \notin A(k)(Z_i)$ we set
\[ \tau_{l,i+1} := \tau_{l,i}. \]
\end{constr}
%TODO Gibson Bruck
%See original works \cite{Gillespie1976, Rathinam_2010} for more detailed of algorithms corresponding to the above constructions.
Above constructions define the process up to explosion time $\chi$ (\ref{expTime}). When $\chi < \infty$ for the jump times defined
by any of the above constructions, then we replace trajectory of the process by a trajectory constantly equal to some
$c_1 \in E$, so that we receive a nonexplosive right-continuous process.
Such constructed process is DMCP only if probability of event $\chi < \infty$, which is always equal for the above constructions, is also equal to $0$.
We then say that reaction network RN and parameters $p$ admit DMCP. An easy criterion for RN and $p$ to admit DMCP is given by the
following Theorem.
%Let us define descrete process $J_k = \sum_{i=0}^{k-1}$
\begin{theorem}\label{thDMCP}
Using notations as in Section \ref{CRN} let us assume that for a reaction network RN and parameters $p = (c,k)$
there exist vector  $m =(m_i)_{i=1}^N \in \R^N$ with positive coordinates, such that for
$L_m = \{l \in I_L: s_{l}m > 0\}$ it holds 
\begin{equation}
A = \sup\{a_l(k)(x): x  \in E,\ l \in L_m\} < \infty,
\end{equation}
where by $s_{l}m$ we mean standard scalar product of vectors.
Then RN and $p$ admit DMCP.
\end{theorem}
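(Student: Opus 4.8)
The plan is to build a linear Lyapunov function $V$ whose expected drift along the process is bounded above by a constant; this is exactly what prevents the jump chain from escaping to infinity in finite time. I would set $V(x) = m\cdot x$ for $x\in E$, the scalar product with the vector $m$ supplied by the hypothesis. Since $E\subseteq\N^N$ and $m$ has strictly positive coordinates, $V\ge 0$ on $E$, and for each $R\ge 0$ the sublevel set $\{x\in E: V(x)\le R\}$ is finite; consequently the total exit rate $q(x):=\sum_{l=1}^L a_l(k)(x)$, which is finite at every single state, is bounded on each such set, say by $q_R<\infty$.

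The key computation is the drift bound. For $x\in E$,
\[\mathcal{L}V(x)=\sum_{l=1}^L a_l(k)(x)\big(V(x+s_l)-V(x)\big)=\sum_{l=1}^L a_l(k)(x)\,(m\cdot s_l).\]
Reactions with $m\cdot s_l\le 0$ contribute non-positive terms, while for $l\in L_m$ we have $m\cdot s_l>0$ and, by hypothesis, $a_l(k)(x)\le A$; hence
\[\mathcal{L}V(x)\ \le\ A\sum_{l\in L_m}(m\cdot s_l)\ =:\ b\ <\ \infty\qquad\text{for every }x\in E.\]
If $L_m=\emptyset$ then $b=0$, so $V$ is non-increasing along the jump chain, the process stays in the finite set $\{V\le V(c)\}$, and nonexplosion is immediate; so assume $b>0$.

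Next I would translate the drift bound into a property of the jump chain $(Z_n)$ and holding times $(S_n)$ of Construction \ref{GCon}. On the event that every $q(Z_i)>0$ (if some $q(Z_i)=0$ the construction sets $S_{i+1}=\infty$ and the process is trivially nonexplosive thereafter), the one-step identity $\E[V(Z_{n+1})-V(Z_n)\mid Z_0,\dots,Z_n]=\mathcal{L}V(Z_n)/q(Z_n)\le b/q(Z_n)$ shows that $W_n:=V(Z_n)-b\sum_{i=0}^{n-1}q(Z_i)^{-1}$ is a supermartingale for the natural filtration of the jump chain. Stopping at $\rho_R:=\inf\{n:\sum_{i=0}^{n}q(Z_i)^{-1}>R\}$, the process $W_{n\wedge\rho_R}$ is bounded below by $-bR$, hence converges almost surely; since the nondecreasing, bounded partial sums $\sum_{i=0}^{(n\wedge\rho_R)-1}q(Z_i)^{-1}$ also converge, $V(Z_n)$ converges to a finite limit on $\{\rho_R=\infty\}=\{\sum_{i\ge 0}q(Z_i)^{-1}\le R\}$. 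On the other hand, conditionally on the jump chain the holding times are independent with $S_{i+1}\sim\Exp(q(Z_i))$, so by the classical criterion (see \cite{Norris1998}) the explosion time $\chi$ is finite almost surely precisely on $\{\sum_{i\ge 0}q(Z_i)^{-1}<\infty\}$; on that event $q(Z_n)\to\infty$, and since $q\le q_R$ on $\{V\le R\}$ this forces $V(Z_n)\to\infty$.

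Putting the two halves together, $\{\chi<\infty\}=\bigcup_{R\in\N}\{\sum_{i\ge 0}q(Z_i)^{-1}\le R\}$ up to a null set, and on each set in this union $V(Z_n)$ would have to converge to a finite limit and simultaneously tend to infinity; hence each such set is null and $\PR(\chi<\infty)=0$, so RN and $p$ admit DMCP. The step I expect to require the most care is this last passage from the bounded drift to nonexplosion — making the supermartingale localisation argument precise and correctly invoking the fact that $\chi<\infty$ almost surely exactly on $\{\sum_i q(Z_i)^{-1}<\infty\}$. The drift estimate itself is routine once one observes that the hypothesis is tailored precisely to the linear choice $V(x)=m\cdot x$, splitting the reactions into the bounded-propensity ones in $L_m$ that can increase $V$ and the rest, which never do.
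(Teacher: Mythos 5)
Your proof is correct, but it follows a genuinely different route from the paper's. The paper reduces nonexplosion to showing $\PR(\chi<t)=0$ for each fixed $t$, invokes the argument of Theorem 2.7.1 in \cite{Norris1998} to say that explosion before $t$ with positive probability would force the process to visit infinitely many states before $t$, and then kills this possibility by a direct stochastic domination read off from the random time change equation (\ref{intEqu}): only reactions in $L_m$ can increase the mass $mY_t$, their internal clocks run at rate at most $A$, so $mY_t$ is bounded by a constant plus a Poisson process of rate $LtA$ and hence $Y$ stays in a finite set on $[0,t]$. You instead run a Foster--Lyapunov argument on the embedded jump chain: the generator drift of $V(x)=m\cdot x$ is bounded by $b=A\sum_{l\in L_m}m\cdot s_l$ because the negative-drift reactions can be discarded and the positive-drift ones have propensity at most $A$, which makes $V(Z_n)-b\sum_{i<n}q(Z_i)^{-1}$ a supermartingale; combining its localized convergence with the classical criterion $\{\chi<\infty\}=\{\sum_i q(Z_i)^{-1}<\infty\}$ and the finiteness of the sublevel sets of $V$ yields the contradiction. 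Both arguments exploit the hypothesis in the same way (split reactions by the sign of $m\cdot s_l$, bound the increasing ones), but the paper's version is shorter and tied to the RTC representation, giving an explicit Poisson domination of the mass process, while yours is representation-independent, works directly from the jump-chain/holding-time description of Construction \ref{GCon}, and would extend with no change to weaker drift hypotheses such as $\mathcal{L}V\le c_0V+b$; the price is the extra care you correctly flag around integrability and localization of the supermartingale (which your stopping at $\rho_R$ does handle, since the stopped compensator is bounded by $R$ and $V\ge 0$).
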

Vector $m$ can be often taken to be vector of masses of each species, hence the notation.
\begin{proof}
 From continuity from below (see \cite{Durrett} Chap. 1 Ex. 1.1)
 it is sufficient to show that $\PR(\chi < t) = 0$ for every $t>0$. Let us consider
a helper process $Y_t$, which is created by running RTC construction with initial state $c$ and after the explosion
setting the state of the process to some vector $c_1$. Similarly as in the proof of Theorem 2.7.1 in \cite{Norris1998} one shows that
if $\PR(\chi < t) >0$, then $Y_t$ should take infinite number of values before time $t$ with nonzero probability. But
 since for every $M > 0$ the set $\{x \in \N^N: mx < M\}$ is finite
then also process $M_t$ defined as
\begin{equation}
M_t := mY_t
\end{equation}
should take infinitely many values before time $t$ with nonzero probability.
Denoting
\begin{equation}
s_m = \max\{s_lm :l \in I_L  \}
\end{equation}
we have from (\ref{intEqu})
\begin{equation}\label{ineqPois}
M_t = m Y_t \leq m(c+c_1) + \sum_{l=1}^L s_m N_l(tA) = m(c+c_1) + s_m N_{LtA},
\end{equation}
where $N_{LtA}$ is certain Poisson process with rate $LtA$ \cite{Norris1998}.
Since Poisson processes take finite number of values in finite time with probability $1$, the Theorem is proved.
\end{proof}
% Note that with assumptions as in the above Theorem variable $Y_{t,i}$ equal to the number of particles of some $i$-th species at time $t$
%  has all moments. This is the consequence of the fact that $Y_{t,i} \leq \frac{mY_t}{m_i}$, expression (\ref{ineqPois}) and
% the fact that $N_{LtA}$ has Poisson distribution with parameter $LtA$ and thus has all moments.
From now on, we consider the step of rejecting trajectories for which $\chi < \infty$ and replacing it by some arbitrary constant
from $E$ to be integral part of the above constructions.  
% \PR(\chi < t)
% it is sufficient to show that $\PR(\chi < \infty) > 0$
%
% If $\PR(\chi < \infty) > 0$, then from Theorem 2.7.1 in \cite{Norris1998}
% $Y_t$ should visit inifitely many states before some time $T>0$ with probability greater than $0$. 
%From monotonicity from below (see \ref{Durrett} Chap. 1, Exercise 1.1)
%it is sufficient to show that for every
%$t \geq 0$ $\PR(\chi < t) = 1$. If \chi <
%\end{proof}
Note that in all constructions of processes used for computer simulations one
uses some collection of random variables $R$ to generate the random trajectories of the process.
For example for the first construction of $DMCP$ we have $R_1 = (U_i, T_i)_{i \geq 0}$, while for the second one $R_2 =  (N_i)_{i=1}^L$.
We call $R$ artificial noise variable, since it is a stochastic process which represents no physical quantity and
may be even different for different constructions of the same model.
% %  and
% may be even completely different for different constructions of the same model, e. g. it can take values in different sets.
%It it should not be mistaken with any noise which may have some physical meaning \cite{Ioss2008}.
Using certain construction of a process one can define function $h$ for which
\begin{equation}\label{DMCPFun}
h(p,R)
\end{equation}
is a stochastic process created by this construction with parameters $p$ and
 the artificial noise $R$. For all values of parameters processes given by constructions $1$ and $2$
have the same distributions, which can be expressed using their respective functions $(h_i)_{i=1}^2$ and artificial noise terms $(R_i)_{i=1}^2$ as
\begin{equation}\label{equivMarkov}
h_1(p,R_1) \sim h_2(p, R_2) \sim \mu_{DMCP}(RN(k),c).
\end{equation}
There are also other constructions equivalent to GD method, such as Gillespie first reaction method \cite{Gillespie1976},
Gibson and Bruck's next reaction method \cite{Gibson2000}, for which (\ref{equivMarkov}) also holds, but for a different function $h$
or artificial noise variable.

%TODO say about G. Bruck
%TODO powolaj sie!
% %of $Y$.
%We prove that such constructed processes conform to definition \ref{DSRMdef} in appendix \ref{appb}.
%For an intuitive description of the computer algorithms one can use to simulate the above constructions
%the reader is referred to the original works\cite{Rathinam_2010}, \ref{Gillespie1977}. 

%TODO why - because all observables do depend on conditional expectations
%TODO add tau-leaping
%TODO epistemic versus stochastic
%this form arbitrary ,however noise term can change, but parameters have usually some meaning, as we will showed
%counting sensitivities with respect to the noise term makes no sense
%This form allows us to.
%TODO
%TODOSuch situation can occur when we would like
%Often
%If for
%Let us finally remark the assumption of independence of the parameters from the noise term, although natural, can be removed.
%However, it will allow us to derive more efficient estimators for variance based sensitivity indices related parameters $P$ in section \ref{MC}.
%Often the only random parameter is the initial state of the process, or all parameters are deterministic.
%Discuss how they are created and that it would be sufficient to take stochastic and epistemic
%TODO:the choice what is parameters and what is noise is arbitrary
%TODO Krzykacz makes sth stupid -show example.
%Often stochastic can be made independent of epistemic

\chapter{Models with random parameters}\label{randPar}
\section{Random parameters}
There are many situations when we may want to treat the parameters of models as random
variables $P=(P_1,\ldots,P_N)$, rather than constants.
%For the case of DSRM we have $P= (C, K)$, where $C$ denotes random initial numbers of particles
%and $K$ kinetic parameters and
% When describing potential applications of variance-based sensitivity indices
% %\ref{varBased}
% it will be convenient for us to distinguish different types of parameters with respect to why they are simulated as random.
These variables can for instance represent uncertain quantities. One
often distinguishes 2 types of such variables (see \cite{deRocq_2008} and  \cite{Helton2003}
sec. 7.1 for more detailed descriptions and reviews of history of this distinction).
\begin{itemize}
\item Stochastic or aleatory: they are changeable in the model, like initial numbers of particles
of a species in the equilibrium distribution of a stochastic model. %\cite{Lipniacki2007} %or deterministic model
%but sampled at a random moment of time. %TODOTime of stimulation of a signaling pathway ?
The uncertainty associated with distribution of this variable, measured for instance by its variance,
is also known as irreducible \cite{Helton2003}, since it cannot be reduced
by gaining further knowledge about the model. %Their distribution is a property of the modelled system.
%or aleatory \cite{Krzykacz2006}.
%The noise term $R$ discussed in section \ref{constructions} can also be considered to be of this
%type, although we will not call it parameter of the model.
%Random variables used for generating the random trajectories of a process for given parameters, as
\item Epistemic: they are also known as state of knowledge or subjective \cite{Helton2003},
 since their distribution represents modeller's best judgement
about their possible values. Reaction rates can often be considered to be of this kind.
The judgement can be based on different values available in the literature \cite{Juillet2009}
or on the fact that model with parameters from given range well describes certain experimental data \cite{Schaber2009}.
%Apart from epistemic \cite{Krzykacz2006, Saltelli2008},
Uncertainty associated with distribution of these variables has been called %state of knowledge, subjective or
reducible \cite{Helton2003}, since it can be reduced if we gain more knowledge about the model, e. g. we can get to know
the values of rate constants by measuring them.
\end{itemize}
%We can always abstract from the meaning of the variables, like parameters.
% For instance when e solving some purely mathematical problems involving them.
% It should be noted that the classification of the same variable may be different depending on what we model, for
% instance kinetic constants which are fixed in every cell, but vary across cell population will be epistemic if we are modelling only one
% randomly chosen cell but stochastic if the model involves choosing random cell from the population in every simulation step.
% which are uncertain due to biological variability may be considered epistemic if we want to model a certain
%organism from a population with known distribution or stochastic if we want to model the whole population.
%We will use the common term deterministic variables for the last two types of random variables to distinguish them from the
%remaining stochastic type. %In the stochastic chemical kinetics it is
%TODOLipnia distribution
%TODO parameters of distributions
%This likely does not exhaust all possibilities.
Different parameters may need to be considered not independent for a given model to be realistic.
For many types of DM numbers of particles of different species in the equilibrium distribution are not independent \cite{Jahnke2007}.
Another example are kinetic rates in chemical reaction networks containing cycles of reversible reactions which
are modelled by stochastic or deterministic mass action kinetics. One often requires that the
 product of reaction rates in one direction of such cycle is equal to the product of rates in the reverse direction.
This is known as detailed balance or Wegsheider's \cite{Ederer_Gilles_2007} condition and
can be intuitively explained by \mbox{time-reversal} symmetry of chemical systems
containing such reaction cycles and being in thermodynamic equilibrium \cite{Onsager_1931}. However, for the purpose of efficient computation
of variance-based sensitivity indices and for some interpretations of these indices to hold we need the parameters
considered for sensitivity analysis to be independent.
For kinetic rates this can be achieved for instance by treating some of them as independent and
using Wegscheider's conditions to compute the remaining ones \cite{Yang_Bruno_Hlavacek_Pearson_2006, Cristaldi_2011} or by changing the
parametrization so that in the new one the thermodynamic constraints are automatically observed \cite{Plested_2004, Zhang2009, Ederer_Gilles_2007}.
%We will also use different variables which will not be model parameters, as
%TODO skumaj feasible
%resulting , however, often an adequate change of variables allows to treat the new
% parameters as independent \cite{Schaber2009, Zhang2009}.
%TODOconsider applications of intervention and: For instance stochastic numbers of particles of different species in the equilibrium distribution of a stochastic model will in general be
%mutually dependent. Such distribution is often considered as a reasonable initial condition for instance
% % when simulating the stimulation of a signaling pathway\cite{Lipniacki2007}.
%However, one can approximate  such distribution by starting with certain rate constants
%and reasonable initial particle numbers or concentrations $C_0$ specifying ,
%simulating the system till it termalizes and taking concentrations from some time interval. In this way
%Note that
%will in general be mutually dependent \cite{Jahnke2007}.
%If we would like initial consentrations $C$ to be denote the equilibrium destribution of a DSRM w ith given

\section{Stochastic model with random parameters}\label{genParSec}
Below we define DM with random parameters in a way typical of Bayesian statistics
(see \cite{borovkov1999mathematical} Section 20 and
 Definition \ref{defMu} of conditional distribution in Appendix \ref{appMath}).
%Informally, we demand
%that the distribution of the process given the value of the parameters $P = (c,k)$ is $\mu_{DMCP}(RN(k),\ c)$.
%To be more precise we need to use the notion of conditional distribution (see definition \ref{defMu} in appendix \ref{appMath}).
%the $\sigma$-field of cylindrical sets $\mathcal{B}(E^T)$.
% and the proof that the definition below characterizes the distribution of $Y$.
\begin{defin}\label{DMdef}
We say that the pair $M = (Y,P)$ consisting of a process $Y$ and random vector $P$ is DM with distribution of parameters $\mu_P $ and corresponding
to chemical reaction network $RN$, if
$P \sim \mu_P$, $Y$ is a right-continuous nonexplosive process and
$\mu_{DMCP}(RN(k), c)$ is conditional distribution of $Y$ given $P=(c,k)$.
$Y$ is called the process and $P$ the parameters of $M$.
%\begin{eqnarray}\label{condi}
%\forall A \in \mathcal{B}(E^T),\ \omega \in \Omega\  \PR_{(Y|P)}(A, \omega) = \Markov(P(\omega))(A).
%\end{eqnarray}
\end{defin}
%TODO distribution defined
One can construct DM with distribution of parameters $\mu_P$ and corresponding to a reaction
network RN  by setting, for some $P = (C, K) \sim \mu_P$ and independent of artificial noise variable $R$ used by one of constructions
of DMCP from Section \ref{secConstr}  $c := C$ and $k:= K$ at the beginning of this construction and then proceeding with it, given
that such RN and $c,k$ always admit DMCP.
Using function $h$ (\ref{DMCPFun}) given by the construction of DMCP the process of DM we just defined can be written as
\begin{eqnarray}\label{formP}
Y = h(P,R).
\end{eqnarray}
The fact that $Y$ conforms to definition of the process of DM is consequence of (\ref{equivMarkov}) and Theorem
\ref{aveSecFin} in Appendix \ref{appMath}.
%There are many ways one can simulate such constructed process on a computer, we discuss some possibilities and our choices
% in section \ref{numExp}.
%The reason why we describe construction of a process instead of describing how to simulate it numerically is to be able to abstract from
%But it is not the only possible construction, for instance as discussed in \
%and $R$ is the noise term independent of $P$. As we discussed in section \ref{randPar}, sometimes the parameters should be modeled as dependent.
%However, often if $P$ has form $P = (P_J, P_{\sim J})$, where $P_J$ and $P_{\sim J}$ are dependent one can
%find an easy to compute function $h$ and distribution $\mu_U$, such that for any $U \sim \mu_{U}$ it holds
%$P' = (P'_J, h(P'_{\sim J}),U) \sim \mu_P $. This allows us to consider, instead of $f(P,R)$ the function
%$f'(P_J,(U,R))= f(P, h(P,U))$, and thus using analogous estimators to compute sensitivity indices rela
%Samples from such distribution can be generated by starting with certain rate constants and initial reasonable kinetic rates,
% simulating the system till it termalizes and taking concentrations from some time interval.
Analogously to what we did in case of process of DM we can define random parameters versions $Y$ of other types of stochastic
processes with constant parameters and provide their constructions in form of a function of independent parameters $P$ and artificial
noise $R$.  %TODO
% \begin{eqnarray}\label{formP}
% Y = h(P, R).
% \end{eqnarray}
Similarly as in Definition \ref{DMdef} of DM  we consider pairs $(Y,P)$ with the same distribution of
% just different representations of the same model.
$P$ and conditional distribution of $Y$ given $P$ to be just different representations of the same model.
%$P$ and conditional distribution of the process $Y$ given $P$ or equivalently with the same joint distribution
%to be just different representations of the same model.
%while the indices defined with the help of the the noise term will generally have different values for different constructions.
 %TODO equivalent, variance
%in an abstract way similar to \ref{DSRMdef}
%to the random parameters case made by choosing appropriate
%random parameters at the begining of their construction allow to write them in form \ref{form}.
%In particular this holds for all constructions used for simulating biochemical reaction networks, like Euler-Maruyama approximation of solutions of the
%chemical Langevin equation\cite{Wilkinson2006}.

\section{Parameters of conditional distribution}\label{secCondDistr}
%TODO polynomials of conditional moments and polynomials of moments
By observables of a process $Y$ we mean its functions $g(Y)$ which are \mbox{real-valued} random variables or
random vectors.
%We will assume that $g(Y)$ takes values in $\R$. %and is \mbox{square-integrable} (see appendix \ref{appb} for the definition).
An observable could be for instance the number of particles of certain species at some moment of time or its maximum number over some time period.
%TODO new parameters like time dependence
%at the end of the section discuss observables with values in $\R^n$ in section \ref{generalSec}.
In contrast to deterministic models with random parameters, in stochastic ones one cannot speak of a single value of the output given the parameters,
but rather of its conditional distribution given the parameters and parameters of this distribution
 like conditional expectation.
%are functions of the parameters of the model.
Conditional expectation of a random variable $Z \in L^1(\PR)$ (see Appendix \ref{appMath} for definition of $L^p(\PR)$
for certain probability measure $\PR$ and properties of
conditional expectation) given another variable $X$, denoted by
%\begin{equation}\label{aveDef}
$\E(Z|X)$,
%\end{equation}
is formalization of the notion of the mean of $Z$ given $X$ and is a certain function of $X$.
Let us now denote $L^p_n(\PR)$ or when $\PR$ is implicitly assumed shortly $L^p_n$, to be the space of random vectors
$X= (X_i)_{i=1}^n$, such that $X_i \in L^p(\PR)$, for $i \in {I_n}$.
%When $\PR$ is assumed  
For $n$ bins given by numbers  $(-\infty = a_1 < a_2 < \ldots <a_{n+1} = \infty)$ histogram function $\hist$ is defined as
\begin{equation}
\hist(x): = (\I_{[a_{i},a_{i+1})}(x))_{i=1}^{n}.
\end{equation}
An example of vector-valued observable is a (single-sample) histogram
 $\hist(Z)$ corresponding to a real-valued random variable $Z$.
Note that $\hist(Z) \in L^p_n$ for every $p$ natural positive.
For random vectors $Z = (Z_i)_{i=1}^n \in L^1_n(\PR)$ and $X$ we define conditional expectation
of $Z$ given $X$ as
\begin{equation}
\E(Z|X) = (\E(Z_i|X))_{i=1}^n.
\end{equation}
Conditional histogram of $Z$ given some random variable $X$ is defined
as $\E(\hist(Z)|X)$ and mean histogram as $\E(\hist(Z))$.
%In particular, the average observable $g(Y)$ of a DM $Y$ with parameters $P$ and corresponding to chemical reaction network $NT$
% given the parameters can be defined as
%\begin{eqnarray}\label{aveDef}
%\tilde{f}(P) :=  \E(g(Y)|P)% = \int g(y)DMCP(NT(K), C)(dy),
%\end{eqnarray}
%where the last equality is an application of theorem \ref{condExpForm} from appendix \ref{appb}.
%We can see that the distribution of $\tilde{f}(P)$ is characterized by the definition \ref{DMdef} of the process and is construction independent.
% We will be interested in cases where $Z$ are $X$ some random variables whose joint distribution is specified by definition \ref{DMdef}
%of DM $(Y,P)$. For instance $Z$ can be an observable of $Y$ and $X$ can be some sub vector of $P$.
%   For constructions of stochastic processes used in computer simulations which are of form (\ref{formP}) observable of the
%  process can be written in a similar functional form
%  \begin{eqnarray}\label{obsForm}
%  f(P,R) := g(h(P,R)),
%  \end{eqnarray}
% %which we will from now on assume.
% % for which we get a particularly intuitive expression for the average observable given the parameters
%  \begin{eqnarray}\label{aveObs}
%  \tilde{f}(P) := \E(f(P,R)|P) = \E(f(p,R))_{p=P},
%  \end{eqnarray}
% where in the second equality we used theorem \ref{indepCond} from appendix \ref{appHilb}.
%Thus it is created by averaging over the artificial noise for the given value taken by random parameter $P$.
%We will need another fact about conditional expectations.
For a vector $X = (X_1,\ldots, X_N)$ and any $J \subset I_N$ let
%\begin{eqnarray}\label{subvec}
$X_J = (X_i)_{i \in J}.$
%\end{eqnarray}
It is a well-known fact that for $Z \in L^1_n(\PR)$ for any $n$ natural positive and $J \subset K \subset I_n$ we have the following
iterated expectation property \cite{Durrett}
\begin{eqnarray}\label{doubleCond}
\E(\E(Z|X_K)|X_J) = \E(Z|X_J),
\end{eqnarray}
where by $\E(Z|X_{\emptyset})$ we mean $\E(Z)$.
%\ref{appHilb}
% For a real-valued random variable $Z \in L^k(\PR)$ we define $k$-th moment of $Z$ as
% $\E(Z^k)$ and $k$-th central moment as $\E(Z - \E Z)^k$, the second central moment
%  being known as variance. These are special cases
% of polynomials of certain $k$ first moments of $Z$ $Q_k((\E Z^i)_{i=1}^{k})$. For instance variance
% \begin{equation}\label{varDef}
% \Var(Z) = \E(Z - \E Z)^2 = \E Z^2 - \E^2 Z
% \end{equation}
%  is a polynomial of first two moments of $Z$. Polynomials of moments (including moments themselves)
%  for a distribution $\mu$ are defined as corresponding polynomials for any random variable $X \sim \mu$.
% If $X$ is another random variable we can define polynomial of sonditional moments $CQ_k(X)$ of
% $Z$ conditional on $X$ corresponding to $Q_k$ as the following function of $X$
% \begin{equation}
% CQ_k(X) = Q_k((\E (Z^i|X))_{i=1}^{k}).
% \end{equation}
% In particular the polynomial of conditional moments corresponding to polynomial yielding variance equals to
%
% and we will call it conditional variance of $Z$ given $X$.
% \begin{equation}
% %\begin{split}
% \Var(\E(Z|X)): = \E (Z^2|X) - \E^2(Z|X) = \E((Z - \E(Z|X))^2|X)
%\end{split}
%\end{equation}
%which we will from now on assume.
% for which we get a particularly intuitive expression for the average observable given the parameters
For constructions of stochastic processes used in computer simulations, which are of form (\ref{formP}) observable of the
process can also be written as a function of parameters $P$ and the noise term $R$
\begin{eqnarray}\label{obsForm}
f(P,R) := g(h(P,R)).
\end{eqnarray}
% In general when we have $Z = f(P,R)$ for $P$ and $R$ independent and $Z \in L^k(\PR)$,
% % which is for instance the case for observables of stochastic process
% %of processes of DM with parameters $P$
% the polynomial of conditional moments of $Z$ given $X$ takes an inuitive form
% of its corresponding polynomial of $f(p,R)$ evaluated at the random value $p$ of $P$
% \begin{equation} \label{CQKP}
% %\begin{split}
% CQ_k(P) = Q_k((\E(h^i(P,R)|P))_{i=1}^k) = Q_k((\E(h^i(p,R)))_{i=1}^k)_{p=P},
% %\end{split}
% \end{equation}
% where in the last equality we used
From Theorem \ref{indepCond} in Appendix \ref{appHilb} we receive that conditional expectation of such observable can be written in the following 
intuitive form
%\ref{indepCond}
\begin{eqnarray}\label{aveObs}
\tilde{f}(P) := \E(f(P,R)|P) = (\E(f(p,R)))_{p=P}.
\end{eqnarray}

\section{\label{appHilb}Hilbert spaces}
%Unless stated otherwise the theorems we give below are reformulations or easy consequences of Hilbert space theory presented in
%chap. 4 in ref. \cite{rudin1970}, where one should look for detailed proofs.
We now introduce some definitions and facts from Hilbert space theory, which are used in the following sections
(see references  \cite{rudin1970}  and \cite{KolmogorovFomin60} for proofs and more details).
Hilbert space is a linear space $H$, for which there exists metric $d$ induced by a norm $||\cdot||$, which is
induced by certain scalar product $(,)$ % (see for instance )
\begin{equation}
d(x,y) := ||x-y|| := \sqrt{(x -y,x-y)},
\end{equation}
 such that $(H,d)$ is complete metric space.
% The Hilbert spaces we use most in this work
Examples of Hilbert spaces are $L^2(\mu)$ for different measures $\mu$, with scalar product given by
\begin{equation}
(f,g) = \int \! fg \, \mathrm{d}\mu.
\end{equation}
%Most of the theorems we give below are easy generalizations of theorems one can finf
%Below we give a definition of direct sum in a Hilbert space.
For linear subspaces $W_1,\ldots,W_n$ of certain linear space their sum is denoted and defined as follows
\begin{equation}
\sum_{i=1}^n W_i: = \{\sum_{i=1}^{n}w_i:\ \forall i \in I_n\quad w_i \in W_i\}.
\end{equation}
\begin{defin}\label{defHilb}
 Hilbert space $H$ is direct sum of its linear subspaces $H_1,\ldots, H_n$, which we denote
\begin{equation}
H = \bigoplus_{i=1}^n H_i = H_1 \oplus \ldots \oplus H_n
\end{equation}
 if the following conditions are fulfilled.
\begin{enumerate}%[a]
  \item Subspaces $H_1, \ldots, H_n$  are closed.
  \item %For every element $h \in H$ there exist $v_i \in H_i$ for $i \in\ I_n$, such that
\begin{equation}
H = \sum_{i=1}^n H_i.
\end{equation}
  \item These subspaces are mutually orthogonal, that is for every $i,j \in I_n$, $i \neq j$ for every $v_i \in H_i$ and $v_j \in H_j$
\begin{equation}
 (v_i,v_j) = 0.
\end{equation}
\end{enumerate}
\end{defin}
%From point 3 it follows that for $v \in H$
It turns out that elements $v_i \in H_i$ for $i \in I_n$ such that
\begin{equation}
v = \sum_{i=1}^n v_i
\end{equation}
are uniquely determined.
% We further have
% \begin{equation}\label{sqrsV}
% ||v||^2 = \sum_{i=1}^n ||v_i||^2
% \end{equation}
Since for every $J \subset S_n$ the subspace
\begin{equation}
H_J =  \sum_{i \in J} H_i
\end{equation}
can be proved to be closed, thus it is Hilbert space for which we further have
$H_J = \bigoplus_{i\in J} H_i$.
For any partition $\{J\cup K\}$ of $I_n$ it holds
\begin{equation}\label{sumDiv}
H = H_J\oplus H_K.
\end{equation}
We define direct product of $n$ Hilbert spaces $(H_i)_{i=1}^n$ with respective scalar products
$((\cdot,\cdot)_i)_{i=1}^n$ to be the the Cartesian product space $H_1\times\ldots \times H_n$ with scalar product defined as
\begin{equation}
(v,w) = \sum_{i=1}^n (v_i,w_i)_i.
\end{equation}
It can easily be proved to be complete, thus it is Hilbert space.
If $M$ is any closed subspace of $H$ then $M^{\perp} = \{v \in H: \forall w \in M\quad v \perp w\}$
is the unique subspace of $H$ for which it holds
\begin{equation}\label{mDirect}
H = M \oplus M^{\perp}.
\end{equation}
For every $v \in H$ the uniqueness of decomposition
\begin{equation}
 v = v_M + v_{M^{\perp}},
\end{equation}
where $v_M \in M$ and $v_{M^{\perp}}\in M^{\perp}$, allows to define a linear function $P_M$ from $H$ onto $M$,
such that $P_M(v) = v_M$.  $P_M$ is called orthogonal projection of $H$ onto $M$. $v_M$
is the unique element of $M$ minimizing distance from $v$, that is
\begin{equation}\label{infProp}
d(v, v_M) = \inf_{w \in M} d(v,  w)
\end{equation}
and it holds
\begin{equation}\label{distP}
d(v, v_M)^2 = ||v||^2 - ||v_M||^2.
\end{equation}

\section{Conditional expectation as orthogonal projection and generalizations of variance}\label{secOrthog}
$L^2(\PR)$ is Hilbert space  with scalar product $(,)$ defined as
\begin{equation}\label{1scalar}
(X,Y): = \E(XY).
\end{equation}
We denote the norm it induces $||\cdot ||$ and the metric $d$.
For some $n$ natural positive let $<,>_n$ be any scalar product on $\R^n$. Let $(a_{ij})_{i,j \in I_n}$
be real numbers such that for every $x,y \in \R^n$ we have
\begin{equation}
<x,y>_n = \sum_{i,j\in I_n} a_{ij} x_iy_j.
\end{equation}
For instance for the standard scalar product we have $a_{ij} = \delta_{ij}$, where $\delta_{ij}$ is Kronecker delta.
We denote the norm induced by $<,>_n$ as $|\cdot|_n$ and the distance it induces $\dist_n$.
We define Hilbert space on $L^2_n$ for any $n$ natural positive by equipping it with scalar product $(,)_n$ defined
for $X,Y \in L^2_n$ as
\begin{equation}\label{scalarFun}
(X,Y)_n: = \E(<X,Y>_n) = \sum_{i,j\in I_n}a_{ij}(X_i,Y_j). 
\end{equation}
%Note that $L^2_n$
We denote the norm it induces by $||\cdot ||_n$ and the distance $d_n$.
We say that 2 norms $|\cdot|_1, |\cdot|_2$ on the linear space $A$ are equivalent, if there exist $\alpha$ and $\beta$ real positive such that
\begin{equation}
\forall x \in A \quad |x|_1 \leq \alpha|x|_2 \leq \beta|x|_1.
\end{equation}
The completeness of $L^2_n$ with norm induced by any above defined scalar product $(,)_n$ is a consequence of
the fact that  for $<,>_n$ equal to standard scalar product the defined space becomes an $n$-fold direct sum of $L^2$,
 which is complete (see Section \ref{appHilb})
and the well-known fact that all norms in finite dimensional spaces like $\R^n$ are equivalent and 
from (\ref{scalarFun}) so are different $||\cdot ||_n$.
% Our construction of Hilbert space $L^2_m$ with $(,)_m$ is analogical to that of
%  a tensor product of Hilbert spaces (\cite{teschl2009mathematical} sec. 1.4) $R^m$ with $<,>_m$ and $L^2(\PR)$ with $(,)$
% and we have that both these spaces are isomorphic
%  \begin{equation}\label{scalarFun}
%  L^2_m = R^m \otimes L^2(\PR).
%  \end{equation}

%Note that for $<,>_n$ equal to standard scalar product and $n = 1$ we receive scalar product (\ref{1scalar}).
Let us denote $L^2_{n,X}$ ($L^2_{X}$) to be the subspace of $L^2_n$ ($L^2(\PR)$) consisting of all its elements being
certain functions of random variable $X$. This is a closed subspace.
%See appendix \ref{appHilb} for the definition of orthogonal projection used in the following theorem.% (\ref{bestappr}).
\begin{theorem}\label{pytag}
If $Z \in (L^2_n)$ and $X$ is a random variable, then $\E(Z|X)$ is orthogonal projection of $Z$ onto $L^2_{n,X}$.
\end{theorem}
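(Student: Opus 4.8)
The plan is to reduce the vector-valued statement to the well-known scalar case ($n=1$), where $\E(Z|X)$ is the orthogonal projection of $Z \in L^2(\PR)$ onto $L^2_X$. Recall that the scalar fact is characterized by two properties: (i) $\E(Z|X) \in L^2_X$, and (ii) $Z - \E(Z|X) \perp L^2_X$, i.e. $\E\big((Z - \E(Z|X))W\big) = 0$ for every $W \in L^2_X$. These are standard facts about conditional expectation, available from the references cited in Appendix \ref{appMath}.

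\emph{Step 1: membership.} Given $Z = (Z_i)_{i=1}^n \in L^2_n$, each $Z_i \in L^2(\PR)$, so each $\E(Z_i|X)$ lies in $L^2_X$, hence $\E(Z|X) = (\E(Z_i|X))_{i=1}^n$ lies in $L^2_{n,X}$ by the very definition of that subspace. This uses nothing beyond the scalar case applied coordinatewise.

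\emph{Step 2: orthogonality of the residual.} I would show $Z - \E(Z|X) \perp L^2_{n,X}$ with respect to the scalar product $(,)_n$ from (\ref{scalarFun}). An arbitrary element of $L^2_{n,X}$ is a random vector $W = (W_j)_{j=1}^n$ with each $W_j \in L^2_X$. Then
\begin{equation*}
(Z - \E(Z|X), W)_n = \sum_{i,j \in I_n} a_{ij}\, \E\big((Z_i - \E(Z_i|X))\, W_j\big).
\end{equation*}
For each fixed pair $(i,j)$, $W_j \in L^2_X$, so by the scalar orthogonality property (ii) each term $\E\big((Z_i - \E(Z_i|X))W_j\big)$ vanishes. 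Hence the whole sum is $0$, giving $Z - \E(Z|X) \perp L^2_{n,X}$.

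\emph{Step 3: conclude.} Since $L^2_{n,X}$ is a closed subspace of the Hilbert space $L^2_n$ (stated in the text just before the theorem), the decomposition $Z = \E(Z|X) + (Z - \E(Z|X))$ with the first summand in $L^2_{n,X}$ and the second in its orthogonal complement is, by (\ref{mDirect}) and the uniqueness of that decomposition, exactly the decomposition defining $P_{L^2_{n,X}}$. Therefore $\E(Z|X) = P_{L^2_{n,X}}(Z)$, as claimed.

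\textbf{Main obstacle.} There is no deep difficulty here; the only point requiring a little care is Step 2, namely checking that the sum over $i,j$ with the coefficients $a_{ij}$ of a general (possibly non-diagonal) scalar product really does collapse to zero. This works precisely because the scalar orthogonality holds termwise — for \emph{each} coordinate $Z_i$ of the residual against \emph{each} coordinate $W_j$ of the test vector, independently of the mixing coefficients $a_{ij}$. One should also note that all the $(,)_n$ norms are equivalent (as recorded in Section \ref{secOrthog}), so closedness of $L^2_{n,X}$ and the projection statement do not depend on which scalar product $<,>_n$ on $\R^n$ was chosen.
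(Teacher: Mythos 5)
Your proposal is correct and follows essentially the same route as the paper's own proof: verify membership of $\E(Z|X)$ in $L^2_{n,X}$, then expand $(Z-\E(Z|X),W)_n$ via the coefficients $a_{ij}$ and kill each term using the scalar fact that $Z_i - \E(Z_i|X) \perp L^2_X$. The extra remarks on norm equivalence and uniqueness of the orthogonal decomposition are just careful bookkeeping of what the paper leaves implicit.
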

% or equivalently
%random variables measurable with respect to $\sigma(X)$.
\begin{proof}
We have $\E(Z|X) \in L^2_{n,X}$. Furthermore, for any $f(X) = (f_j(X))_{j=1}^n \in L^2_{n,X}$ we have
\begin{equation}
%\begin{split}
(Z - \E(Z|X),f(X))_n = \sum_{i,j\in I_n} a_{ij}(Z_i - \E(Z_i|X),f_j(X)) = 0,
%\end{split}
\end{equation}
since for all $i \in I_n$ it holds $Z_i - \E(Z_i|X) \in (L^2_{X})^{\perp}$,
because $\E(\cdot|X)$ is orthogonal projection from $L^2(\PR)$ onto $L^2_{X}$ (see \cite{Durrett}, Sec. 4.1 Theorem 1.4).
 We thus have $Z - \E(Z|X) \in (L^2_{n,X})^{\perp}$.
\end{proof}
As othogonal projection, $\E(Z|X)$ is the best approximation of $Z$ among all functions of $X$ in $L^2_n$ and the error of this approximation fulfills
\begin{equation}\label{condError}
d_n(Z, \E(Z|X))^2 =  ||Z||_n^2 - ||\E(Z|X)||_n^2.
\end{equation}
For random variable $Z \in L^2(\PR)$ by its variance we mean
 \begin{equation}\label{varDef}
 \Var(Z) = \E(Z - \E Z)^2 = \E (Z^2) - \E^2( Z),
 \end{equation}
%  is a polynomial of first two moments of $Z$. Polynomials of moments (including moments themselves)
%  for a distribution $\mu$ are defined as corresponding polynomials for any random variable $X \sim \mu$.
% If $X$ is another random variable we can define polynomial of sonditional moments $CQ_k(X)$ of
 while by conditional variance of $Z$ given $X$
 \begin{equation} \label{defCondVar}
\Var(\E(Z|X)): = \E (Z^2|X) - \E^2(Z|X) = \E((Z - \E(Z|X))^2|X).
 \end{equation}
% In particular the polynomial of conditional moments corresponding to polynomial yielding variance equals to
%
% and we will call it conditional variance of $Z$ given $X$.
% \begin{equation}
% %\begin{split}
% \Var(\E(Z|X)): = \E (Z^2|X) - \E^2(Z|X) = \E((Z - \E(Z|X))^2|X)
%\end{split}
%\end{equation}

We generalize variance to random vectors $Z \in L^2_n$ as follows
\begin{equation}\label{varGen}
\Var(Z) := d_n(Z, \E(Z))^2 = ||Z||^2_n - ||\E(Z)||^2_n
\end{equation}
and conditional variance of $Z$ given $X$ as
\begin{equation}\label{condGen}
\Var(Z|X): =  \E(\dist_n(Z, \E(Z|X))^2|X).
\end{equation}
Using iterated expectation property (\ref{doubleCond}) we rewrite (\ref{condError}) to receive generalized version of a well-known formula 
\begin{equation} \label{aveVarError}
\Var(Z) = \E(\Var(Z|X)) + \Var(\E(Z|X)).
\end{equation}
One can further generalize variance and conditional variance to the case of random vectors
 by using  metrics $D_n$ on $\R^n$, which are not induced by scalar products and defining variance as $\E(D_n(Z, \E(Z))^2)$ and
conditional variance as in (\ref{condGen}) with $\dist_n$ replaced with $D_n$.
%Note that this formula, which will be crucial for many applications of variance-based sensitivity analysis does not hold if one
%further generalizes variance by setting
%Then variance becomes mean squared distance of the vector from its mean
% \begin{equation}\label{distVariance}
% \Var(X) = \E d(Z -\E Z)^2|X),
% \end{equation}
% which concides with the previous approach for metrics induced by scalar products. Unfortunately for metrics not induced by
% scalar product expression (\ref{aveVarError}) with squares replaced by squared distances from $0$, which is necessary for instance
% for the variance reduction setting that we will discuss in section \ref{varRedSec}, may not hold.
For instance Degasperi et. al.
 ( \cite{Degasperi2008} and e-mail communication with Mr Degasperi) apply $k$-dimensional Manhattan distance for some $k$ natural
\begin{equation}\label{ManhMetric}
DM_k(X,Y) = \sum_{i=1}^{k} |X_i - Y_i|
\end{equation}
to compute such defined variances for conditional histogram $Z = \E(\hist(f(P,R))|P)$ of some observable
 $f(P,R)$ of the process of DM and variance of $\E(Z|X)$ for $X = P_J$, where $P_J$ is certain subvector of $P$.
Such variances are called main-sensitivity indices of $Z$ with respect to $P$ and  $P_J$ respectively and we
discuss them in more detail in further Sections.
%are needed for variance-based sensitivity analysis.
%$X_1, X_2, \ldots$
Unfortunately, for variances and conditional variances defined using metric (\ref{ManhMetric}),
 formula (\ref{aveVarError}), which is crucial for some applications of variance-based sensitivity indices, in general does not hold.
  For instance let us consider histogram function $h$ with bins given by
  $(a_1=-\infty,\ a_4 = \infty \text{ and } a_{i+1} = -3 + i*2\text{ for } i \in I_2)$ and two independent
   random variables $\epsilon_1, \epsilon_2$ with distribution $\PR(\epsilon_i = 1) = \PR(\epsilon_i = -1) = \frac{1}{2}$, for
  $i \in I_2$.  For $Z = \epsilon_1 + \epsilon_2$ we have $\Var(h(Z)) = \frac{13}{8}$, $\Var(\E (h(Z)|\epsilon_1)) = \frac{1}{4}$ and
  \begin{equation}
  \E (D_3(h(Z),\E(h(Z)|\epsilon_1)^2) = 1,
  \end{equation}
  thus the counterpart of expression (\ref{aveVarError}) does not hold.

\chapter{Variance-based sensitivity analysis}\label{varBased}
%a product distribution $\mu_{J} = \prod_{i\in J} \mu_{i}.$
%One could replace in random variables by stochastic process, nevertheless because, the proof in is rather sketchy, leaving the gaps
%to fill for the reader we give our own proof of the decomposition.

\section{ANOVA decomposition}\label{secANOVA}
%TODO powołaj sie
%ANOVA decomposition makes sense only for measurable functions
%We will now introduce the so called %, which we will make a frequent use of. 
%Our formulation is most similar to the one given in ref. \cite{Efron1981}. 
%We introduce new concise formulation of this decomposition and 
%provide new interpretations of its components. %TODO -analogous decompositions 
%It is easy to check that our formulation is equivalent to the one given and proven in \cite{Efron1981}. 
%We have that 
%L_{X} 
In this whole Section $X=(X_1, \ldots, X_N)$ is a random vector with independent coordinates, $f$ is a function such that $f(X)=(f_i(X))_{i=1}^n 
\in L^2_n$ for some $n$ natural positive. We denote $I: = I_n$. 
For $J \subset I$ vector $X_J$ is defined as in Section \ref{secCondDistr}.  %and the subspace $L^2_{X_J} := L^2_{X_J}(\PR)$ 
% are defined as in section \ref{condDistrSec}. 
For $J \neq \emptyset$ $L^2_{X_J}$ is defined as in Section \ref{secOrthog} and $L^2_{X_{\emptyset}}$ denotes the set of all real constants. 
We further denote $X_{\sim i}$ to be the sub vector of $X$ with all its coordinates except for the $i$-th. 
%For $ J = \emptyset$ we define $\E(f(X)|X_J) = \E(f(X))$.
For each $J \subset I$ let us denote $L^2_{n,J}$ to be the subspace of  $L^2_{n,X_J}$ consisting of variables $Z = f(X_J)$ such that 
for every $i \in J$ we have 
\begin{equation}\label{zeroInt} 
\E(Z|X_{\sim i}) = \int \! f(X_{J \setminus \{i\}}, x_i) \, \mu_i(\mathrm{d} x_i) = 0, 
\end{equation} 
where expression in the middle is a convenient notation for integrating only the $i$-th variable over its distribution and 
the first equality is a consequence of Theorem \ref{indepCond} from Appendix \ref{appMath}. 
% where the first equality in (\ref{zeroInt}) is a consequence of Theorem \ref{indepCond} from Appendix \ref{appMath} and the expression 
% in its middle is a convenient notation for integrating only the $i$-th variable over its distribution. 
%Thanks to theorem \ref{indepCond} from appendix \ref{appMath} the lhs of (\ref{zeroInt}) can be replaced 
% over  the distribution $\mu_i$ of the $i$-th coordinate of $X$ while keeping others fixed, which we will write as 
%\begin{equation}
%\int \! f(X_{J \setminus \{i\}}, x_i) \, \mu_i(\mathrm{d} x_i) = 0.
%\end{equation} 
Note that $L^2_{n,\emptyset}$ denotes the subspace of constant vectors. From (\ref{zeroInt}) 
and iterated expectation property (\ref{doubleCond}) it follows that elements $Z_J \in L^2_{n,J}$ for $J \neq \emptyset$ fulfill 
\begin{equation}
\E(Z_J) = 0.
\end{equation} 
%Using the notion of direct sum in Hilbert space introduced in Section \ref{appHilb} we have the following fact. %(see 
%Section \ref{appHilb} for the definition of direct sum in a Hilbert space). 
We now introduce generalization of well-known ANOVA decomposition to the case of elements of $L^2_{n,X}$. 
To our knowledge ANOVA decomposition for real-valued variables appeared for the first time in  \cite{Efron1981}.  
See \cite{Kuo_2010} and \cite{Archer1997} for different formulations and alternative proofs of this decomposition 
for the special case of real-valued functions and for reviews of its history. 
\begin{theorem}
For Hilbert space $L^2_{n,X}$ with certain scalar product $(,)_n$ as discussed in Section \ref{secOrthog}, we have 
\begin{equation}
L^2_{n,X} = \bigoplus_{J \subset I} L^2_{n,J}. 
\end{equation}
\end{theorem}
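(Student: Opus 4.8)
The plan is to verify directly the three conditions in Definition~\ref{defHilb}: that each $L^2_{n,J}$ is closed, that $L^2_{n,X}=\sum_{J\subseteq I}L^2_{n,J}$, and that the subspaces are mutually orthogonal. First I would reduce to the real-valued case. A vector $Z=(Z_k)_{k=1}^n$ lies in $L^2_{n,J}$ exactly when $Z_k\in L^2_{X_J}$ and $\E(Z_k|X_{\sim i})=0$ for every $k\in I_n$ and every $i\in J$, i.e. when each $Z_k$ lies in the $n=1$ instance of this space, which I will denote $L^2_{1,J}$; hence $L^2_{n,J}$ is the coordinatewise product of $n$ copies of $L^2_{1,J}$. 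Since all the norms $||\cdot||_n$ are equivalent (Section~\ref{secOrthog}), closedness and the spanning identity do not depend on the chosen scalar product, and for the standard one $L^2_{n,X}$ is the $n$-fold direct sum of copies of $L^2_X$, so a direct-sum decomposition of $L^2_X$ yields one of $L^2_{n,X}$ coordinatewise. Orthogonality under a general $(,)_n$ also reduces componentwise, because $(v,w)_n=\sum_{k,\ell}a_{k\ell}(v_k,w_\ell)$ with $a_{k\ell}$ the coefficients of $<,>_n$. So it suffices to prove $L^2_X=\bigoplus_{J\subseteq I}L^2_{1,J}$.

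For the real-valued statement I would work with the orthogonal projections $P_K\colon L^2(\PR)\to L^2_{X_K}$, which by Theorem~\ref{pytag} are $Z\mapsto\E(Z|X_K)$ (with $P_{\emptyset}Z=\E(Z)$); writing $P_{\sim i}$ for $\E(\cdot|X_{\sim i})$, the iterated expectation property (\ref{doubleCond}) together with independence of the coordinates of $X$ (Theorem~\ref{indepCond}) gives $P_JP_K=P_{J\cap K}$, in particular $P_{\sim i}P_K=P_{K\setminus\{i\}}$. Closedness of $L^2_{1,J}$ is then immediate, since $L^2_{1,J}=L^2_{X_J}\cap\bigcap_{i\in J}\ker P_{\sim i}$ is an intersection of closed subspaces, each $P_{\sim i}$ being a bounded operator. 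For the remaining two conditions I would use the M\"obius-type formula: for $Z\in L^2_X$ set
\begin{equation}
Z_J:=\sum_{K\subseteq J}(-1)^{|J|-|K|}\E(Z|X_K).
\end{equation}
Each $Z_J$ is a function of $X_J$; and for $i\in J$, applying $P_{\sim i}$ and using $P_{\sim i}P_K=P_{K\setminus\{i\}}$, the summands indexed by $K$ and by $K\cup\{i\}$ (for $K\subseteq J$ with $i\notin K$) are equal up to sign and cancel in pairs, so $\E(Z_J|X_{\sim i})=0$; hence $Z_J\in L^2_{1,J}$. M\"obius inversion then yields $\sum_{J\subseteq I}Z_J=\sum_{K\subseteq I}\E(Z|X_K)\sum_{K\subseteq J\subseteq I}(-1)^{|J|-|K|}=\E(Z|X_I)=Z$ (the inner sign-sum being $0$ unless $K=I$), proving $L^2_X=\sum_{J\subseteq I}L^2_{1,J}$. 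For orthogonality, if $J\neq K$ choose an index lying in exactly one of them, say $i\in J\setminus K$; then every $w\in L^2_{1,K}$ is a function of $X_{\sim i}$, so for $v\in L^2_{1,J}$ we get $\E(vw)=\E\bigl(w\,\E(v|X_{\sim i})\bigr)=0$.

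I do not expect a genuine obstacle; the only delicate point is the identity $P_JP_K=P_{J\cap K}$, since this is where independence of the coordinates of $X$ is actually used, and it must be invoked via Theorem~\ref{indepCond} rather than via (\ref{doubleCond}), which by itself only covers nested index sets. A minor bookkeeping point is to keep the reduction from vector-valued to real-valued functions compatible with an arbitrary scalar product $(,)_n$; this is handled by separating the three conditions as above, since only mutual orthogonality depends on $(,)_n$, and that dependence is linear in the components.
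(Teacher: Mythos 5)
Your proof is correct, and it follows the same overall skeleton as the paper's (verify closedness, spanning, and mutual orthogonality separately; closedness via $L^2_{n,J}=L^2_{n,X_J}\cap\bigcap_{i\in J}\ker\E(\cdot|X_{\sim i})$ and orthogonality via integrating out a coordinate $i\in J\setminus K$ are essentially identical to the paper's arguments). The genuine difference is in how you establish the spanning identity. The paper defines the components $f_J$ implicitly through the triangular system $\E(f(X)|X_J)=\sum_{K\subseteq J}f_K(X_K)$, argues uniqueness by induction on $|J|$, and then proves $f_J\in L^2_{n,J}$ by a second induction. You instead write down the closed-form M\"obius inversion $Z_J=\sum_{K\subseteq J}(-1)^{|J|-|K|}\E(Z|X_K)$, check membership in $L^2_{n,J}$ by a direct pairing-and-cancellation argument using $P_{\sim i}P_K=P_{K\setminus\{i\}}$, and recover $\sum_J Z_J=Z$ from the alternating-sum identity. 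The two are of course equivalent --- your formula is the explicit solution of the paper's recursion --- but your version trades two inductions for one combinatorial identity and makes the role of independence more visible: you correctly isolate $P_JP_K=P_{J\cap K}$ as the one place where independence of the coordinates enters (via Theorem~\ref{indepCond}, since the iterated expectation property alone only handles nested index sets), which is exactly the content of the paper's equation (\ref{condDisap}). Your explicit reduction from the vector-valued case with a general scalar product $(,)_n$ to the scalar case is also a clean bookkeeping step that the paper handles implicitly by working componentwise throughout; separating the three conditions so that only orthogonality sees $(,)_n$ is the right way to do it. No gaps.
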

\begin{proof}
For every $J \subset I$ set $L^2_{n,J}$ is closed in $L^2_{n,X}$, since it is intersection 
of closed set $L^2_{n,X_J}$ and $\E^{-1}(\cdot|X_{\sim i})(0)$ for $i \in J$, 
which are closed due to conditional expectations being continuous as any orthogonal projections.
We need to prove that for every $f(X) \in L^2_{n,X}$ there exist $f_J(X_J) \in L^2_{n,J}$ for $J \subset I$, such that we have 
\begin{equation}\label{anovaDec} 
f(X) = \sum_{J \subset I}f_J(X_J). 
\end{equation} 
% where for every $J \subset I$ $f_J(X_J) \in L^2_{J}$. 
Notice that for $K,J \subset I$ such that $\exists i \in K \cap (I \setminus J)$ from $f_K(X_K) \in L^2_{n,K}$ it follows 
\begin{equation}\label{condDisap}
\E(f_K(X_K)|X_J) = \E(\int f_K(X_{K \setminus \{i\}},x_i)\,  \mu_i(dx_i)|X_J) = 0. 
\end{equation}
Applying conditional expectation $\E(\cdot|X_{_J})$ to both sides of (\ref{anovaDec}) for $J \subset I$ and using (\ref{condDisap}) we receive 
set of formulas 
\begin{equation}\label{recufJ} 
\{\E(f(X)|X_J) = \sum_{K \subset J} f_K(X_K)\}_{J \in I},
\end{equation}
which uniquely determine every $f_J(X_J)$ on the right hand side (rhs)
 of (\ref{anovaDec}) (full proof would follow by induction over $|J|$ - the size of $J$). 
The fact that such defined $f_K(X_K)$ are in respective spaces $L^2_{K,n}$ follows by induction. For $f_\emptyset = \E(f(X))$ it is obvious. 
Let us assume for certain $k<n$ it holds for all $\{J\subset I :|J|\leq k\}$. We prove it for $\{ J:|J| = k+1 \}$ as follows. For $i \in J$ from formulas 
(\ref{recufJ}) we have 
% over $|J|$, the inductive step being 
%for $i \in J$ and using \ref{recufJ}
\begin{equation}\label{indStep}
\E (f_J(X_J)|X_{\sim i}) = \E(f(X)|X_{J \setminus \{i\}}) - \sum_{K \subset (J \setminus \{i\})} f_K(X_K),
%\E (f(X)|X_{J \setminus \{i\}) %- \sum_{K \subset (J \setminus \{i\})} f_K(X_K), 
\end{equation}
since 
\begin{equation}
\E(\int\ f(X_{J \setminus \{i\}}, x_i) \mu_i(dx_i)| X_J) = \E(f(X)|X_{J \setminus \{i\}})
%\E(\int\ f(X_{J \setminus \{i\}}, x_i) \mu_i(dx_i)|X_J) = \E(f(X)|X_{J\setminus{i})
\end{equation}
and by inductive hypothesis for $K \subsetneq J,$ 
$i \in K$ $f_K(X_K)$ become $0$ when applying to them  $\E(\cdot|X_{\sim i})$, while $f_K(X_K)$ for $i \notin K$ remain unchanged. 
From (\ref{recufJ}) the rhs of (\ref{indStep}) is equal to zero vector. 
%we have $\int\ f_K(X_{K\setminus \{i\}}) = 0$. 
%\sum_{K \subset J} f_K(X_K)\}_{J \in I_n
%Wehere we used the fact that \E(f(X)|X_{J\minus{i})
%K \subsetneq J
%We also used notation $\E(f(X)|X_{\emptyset}) = \E f(X)$ so that $ f_{\emptyset} = \E f(X)$. 
Finally, for $J$, $K \subset I$, $J \neq K$ we need to prove that $f_J(X_J)$ and $f_K(X_K)$ are orthogonal.
Without loss of generality assuming that 
there exists certain $i \in J\setminus K$, for every $l,m \in I$ we have 
\begin{equation}
%\begin{split}
\E (f_{J,l}(X_J)f_{K,m}(X_K)) = \E (\int \! f_{J,l}(X_{J \setminus \{i\}}, x_i) \, \mu_i(\mathrm{d} x_i)f_{K,m}(X_K))= 0,
%\end{split}
\end{equation}
so from expression (\ref{scalarFun}) we receive $(f_J(X_J), f_K(X_K))_n = 0$.
\end{proof} 
%We give a proof of this fact at the end of appendix  \ref{appHilb}
% In particular  we receive that for any $f(X) \in L^2_X$ there exist unique $f_J(X_J) \in L^2_{X_J}$ for $J \subset I$ such that 
% \begin{equation}
% f(X) = \sum_{J \subset I}f_J(X_J).
% \end{equation} 
%\begin{defin} 
Vector $(f_J(X_J))_{J \subset I}$ as in the above theorem  %consisting of the rhs of (\ref{anovaDec}) 
is called ANOVA decomposition of $f(X)$. 
%\end{defin} 
%From (\ref{recufJ}) we can see that $f_{\emptyset} = \E f(X)$. 
%Note also that for $J \subset I$, $J\neq \emptyset$ $f_J(X_J)$ have mean $0$ since they are orthogonal to $L_{\emptyset}$. 
Denoting for $J \subset I$ 
\begin{equation}
V_J := \Var(f_J(X_J))% = ||f_J^2(X))||_n^2
\end{equation}
and using (\ref{anovaDec}) and orthogonality of elements of ANOVA decomposition we get  
%expression (\ref{recufJ}) we receive 
\begin{equation}\label{sumvar}
\Var(f(X))  =  \sum_{K \subset J} V_K.
\end{equation} 
We call $(V_K)_{K \subset I}$ the ANOVA decomposition of variance of $f(X)$. For convenience instead of writing $V_{\{i,j \ldots, k\}}$  
we write simply $V_{i,j \ldots, k}$. 
For a family $\mathcal{S}$ of subsets of $I$ such that 
\begin{equation}
\forall A \in \mathcal{S}\quad \forall B \subset A\quad B \in \mathcal{S}
\end{equation}
 we have 
\begin{equation}
\sum_{J \in \mathcal{S}} L^2_{n,X_J} = \bigoplus_{J \in \mathcal{S}} L^2_{n,J}.
\end{equation} 
 From expressions (\ref{mDirect}) and (\ref{infProp}) from Section 
\ref{appHilb} it thus follows that $\sum_{J \in \mathcal{S}} f_J$ is 
the best approximation of $f(X)$ among 
linear combinations of functions of one of sub vector from the set $\{X_J\}_{J \in \mathcal{S}}$. From (\ref{sumvar}) and (\ref{distP}) error of
 this approximation is equal to 
\begin{equation}\label{sumVJ} 
\sum_{J \subset I, J \notin \mathcal{S}} V_J. 
\end{equation}
For fixed $J \subset I$ substracting expression (\ref{sumVJ}) for  $\mathcal{S} = \{K: K\subsetneq J\}$ 
from this expression for $\mathcal{S}$ equal to all subsets of $J$, we receive $V_J$. Thus $V_J$ can be interpreted as the difference of errors 
of the best approximation of $f(X)$ using linear combinations of functions of proper sub vectors of $X_J$ and of the whole vector $X_J$. 
This is to our knowledge new interpretation of $V_J$, which 
has been called interaction index between variables with indices in $J$ in the literature \cite{Saltelli2005}. 

\section{Variance-based sensitivity indices}\label{vbsi}
%TODO Gini coefficient
%TODO Application of S_J
%TODO nonfunctional form
%be some random variable and let the random vector $X$
For some $m$ natural positive let $Z \in L^2_m$ and $X = (X_i)_{i=1}^n$ be a random vector.  We denote $I = I_n$. $X_J$ 
for some $J\subset I$  is defined as in the previous Section. We assume that $\Var(Z) = D > 0$. 
One useful sensitivity index describing dependence of $Z$ on $X_J$ is 
 variance of conditional expectation of $Z$ given $X_J$, which is known as the main sensitivity index of $Z$ 
 with respect to $X_J$ \cite{Archer1997}
\begin{eqnarray}\label{VXJ} 
V_{X_J} := \Var(\E(Z|X_J)). 
\end{eqnarray} 
We call its normalized version 
\begin{equation}\label{SXJ} 
S_{X_J} = \frac{V_{X_J}}{D} 
\end{equation} 
 Sobol's main sensitivity index \cite{Saltelli2005}. %as is common in sensitivity analysis\cite{Saltelli2005}.
%In statistics it is known as coefficient of determination \cite{}. 
From (\ref{aveVarError}) $D - V_{X_J}$ is equal to the error of the best approximation of $Z$ in $L^2_{m,X_J}$, 
%In this sense it
%can be thought of as a measure of dependence of $f$ on $X_J$.
in particular when $S_{X_J} = 1$ we receive that $Z$ is a function of $X_J$. 
%\begin{eqnarray}
%S_{X_J} = 1 \iff G = \E(Z|X_J).
%\end{eqnarray}
%To introduce another variance-based sensitivity index let us assume that $G = f(X)$ for some function $f$.
%TODO without indep coord after appl
Let us now assume that  $Z= f(X)$  for certain function $f$ and random vector $X = (X_i)_{i=1}^n$. % with independent coordinates. 
Denoting $\sim J := I \setminus J$, we define the total sensitivity index $f(X)$ with respect to $X_J$ \cite{Saltelli2005}
\begin{eqnarray}\label{VXJtot}
V_{X_J}^{tot} := D - V_{X_{\sim J}}, 
\end{eqnarray}
where $\sim J = I \setminus J$. Its normalized version, is called Sobol's total sensitivity index 
\begin{equation}\label{SXJtot}
S_{X_J}^{tot} = \frac{V_{X_J}^{tot}}{D}.
\end{equation}
% We will further use the following variable
% \begin{equation}\label{fXJtot}
% f_{X_J}^{tot}(X) = \sum_{K: K \cap J \neq \emptyset} f_J(X_J)
% \end{equation}
% for which it holds
% \begin{equation}
% \Var(f_{X_J}^{tot}) = V_{X_J}^{tot}.
% \end{equation}
$V_{X_J}^{tot}$ is equal to the error of the best approximation of $f(X)$ among functions of $X_{\sim J}$ from $L^2_m$. In
particular
\begin{eqnarray}\label{totExtr}
S_{X_J}^{tot} = 0 \Leftrightarrow f(X) = \E(f(X)|X_{\sim J}).
\end{eqnarray}
Let us now assume that $X_J$ and $X_{\sim J}$ are independent. 
Since $\E(f(X)|X_{\sim J})$ is function of $X_{\sim J }$ the rhs of (\ref{totExtr}) implies that $f(X)$ is independent of $X_J$. %TODO reverse
%To get an idea why $V_{X_J}^{tot}$ is called total sensitivity index let us now assume that $X_J$ and $X_{\sim J}$ are independent.
 By treating $f(X)$ as a function of two arguments $X_J$ and $X_{\sim J}$ we get its variance decomposition sum analogical to (\ref{sumvar})
 \begin{eqnarray}
 D = V_{X_J} + V_{X_J, X_{\sim J}} +  V_{X_{\sim J}}.
 \end{eqnarray}
 where $V_{X_J, X_{\sim J}}$ is the interaction index of $X_J$ and $X_{\sim J}$. 
 From (\ref{VXJtot}) we get that 
 \begin{eqnarray}\label{totDec}
 V_{X_J}^{tot} = V_{X_J} + V_{X_J,X_{\sim J}}.
 \end{eqnarray}
If all coordinates of $X$ are independent, using (\ref{sumvar}) we rewrite (\ref{VXJtot}) as 
\begin{equation}\label{sth}
 V_{X_J}^{tot} = \sum_{K \subset I: K \cap J \neq \emptyset} V_K,
\end{equation}
which is the sum of all interaction terms involving indices from $J$. 
Equations (\ref{totDec}) and (\ref{sth}) provide some intuition for the name total effect of $X_J$ on $f(X)$ for $V_{X_J}^{tot}$ and prove that
%from which we can see that 
\begin{eqnarray}\label{ineqS}
 0 \leq S_{X_J} \leq S_{X_J}^{tot} \leq 1.
\end{eqnarray}
%Note that since $S_{X_J}^{tot}$ is the error of the best approximation of $f(X)$ as a function of all coordinates of $X$ 
%independent of $X_{J}$ its small values indicate weak dependence of $f(X)$ on $X_J$. 
%$S_{X_J}^{tot}$ is the error of the best approximation of $f(X)$ as a function of all coordinates of $X$ 
%independent of $X_{J}$.
%From \ref{totExtr} and the assumption of independence of $X_{\sim J}$ and  $X_J$ 
%we get that $S_{X_J}^tot = 0$ implies independence of 
%$f(X)$ and $X_J$.
For $X_J$ and $X_{\sim J}$ not independent
neither inequalities (\ref{ineqS}) nor the fact that $S_{X_J}^{tot} = 0$ implies independence of $X_J$ and $Z$ are true, for instance 
for $X_J = X_{\sim J}$  we get $S_{X_J} = 1$, $S_{X_J}^{tot} = 0$ and 
$f(X)$ is not independent of $X_J$. 

\section{Sensitivity indices for observables of DM}\label{sensAveSec}
% Before proceeding to describe possible applications of sensitivity indices we will discuss what
% variance-based sensitivity indices can be defined for the observables of a DM in general and for certain of its constructions
%  and how they are related.
%To our knowledge all works dealing with variance-based sensitivity analysis so far consider random variables with specified functional
%form $f(X)$, for
%some function $f$ and random vector $X$\cite{Iooss2008, Saltelli2008}. The situation is completely different for observables $G = g(Y)$ of
%stochastic processes $Y$ with
%random parameters $P$, which may have different functional forms for different constructions of the process.
% As a consequence a new effect
%we have to deal with is that some variance-based sensitivity indices can be defined in a representation independent manner while others only in the
%context of a particular functional form of the observable. Let us assume $\Var(G) = D > 0$
Let us consider certain observable $g(Y) \in L^2_n$ of a process $Y$ of a DM with parameters $P$.
We define main sensitivity index of $g(Y)$ given some sub vector of $P_J$ as in the previous Section. Its value $V_{P_J}$
is determined by the distribution of $\E(g(Y)|P_J)$, which is determined by distribution of  $P$ and conditional
distribution of $Y$ given $P$, and hence by Definition \ref{DMdef} of DM (see Appendix \ref{appMath}).
 %and does not depend on the construction of the process of DM.
 Introducing a representation $f(P,R)$ (\ref{obsForm}) of the observable related to a certain construction of the process of this DM,
 we can consider some further sensitivity indices. For instance
\begin{equation}\label{VRTot}
V_{R}^{tot} := D - V_P,
\end{equation}
whose value, by inspection of rhs of (\ref{VRTot}) is also determined by definition of DM.
 However, the values of indices of $f(P,R)$ like $V_R = \Var(\E(f(P,R)|R))$ or
$V_{P_J}^{tot} = D - V_{(P_{\sim J},R)}$ are not determined by Definition \ref{DMdef} and
can be different for different constructions of DM used to define $f(P,R)$.
Let us consider the mean observable $g(Y)$ given $P$
\begin{equation}
\tilde{f}(P) := \E(g(Y)|P).
\end{equation}
Thanks to iterated expectation property ($\ref{doubleCond}$) we have $\E(\tilde{f}(P)|P_J) = \E(g(Y)|P_J)$ and therefore the main
sensitivity indices of $\tilde{f}(P)$ and  $g(Y)$ with respect to $P_J$ coincide
\begin{eqnarray}\label{tVCJ}
\tilde{V}_{P_J}: = \Var(\E (\tilde{f}(P)|P_J)) = V_{P_J},
\end{eqnarray}
while for total sensitivity indices we have
\begin{eqnarray}
\tilde{V}_{P_J}^{tot}: = \tilde{D} - \tilde{V}_{P_{\sim J}} = V_{P} - V_{P_{\sim J}}.
\end{eqnarray}
Defining $\tilde{D} := \Var(\tilde{f}(P)) = \tilde{V}_{P}$ we also have
following expressions for Sobol's sensitivity indices of $\tilde{f}(P)$
\begin{equation}\label{SCJ}
\tilde{S}_{P_J} := \frac{\tilde{V}_{P_J}}{\tilde{D}} = \frac{V_{P_J}}{V_P},
\end{equation}
\begin{equation}\label{SCJtot}
\tilde{S}_{P_J}^{tot}: = \frac{\tilde{V}_{P_J}^{tot}}{\tilde{D}} = \frac{V_{P} - V_{P_{\sim J}}}{V_P}.
\end{equation}
%We can also define variance-based sensitivity indices for higher polynomials of conditional moments of $g(Y)$ given $P$, whose values will be determined
%by definition of DM. For conditional variance $\Var(g(Y)|P)$ the main sensitivity index given $P_J$ will be denoted as
%$VVar_{P_J}$ and total sensitivity index as $VVar_{P_J}^{tot}$.
%with respect to $P$ we introduced in section \ref{condDistrSec} are certain real or vector-valued functions of $P$ their both main
%and total sensitivity indices can be defined as in the previous section.
%$\tilde{f}(P) := \E(f(P,R)|P)$
%We will denote
%$\tilde{V}_{P_J}$ and $\tilde{V}_{P_J}^{tot}$ the main and total sensitivity indices of conditional expectation $\E(g(Y)|P)$
%The main sensitivity index with respect to
%sub vectors of $P_J$ makes sense also for $g(Y)$ and we will denote it $V_{P_J}$. From (\ref{equParam}) we get the equality
%of main sensitivity indices of $g(Y)$ and its conditional expectation given $P$
% \begin{eqnarray}\label{tVCJ}
% \tilde{V}_{P_J} = \Var(\E (\tilde{f}(P)|P_J)) = V_{P_J}.
% \end{eqnarray}
%TODO TODO TODO to nie ma sensu!!!! V_R - ill defined
% Interpretations
% Monte Carlo
% Real life example
%The above relationships will enable us to compute sensitivity indices associated with $\tilde{f}(P)$ using sensitivity indices for $g(Y)$.

\section{Measures of dispersion for uncertain models}\label{secDisp}
%We have already proved that $E(f|X_J)$ is the best approximation of $f$ in terms of $X_J$, the  average error of this approximation being equal to
%$E(f - E(f|X_J))^2$ = $D - V_{X_J}$ in
%From \ref{}
From (\ref{aveVarError})
average conditional variance of $Z \in L^2_n$ given some variable $X$ can be expressed using main sensitivity index as follows 
\begin{equation}\label{aveCondVar}
%\begin{split}
\E(\Var(Z|X)) = D - V_{X}.
%\E(Z- \E(Z|X))^2 \\
%&= \Var(Z) - \Var(\E(Z|X)) \\
%&= D - V_{X}, \\
%&=  D(1 - S_{X_J}),
%\end{split}
\end{equation}
Note that the last expression is equal to $V_R^{tot}$ for an observable $Z= f(P,R)$ corresponding to certain construction of DM and $X = P$.
Since variance is a measure of dispersion of distribution of model output with known parameters,
 average conditional variance given the epistemic parameters could be used to measure and
compare dispersions of models with uncertain parameters. For chemical models with constant parameters $p$
also other measures of dispersion of process observables $Z_p$ satisfying $\E(Z_p) > 0$  have been used, like
coefficient of variation
\begin{equation}
CV = \frac{\sqrt{\Var(Z_p)}}{\E (Z_p)},
\end{equation}
which is a dimensionless quantity, or Fano factor \cite{Fano47, Thattai_2001_intrinsic}
\begin{equation}
FF = \frac{\Var(Z_p)}{\E (Z_p)}.
\end{equation}
For variables with Poisson distribution variance is equal to mean and hence FF reveals whether $Z_p$ has greater variance than a Poisson variable with
the same mean. For outputs $Z$ of models with uncertain parameters one could take an average of a conditional
FF $\E \frac{\sqrt{\Var(Z|X)}}{\E Z|X}$, and similarly for CV. Instead we propose the following generalizations of
conditional variance GCV and Fano factor GFF to the random parameters case
\begin{equation}\label{CVGen}
\text{GCV} =  \frac{\sqrt{\E\Var(Z|X)}}{\E(Z)}
\end{equation}
 and
\begin{equation}\label{FFGen}
\text{GFF} = \frac{\E(\Var(Z|X))}{\E(Z)},
\end{equation}
since they can be expressed using variance based sensitivity indices as in (\ref{aveCondVar}) and hence 
are amenable for computation using our methods.
% amenable for numerical computation using our methods
We call them generalizations, since they coincide with definitions for models with constant parameters when the distribution of epistemic
 parameters is one-point.

\section{Average variance reduction}\label{varRedSec}
%The interpretation we will give $S_{X_J}$
%TODO Goodman kruskal tau
We assume $Z$ is some model output, like an observable $g(Y)$ of process of DM or its mean, conditional histogram or some conditional moment
given the parameters. For random vector $X$ denoting model parameters we define its subvector $X_J$ as usual.
We rewrite expression (\ref{aveVarError}) as follows
\begin{equation}\label{varDecr}
\begin{split}
S_{X_J} &=   \frac{\E( D - \Var(Z|X_J))}{D}.
\end{split}
\end{equation}
The rhs of (\ref{varDecr}) is the normalized average difference of variance of $Z$ and its conditional variance given $X_J$.
Thus if $X_J$ are epistemic parameters, $S_{X_J}$ tells by what fraction
on average the variance of the output is reduced 
if we get to know their exact values. %TODOWilkinson, precision, TODO independent vs dependent
%If all variables are epistemic we can define the precision of the model output as the inverse of its variance \cite{Wilkinson2006}.
%This can be used % to average observables $\tilde{f}$
%For instance if we consider a function $f(P)$ with $P_J$
Let us assume that we can conduct an experiment measuring completely precisely one epistemic parameter, which is of course an idealisation.
If we want to achieve on average the highest reduction of the variance of the ouput, which can also be thought of as maximal reduction
 of the uncertainty or improvement of the precision of model predictions, we should measure the parameter with highest main sensitivity
index ${V}_{X_i}$.
This approach to using variance-based sensitivity indices is known as factor prioritization setting \cite{Saltelli2008}.

\section{Parameter fixing}
%TODO model simplification examples and citations
Let us consider a function $f(X)\in L^2_m$ of a random vector $X$, whose sub vectors $X_J$ and $X_{\sim J}$ are independent.
As shown for the case of $m=1$ in \cite{Sobol2007}
 $V_{X_J}^{tot}$ is related to the average error made when fixing variable $X_J$, in the sense we decribe and prove 
for arbitrary $m$ in this Section. 
Using certain $d_m$ and $dist_m$ as in Section \ref{secOrthog}
we define the square error of approximation of $f(X)$ when fixing $X_J$ to value $z\in\R^m$ as follows
\begin{equation}
 \Delta(z) = (d_m(f(z, {X}_{\sim J}),f(X)))^2 = \E((\dist_m(f(z, {X}_{\sim J}),f(X)))^2). 
\end{equation}
We further need the following Theorem.
\begin{theorem}\label{thCond}
For $X = (X_1,X_2)$, $Y_2 \sim X_2$ and independent of $X$ and $g(X),h(X)\in L^2(\PR)$ it holds
\begin{equation}
\E(g(X)h(X_1,Y_2)) = \E(\E(g(X)|X_1)\E(h(X)|X_1)).
\end{equation}
In particular if $g(X) = h(X)$ we receive a well-known fact \cite{Saltelli_2002} that
\begin{equation}
\E(g(X)g(X_1,Y_2)) = \E((\E(g(X)|X_1))^2)
\end{equation}
and the fact that
\begin{equation}
\Cov(g(X),g(X_1,Y_2)) = \Var(\E(g(X)|X_1)).
\end{equation}
\end{theorem}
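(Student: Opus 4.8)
The plan is to prove the most general identity first and then extract the two special cases by specialization. The key structural fact is that $X = (X_1, X_2)$ has independent components and $Y_2$ is an independent copy of $X_2$, so the triple $(X_1, X_2, Y_2)$ has a product-type joint law in which $X_2$ and $Y_2$ are conditionally independent given $X_1$ (in fact independent of $X_1$ and of each other). First I would write both sides as integrals against this joint law. For the left-hand side, $\E(g(X)h(X_1,Y_2))$ is an integral of $g(x_1,x_2)h(x_1,y_2)$ over the law of $(X_1,X_2,Y_2)$; since given $X_1 = x_1$ the variables $X_2$ and $Y_2$ are independent with the same distribution as $X_2$, I would apply Fubini/Tonelli to integrate out $x_2$ and $y_2$ separately, obtaining
\begin{equation}
\E(g(X)h(X_1,Y_2)) = \E\!\big(\E(g(X)\mid X_1)\,\E(h(X_1,Y_2)\mid X_1)\big).
\end{equation}
Finally, because $Y_2 \sim X_2$ and $Y_2$ is independent of $X_1$, Theorem~\ref{indepCond} (or the elementary substitution property of conditional expectation for independent variables) gives $\E(h(X_1,Y_2)\mid X_1) = \E(h(X_1,X_2)\mid X_1) = \E(h(X)\mid X_1)$, which yields the claimed formula.

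The two corollaries then follow by pure algebra. Setting $g = h$ immediately gives $\E(g(X)g(X_1,Y_2)) = \E\big((\E(g(X)\mid X_1))^2\big)$. For the covariance statement, I would expand $\Cov(g(X), g(X_1,Y_2)) = \E(g(X)g(X_1,Y_2)) - \E(g(X))\,\E(g(X_1,Y_2))$; the first term is $\E\big((\E(g(X)\mid X_1))^2\big)$ by the previous line, and since $g(X_1,Y_2) \sim g(X_1,X_2) = g(X)$ we have $\E(g(X_1,Y_2)) = \E(g(X))$, so the product of means is $\E^2(g(X)) = \big(\E(\E(g(X)\mid X_1))\big)^2$ by the iterated expectation property (\ref{doubleCond}). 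Hence $\Cov(g(X),g(X_1,Y_2)) = \E\big((\E(g(X)\mid X_1))^2\big) - \E^2(\E(g(X)\mid X_1)) = \Var(\E(g(X)\mid X_1))$ by the definition of variance (\ref{varDef}).

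I expect the only real subtlety — and the step I would be most careful about — to be the rigorous justification that the conditional expectation of a product factorizes, i.e. that one may "integrate out $X_2$ and $Y_2$ separately" given $X_1$. This rests on the conditional independence of $X_2$ and $Y_2$ given $X_1$, which in turn comes from the assumed independence structure ($X_1 \perp X_2$ and $Y_2 \perp X$); one must also check the integrability needed to apply Fubini, but since $g(X), h(X) \in L^2(\PR)$ and the map $(x_1,x_2)\mapsto h(x_1,x_2)$ has the same law under the relevant marginals, Cauchy--Schwarz controls $\E|g(X)h(X_1,Y_2)| \le \|g(X)\|\,\|h(X_1,Y_2)\| < \infty$. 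Everything else is bookkeeping with (\ref{doubleCond}), the definition (\ref{varDef}), and Theorem~\ref{indepCond}.
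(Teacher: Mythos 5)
Your proof is correct and follows essentially the same route as the paper's: both condition on $X_1$, exploit the (conditional) independence of $X_2$ and $Y_2$ to factor the inner expectation, and then use the substitution property of Theorem~\ref{indepCond} to identify $\E(h(X_1,Y_2)\mid X_1)$ with $\E(h(X)\mid X_1)$ --- your Fubini phrasing is just that substitution property written out on the product law. Both arguments also rely on the same implicitly assumed independence of $X_1$ and $X_2$; your additional Cauchy--Schwarz integrability check and the explicit derivation of the covariance identity are fine refinements of what the paper leaves unstated.
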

\begin{proof}
\begin{equation}
\begin{split}
\E(g(X)h(X_1,Y_2)) &= \E(\E(g(X)h(X_1,Y_2)|X_1)) \\
&=  \E((\E(g(x_1, X_2)h(x_1,Y_2)))_{x_1 = X_1}) \\
 & = \E((\E(g(x_1, X_2)))_{x_1 = X_1}(\E(h(x_1,Y_2)))_{x_1 = X_1})\\
& = \E(\E(g(X)|X_1)\E(h(X)|X_1)),
\end{split}
\end{equation}
where in first equality we used iterated expectation property, in the second and last Theorem \ref{indepCond} and in the third independence
of $X_2$ and $Y_2$.
\end{proof}
Let ${Y}_{J} \sim X_{J}$ and be independent of $X$.
From the above Theorem it follows that
\begin{equation}\label{thCondVect}
\begin{split}
(f(X) ,f(Y_J,X_{\sim J}))_m &= \sum_{i,j \in I_m} a_{ij}(f_i(X),f_j(Y_J,X_{\sim J})) \\
&= \sum_{i,j \in I_m} a_{ij}(\E(f_i(X)|X_{\sim J}),\E(f_j(X)|X_{\sim J}))\\
& = ||\E (f(X)|X_{\sim J})||_m^2.
\end{split}
\end{equation}
Thus if $X_J$ is set randomly according to its distribution the mean square error of approximation of $f(X)$ is
\begin{eqnarray*}%TODO cite author
\begin{split}
\E(\Delta(Y_J)) &= \E(\dist_m(f(Y_J,{X}_{\sim J}),f(X))^2)\\
&= ||f(X)||^2_m + ||f(Y_J, X_{\sim J})||^2_m - 2 (f(X),f(Y_{J}, X_{\sim J}))_m\\
&= 2(D - V_{X_{\sim J}}) = 2 V_{X_J}^{tot}.
\end{split}
\end{eqnarray*}
%where in the third equality we used the %well-known
%fact\cite{Saltelli_2002}
% that for independent random variables $U,\ V,\ W$ such that $V \sim W$ we have
% \begin{equation}\label{indepEquCond}
% \E(\E(f(U,V)|U))^2 = \E(f(U,\ V)f(U,\ W)).
% \end{equation}
% This fact can be received as a consequence of a general construction we use to prove theorem \ref{estimTh} in section \ref{MC}.
The normalized mean square error $\frac{\E(\Delta(Y_J))}{D}$ of the approximation mentioned is thus equal to $2S_{X_J}^{tot}$.
%TODO
When $V_{X_J}^{tot} = 0$ then for %F does not depend on X_J
$\mu_{X_J}$
almost every (a.e) $y_J$, for $\mu_X$
a. e. $x = (x_J,\ x_{\sim{J}})$  we have that  $f(x) = f(y_J,x_{\sim J})$.
%If we make some further assumption about the relationship between the parameter and output, like continuous relationship, which
%is often the case for deterministic models of chemical reactions. 
%like $f(x) = f(\tilde{x}_J,\ x_{\sim J})$
Therefore, if we need to evaluate many independent copies of $f(X)$, for instance in a Monte Carlo simulation 
we can fix $X_J$ to some random value $y_J$ at the beginning
and evaluate independent copies of $f(y_J, X_{\sim J})$ instead. With probability $1$ we get the same result,
 but at smaller cost, as there is
no need to generate many independent copies of $X_{J}$. The cost of generation of random parameters of
chemical models is usually completely insignificant 
in comparison to the cost of function evaluation, but if $X_J$ represents component of artificial noise $R$ of some stochastic model like DM the
cost of its generation forms a noticeable fraction of the overall simulation cost \cite{Mauch_2011}.
The computation of $S_{X_J}^{tot}$ for the purpose of identifying and fixing insignificant parameters
 is known as factor fixing setting \cite{Saltelli2008}.

 One is often interested in fixing $X_J$ not to random value, but to
a certain one, possibly even lying outside the support of the distribution $\mu_{X_J}$, but
leading to significant reduction of computational cost of function evaluation.
For instance if $X$ represents kinetic rates of chemical reaction network, we may be interested in fixing some of them to $0$ or $1$, which
can lead to model reduction like removal of certain reactions \cite{Degenring2004}.
Fixing parameters $i$ with small values of $S_{X_i}^{tot}$ or even $S_{X_i}$ to certain value in order to simplify the model 
 can sometimes be useful heuristic leading to models retaining small approximation error from initial model or experimental data.
% to some value $y_J$, which may reduce the computational cost of the function evaluation
%  but gives the same result for almost every value $y_J$ of the remaining parameters.
% The computation of $S_{X_J}^{tot}$ for the purpose of identifying and fixing such variables has been called factor fixing setting \cite{Saltelli2008}.
 For instance in  \cite{Cristaldi_2011} the sensitivity of error
of approximation of experimental data by the model with respect to kinetic parameters was computed for different models
describing the hydrogenation process of avermectin to ivermectin. It turned out that a simple model, which could be
created from more complex ones by removal of reactions whose propensities had
kinetic parameters with low values of main and total Sobol's indices retained
good fitting capability to experimental data representing different values of model parameters.

\chapter{Estimation methods}\label{MC}
\section{\label{secStatMod}Statistical models }
%sufficient statistics etc.
% Estimator
%
Statistical model is a triplet  $(B, \mathcal{B}, \mathcal{P})$, where $\mathcal{S} = (B, \mathcal{B})$ is a measurable space and
$\mathcal{P}$ is a family of admissible distributions. Functions from $\mathcal{S}$ are called statistics.
%\cite{kolmogorov1992selected}
%\cite{lehmann1998theory}
For a given $\mu \in \mathcal{P}$ random variable $X \sim \mu$ and its functions are called observables.
For fixed $\mu \in \mathcal{P}$ $\phi$ can be treated as a random variable on probability space $(\mathcal{S},\mu)$ and we denote its expectation as
\begin{equation}
\E_\mu \phi:= \int\! \phi\, d\mu.
\end{equation}
%which is equal to \E(\phi(X)), for X\sim
Let us consider certain real-valued function $G$ on $\mathcal{P}$, which is called estimand.
For instance if $\mathcal{S} = (\R, \mathcal{B}(\R))$ the estimand could be
the first moment of $\mu$
\begin{equation}
G(\mu) = \int\! x\, \mu(dx),
\end{equation}
assuming it exists for every $\mu \in \mc{P}$. 
Statistic $\phi$ is called estimator of $G$, if for every $\mu \in \mathcal{P}$, for any observable $X \sim \mu$, one may use $\phi(X)$
to approximate $G(\mu)$. As a measure of error of this approximation for given  $\mu \in \mathcal{P}$ one can use mean square error
\begin{equation}\label{meanSqrErr}
\E_{\mu} (\phi - G(\mu))^2.
\end{equation}
Value $\phi(X(\omega))$ corresponding to some random event $\omega$ is called estimate of $G(\mu)$.
Statistic $\phi$ is called unbiased estimator of $G$, if for every $\mu \in \mathcal{P}$, we have
\begin{equation}\label{muPhi}
\E_\mu(\phi) = G(\mu).
\end{equation}
For any statistic its variance given $\mu \in \mathcal{P}$ is defined as
\begin{equation}
\Var_{\mu}(\phi): = \E_\mu(\phi - \E_\mu(\phi))^2.
\end{equation}
Thanks to (\ref{muPhi}), for an unbiased estimator it is equal to its mean square error (\ref{meanSqrErr}) in approximating $G(\mu)$.
We further use following unbiased estimators defined on $\mathcal{S} = (\R^n, \mathcal{B}(\R^n))$ for some $n$ natural positive and 
and with $\mathcal{P}$ containing $n$-fold product measures $\mu^n$, for which their estimands exist. Estimator of first moment of $\mu$ 
%See ref. (\cite{lehmann1998theory}) for some alternative restrictions o
\begin{equation}\label{phiAve}
\phi_{ave}(x): = \frac{1}{n} \sum_{i=1}^n x_i,
\end{equation}
of variance of any variable with distribution $\mu$ for $n \geq 2$
\begin{equation}\label{phiVar}
\phi_{var}(x): = \frac{1}{n-1} \sum_{i=1}^n (x_i - \phi_{ave}(x))^2
\end{equation}
and of variance of estimator $\phi_{ave}$ given $\mu$ for $n \geq 2$
\begin{equation}\label{phiAveVar}
\phi_{avevar}(x): =  \frac{\phi_{var}(x)}{n}.
\end{equation}

\section{Monte Carlo method}\label{secMC}
%We say that a function from some set $E$ is a
%of a random variable $Z$ with given distribution
%is an unbiased estimator of quantity $\lambda$ if $\E f(Z) = \lambda$. %TODO - different definition
%Variance of unbiased estimator $f(Z)$ is equal to the error $\err_{\lambda}$ in $L^2(\PR)$ of its approximation of $\lambda$.
%\begin{equation}
%\err_{\lambda} = \Var(f(Z))
%\end{equation}
% and therefore when using random value of an estimator to approximate
%$Z$ to get on average smaller error of approximation one should use estimators with smaller variance.
Let us assume that $\phi \in L^2(\mu)$, for some probabilistic measure  $\mu$.
Monte Carlo (MC) method is a procedure of computing estimates of integrals of form
% probabilistic measure $\mu$
\begin{equation}\label{intProb}
\lambda = \int\! \phi\, d\mu.
\end{equation}
Note that for statistical model containing only one admissible distribution $\mu$ $\phi$ is an unbiased estimator of estimand
$\lambda$. In such situation we say that $\phi$ is unbiased estimator of $\lambda$ with respect to $\mu$.
For independent random variables $(X_i)_{i=1}^n$,
 $X_i \sim \mu$ generated
for instance using random number generator, in each step of MC procedure one computes a value of
observable $W_i = \phi(X_i)$. Thus we call $\phi$ single-step (MC) estimator.
For $W = (W_i)_{i=1}^n$ and $\phi_{ave}$ as in the previous Section as final MC estimate of $\lambda$ one uses the
computed value of observable
\begin{equation}
\overline{W} := \phi_{ave}(W).
\end{equation}
Function defined as
\begin{equation}
\phi_{fin}(x) =  \phi_{ave}((\phi(x_i))_{i=1}^n)
\end{equation}
 is an unbiased estimator of $\lambda$
with respect to $\mu^n$ and we call it final (MC) estimator.
Let us denote the variance of single-step estimator as $\Var_s := \Var_\mu(\phi) =\Var(W_i)$ for any $i \in I_n$
and the variance of final estimator as $\Var_{a} := \Var_{\mu^n}(\phi_{ave}((\phi_i)_{i=1}^n))= \Var(\overline{W})$. It holds
\begin{equation}
\Var_a = \frac{\Var_s}{n}.
\end{equation}
% denoting the variance
% of of $\overline{W}$
As estimates of $\Var_{a}$ one uses the values of
\begin{equation}\label{varEst}
\widehat{\Var}_a(W): = \phi_{avevar}(W)
\end{equation}
and as estimates of standard deviation $\sigma_{a}$ of $\overline{W}$ the values of
 %of  $\overline{W}
\begin{equation}
\widehat{\sigma}_{a}(W):= \sqrt{\phi_{avevar}(W)}.
\end{equation}
%Final Monte Carlo estimators of variance and standard deviation are defined analogically as the one of mean.
% and analogically
%for variance and standard deviation.
% \begin{equation}\label{sigmaEst}
% \widehat{\sigma}_{\overline{W}}: = \sqrt{\widehat{\Var}(\overline{W})}.
% \end{equation}
% computed in the $i$-th step $W_i$, then the final estimator of this index is taken to be the average
% \begin{equation}\label{aveEst}
% \overline{W} = \frac{1}{M}\sum_{i=1}^{M} W_i.
% \end{equation}
% Variance of $\overline{W}$ is $M$ times lower than of $W_i$ for any $i$ and its unbiased estimator can be computed as
% \begin{equation}\label{varEst}
% \end{equation}
% Its standard deviation can be estimated as follows
% \begin{equation}\label{sigmaEst}
% \widehat{\sigma}_{\overline{W}}: = \sqrt{\widehat{\Var}(\overline{W})}.
% \end{equation}
From central limit theorem \cite{billingsley1979}, 
for large $n$ $\overline{W}$ should have approximately normal distribution. In particular
$P(|\overline{W} - \lambda| < k\sigma_a)$ is then approximately equal to $68\%$ for $k=1$ and $95\%$ for $k=2$.
We further report results of MC procedure using computed value of $\overline{W} \pm \widehat{\sigma}_{a}(W)$.
% If in a single Monte Carlo step we may want to compute a number $N_E$ of unbiased estimators $(\Var_{W_{l,i}})_{l=1}^{N_E}$ simultaneously,
% like for instance estimators for \mbox{variance-based} sensitivity indices with respect to different parameters. Then instead of taking the variance
% of a single estimator we could measure the error of a single step $\err_{s}$
% as some weighted sum of variances $(\Var_{W_{l,i}})_{l=1}^{N_E}$, their maximum or some other norm on vector of variances.
% %TODO covariance matrix?
% Then the final error $\err_{MC}$ for a vector of $M$-step Monte Carlo means of these estimators is $M$ times lower.
%de that they decrease linearly when averaging the estimators over $n$ independent samples.

If we want to go with variance $\Var_a$ below threshold $\alpha$ for given $\Var_s$, we should use smallest number
$n$ of MC steps, such that
\begin{equation} \label{MCErr}
\Var_{a} = \frac{\Var_s}{n} < \alpha.
\end{equation}\label{approxEq}
We have
\begin{equation}
n \approx_{\epsilon} \frac{\Var_s}{\alpha},
\end{equation}
by which we mean that it holds
\begin{equation}
\frac{n - \frac{\Var_s}{\alpha}}{n} \leq \epsilon,
\end{equation}
for some $\epsilon$, which can be arbitrarily small for sufficiently small $\alpha$ (large $n$).
If most of the time of MC procedure is taken by computation of $W$ and
computation of $W_i$ in a single step lasts on average $\tau_s$, then the whole procedure lasts on average about
\begin{equation}\label{aveTime}
 n\tau_s \approx_{\epsilon} \frac{\tau_s \Var_s}{\alpha}.
\end{equation} 
Let us define  Monte Carlo step inefficiency constant as follows
\begin{equation}
c = \tau_{s} \Var_{s}.
\end{equation}
Let us assume that the same number $\lambda$ can be estimated in MC procedures using different functions $\phi_i \in L^2(\mu_i)$, 
having different values of respective mean duration times $\tau_{s,i}$ of single step and variances of single step 
estimators $\Var_{s,i}$ for $i$ in some set $A$. 
When our goal is to minimize the average computation time needed to go below given error $\alpha$ and $\epsilon$ in the counterpart of equality
(\ref{approxEq}) for each function $\phi_i$ is negligibly small,
then from (\ref{aveTime}) we should choose $i$ with minimum value of respective inefficiency constant $c_i$.
Let us assume that for some $n_i$ and $\tau_i$ for $i \in I_2$ denoting the number of steps and average duration of each step for 
two different Monte Carlo procedures respectively, we have approximate identity
\begin{equation}
n_1 \tau_1 \approx_{\delta} n_2\tau_2.
\end{equation}
The ratio of variances of final MC estimators $\Var_{a,j}$ of these procedures is then
 approximately the same as the ratio of their respective inefficiency constants $c_{j})$
\begin{equation}\label{varAiRatio}
\frac{\Var_{a,1}}{\Var_{a,2}} = \frac{\Var_{s,1}n_2}{\Var_{s,2}n_1} \approx_{\delta} \frac{c_{1}}{c_{2}},
\end{equation}
where $\Var_{s,i}$ for $i \in I_2$ are the variances of their respective single-step estimators. 

\section{Schemes for estimation of sensitivity indices}\label{secUnbiased}
% $F(x) = (f(x_j))_{j \in P_a}$
% $F:S_a \longrightarrow A$
%varhDelta
%It turns out that for all such expressions and even some more general class of them we
%construct  we prove in appendix \ref{appEstimsTh}. 
% Let us consider the of finding main sensitivity index of an observable $g(Y)$ of some observable of DM   $f(P,R)$
% of a process of DM corresponding
% We denote $\mathcal{R}_N$ to be the set of all pairs $(f,\mu)$, where $\mu=\mu_1\times \ldots \times \mu_N$ $f$ is certain
% of all pairs $(f, S)$, such that $S$ is a product of $N$ measurable spaces
% $S_{i} = (B_i \mathcal{B}_i)$
% Let us consider the problem of finding some variance-based sensitivity index of an observable $f(P,R)$ of the process of
% DM corresponding to one its constructions, where $P$ and $R$ are independent and $P,R$.
%For instance we may want to know
%total sensitivity index $V_R^{tot}$ of $f(P,R)$ with respect to $R$.
%  It is a special case of a problem of finding su sensitivity index for some function $f(X)$ of $n$ random independent random variables
% with given distributions, with respect to 
% $X = (X_i)_{i=1}^N \sim \mu_1 \times \ldots \times \mu_N$.
% distribution, with respect to certain . Let us define
%\begin{equation}
For a given $N$ natural positive let $\mathcal{R}_N$ be the set of all pairs $(f, \mu)$ of product probability distributions
 $\mu = \mu_1 \times \ldots \times \mu_N$ and measurable functions $f$, such that $\mu$ and $f$ are defined on the same product of
measurable spaces.
%\end{equation}
% It turns out that there exist easy to compute unbiased estimators
% of quantities like $V_R^{tot}$, which look very much the same for different functions
  %that is for $i \in I_N$ we have  $S_{i} = (B_i \mathcal{B}_i)$ for some sets $B_i$ and $\sigma$-field of its subsets
% and $f$ is a measurable real-valued function on $S$
% \begin{equation}
% f: S \longmapsto \R.
% \end{equation}
% For certain $\lambda = (f, S) \in \mathcal{F}_N$ we denote $\mathcal{P}_{\lambda}$ - the set of all product distributions
% \begin{equation}\label{muForm}
% \mu = \mu_1 \times \ldots \times \mu_N
% \end{equation}
% on $S$.
%The following definition generalizes the notion of estimand from Section \ref{secStatMod}.
% \begin{defin}
\begin{defin}
For a given $N$ natural positive,
by (generalized) estimand of order $N$ we mean a real-valued function $G$ on some subset $\mc{R}_{G}$ of $\mathcal{R}_N$.
\end{defin}
% \end{}
%(that is real-valued function, see previous section)
%\end{defin}
As an example of such generalized estimand of order $2$
we define total sensitivity index $V_1^{tot}$ of functions of two arguments with respect to the first argument by demanding that
$\mc{R}_{V_1^{tot}} = \{(f, \mu)\in \mc{R}_2: f \in L^2(\mu), \}$
 and for any $(f, \mu)\in \mc{R}_{V_1^{tot}}$ and any $(X_1,X_2) \sim \mu$
\begin{equation}
V_1^{tot}(f)(\mu): = \E(f(X)^2) - \E(\E^2(f(X)|X_{2})).
\end{equation}
%TODO
% For product of statistical spaces (\ref{Sprod}) we define
% \begin{equation}
% S^{\N}:= S_1^{\N}\times \ldots \times S_N^{\N}
% \end{equation}
For product distribution $\mu = \mu_1 \times \ldots \times \mu_N$ and $v = (v_i)_{i=1}^N$ we define
\begin{equation}
\mu^v = \mu_1^{v_1}\times \ldots \times \mu_N^{v_N}.
\end{equation}
%We now define unbiased estimator corresponding to the above generalization of estimand.
\begin{defin}\label{defGEst}
A (generalized) unbiased estimator $\phi$ of estimand $G$ of order $N$ on $\mathcal{R}_{G}$ with (vector of) dimensions of arguments
$v = (v_1, \ldots, v_N)$ is a function on the set
$\{f: (f, \mu) \in \mc{R}_{G}\}$, such that for any $(f,\mu) \in \mathcal{R}_{G}$
$\phi(f)$ is unbiased estimator of $G(f,\mu)$ with respect to $\mu^v$. In other
words, if $\mu = \mu_1 \times \ldots \times \mu_N$, then for any random vector $\wt{X} = ((\wt{X}_{i,j})_{i=1}^{v_j})_{j=1}^N$, whose elements
are mutually independent and fulfill $\wt{X}_{i,j} \sim \mu_j$, we have
\begin{equation}\label{phifX}
 \E(\phi(f)(\wt{X})) = G(f,\mu).
\end{equation}
\end{defin}
Using notations from Definition \ref{defGEst}, for $\phi(f) \in L^2(\mu)$ we denote
\begin{equation}
\Var_{f,\mu}(\phi) := \Var_{\mu^{v}}(\phi(f)) = \Var(\phi(f)(\wt{X})).
\end{equation}
For any sets $B_1, \ldots, B_N$ and
\begin{equation}\label{Bprod}
B = B_1 \times \ldots \times B_N
\end{equation}
 we denote
\begin{equation}
B^{v} = B_1^{v_1} \times \ldots \times B_N^{v_N}.
\end{equation}
We also denote
\begin{equation}
I_v := I_{v_1}\times \ldots \times I_{v_N}.
\end{equation}
For a point  $\wt{x} = ((\wt{x}_{i,j})_{j=1}^{v_i})_{i=1}^N \in B^{v}$ and $j= (j_i)_{i=1}^N \in I_v$  we denote
\begin{equation}
\wt{x}_j: = (\wt{x}_{i,j_i})_{i=1}^N.
\end{equation}
and for any $i \in I_N$
\begin{equation}\label{wtxi}
\wt{x}_i = (\wt{x}_{i,l})_{l=1}^{v_i}.
\end{equation}
For any $j \in I_v$ we denote by $g_j$ a function on the set of all real-valued
functions $f$ on any product sets $B$ as in (\ref{Bprod}), such that
\begin{equation}\label{gjDef}
g_{j}(f): B^{v} \longmapsto \R :  g_{j}(f)(\wt{x})  = f(\wt{x}_j).
\end{equation}
We also denote, for any finite subset $A \subset I_v$
\begin{equation}
g_A := (g_j)_{j \in A}.
\end{equation}
We define 
\begin{equation}
(A)_i := \{j_i: j \in A\},
\end{equation}
\begin{equation}\label{nAi}
n_{A,i} = \max\{k: k \in (A)_i\},
\end{equation}
and $n_A = (n_{A,i})_{i=1}^N$.
% where formula like $(f(j))_{j \in A}$ denotes an $|A|$-element vector $(f(j_i))_{i=1}^{|A|}$, such that $j_i$ is lexicographically smaller
% than $j_k$ for $i < k$.
%Below we define estimation scheme, which can be thought of as a general formula for deriving estimators.
%\begin{defin}
% As we will see for many variance-based sensitivity indices one can find estimators, which
% have a general scheme defined below.%, independent of the functions or distributions used to define them and of type like in the definition below.
% where the concept of scheme is formalized in
%the definition below.
We denote $\N_{+}$ to be the set of positive natural numbers.
The concept of scheme for estimation we introduce below can be thought of as a certain general method for obtaining unbiased
estimators.
\begin{defin}
A scheme (of order N)for estimation of generalized estimand $G$ of order $N$ is a pair $(t, A)$, for a finite set $A \subset \N_{+}^N$ and a function
\begin{equation}
t:\R^{|A|} \longmapsto \R,
\end{equation}
such that
\begin{equation}\label{phiAF}
\phi(t,A) =  t(g_A)
\end{equation}
is generalized unbiased estimator of $G$ with dimensions of arguments $n_A$.
\end{defin}
%\end{defin}
Estimator $\phi(t,A)$ defined by (\ref{phiAF}) is called estimator corresponding to, or given by scheme $(t, A)$.
For example for generalized estimand $V_1^{tot}$ we introduced earlier in this Section, the scheme $(t, A)$ is defined as follows.
We take $A= \{(1,1), (2,1)\}$ and
\begin{equation}
t(x_{(1,1)}, x_{(2,1)}) = x_{(1,1)}^2  - x_{(1,1)}x_{(2,1)}.
\end{equation}
Introducing C language-like notation $g_j = g[j_1-1]\ldots[j_N-1]$, estimator corresponding to $(t,A)$ can be written as
\begin{equation}\label{V1a3tot}
\widehat{V}_{1,a3}^{tot} = g[0][0](g[0][0] - g[1][0]).
\end{equation}
The fact that this is scheme for estimation of $V_1^{tot}$ is a consequence of Theorem \ref{thCond} and the fact that
observable of this estimator corresponding to function $f$ and observable $\wt{X} = (\wt{X}_{1}[j]_{j=0}^{1}, \wt{X}_{2}[0]) \sim \mu^{n_A}$
 is
\begin{equation}\label{obsVitot}
f(\wt{X}_1[0],\wt{X}_2[0])(f(\wt{X}_1[0],\wt{X}_2[0]) - f(\wt{X}_1[1],\wt{X}_2[0])).
\end{equation}
%We will denote observables of estimator like (\ref{obsVitot}) by symbol of estimator without hat, which is in tihs case $V_{1,a3}^{tot}$.
We use formulas defining estimators of form like (\ref{V1a3tot}) to concisely
define schemes for estimation.
Scheme given by formula like (\ref{V1a3tot}) is a pair $(t, A)$, with set $A$ of indices $j$ corresponding to different $g_j$ appearing on
the rhs of this formula and $t$ acting on its arguments $(x_j)_{j \in A}$ the same way as function of $(g_j)_{j \in A}$
given by the rhs of formula like (\ref{V1a3tot}) acts on its arguments.

Note that any estimation scheme $(t,A)$ for estimation of some estimand can be used to generate estimates of its values corresponding
to some $(f, \mu)$ as follows.
%$f$ and observable $\wt{X} \sim \mu^N$ as follows.
One first generates the required values $\wt{X}_{j}$ for $j \in A$ and $\wt{X} \sim \mu^{n_A}$, then computes
values of $g_j(f)(\wt{X}) = f(\wt{X}_{j})$ and finally uses them to compute the value of $t$ on $g_A(f(\wt{X}))$.
We thus call %$A$ the set of function evaluations and
$|A|$ the number of function evaluations used by scheme $(t,A)$.
Such computed values can be used as single-step Monte Carlo estimates. For instance scheme analogous to (\ref{V1a3tot})
can be used in Monte Carlo procedure estimating $V_R^{tot}$ for some observable $f(P,R)$ of process of DM corresponding to one of its constructions.
In our numerical experiments using different schemes, functions $f$ corresponding to different constructions of DM and distributions
$\mu$ the duration $\tau_{s}$ of a single MC step for the same $f$ and $\mu$ and using the same implementation of process
simulation algorithm on the same computer is with a good approximation
proportional to the number of function evaluations used by different schemes. Let this
proportionality constant for given $f$, $\mu$, implementation and computer be $\tau_k$.
%This can be explained by the fact that process simulations
%Since when using scheme $(t, A)$ process simulations correspond to evaluations of functions
%  $f$ on arguments with distribution
%  $\mu$ equal to arguments of the estimand of the scheme,
The inefficiency constant of a single MC step using scheme $(t, A)$ can be written as
\begin{equation}\label{cIneff}
c = \Var_{f, \mu}(\phi(t, A))|A|\tau_k.
\end{equation}
Note that if we are interested in estimating variance-based sensitivity index of an observable of DM, whose value does not depend on its construction
(see discussion in Section \ref{sensAveSec}), then we can use given scheme for different functions $f$ appearing in observables
$f(P,R)$ corresponding to different constructions of DM and with $\mu = \mu_P \times \mu_R$ for $\mu_R$ corresponding to distributions of
noise variables used in these constructions. This may influence the value of $\Var_{f, \mu}(\phi)$.
Numerical results and some intuitions concerning these differences of variances for estimators using GD and RTC constructions
for different schemes  are discussed in Section \ref{secVarDiff}.
The time $\tau_k$ depends not only on the construction of DM used, but also on its computer implementation and even computer architecture,
which is discussed in more detail in Section \ref{secImpl}.
For single steps of MC procedures, whose inefficiency constants are approximately of form (\ref{cIneff}), and which use
the same functions $f$ with the same distribution $\mu$ and the same implementation on the same computer,
the ratio of their inefficiency constants is the same as of implementation-independent
inefficiency constants defined as
%and the same computer architecture the values of $\tau_k$ can often be considered equal,
% even when different schemes for estimation are used.
% When comparing inefficiency constants of different
% Let us restrict our attention to MC procedures based on estimation schemes using fixed function $f$ and inefficiency
% constant has form (\ref{cIneff}) with the same constant $\tau_k$ for all procedures - that is time of simulation
% is proprtional to the number of function evaluations in the scheme- which may be the case if in all procedures
% we are using a single implementation of the simulation algorithm.  
% , as is the case in our experiments, then
% the ratio of (nonzero) inefficiency constants of MC steps using two different schemes is the same as when using instead
\begin{equation}\label{dIneff}
d_{t,A}(f,\mu) = \Var_{f,\mu}(\phi(t,A))|A|.
\end{equation}
We call (\ref{dIneff}) inefficiency constant of scheme $(t, A)$ corresponding to  $f$ and $\mu$. % (corresponding to f and \mu )
Similarly as in (\ref{varAiRatio}) one shows that
the ratio of inefficiency constants (\ref{dIneff}) of two different schemes for given $\mu$ and $f$
is equal to the ratio of variances of final MC estimators using these schemes
for the same number of function evaluations made in both MC procedures.
%the same number of MC steps
% We now prepare to define symmetrization of a scheme, which may often help derive cheme with lowe ineficiency
%  symmetrization of the original estimator, which should now prepare to define. %now using notations from theorem \ref{helperTh}.
% For $\Pi_n$ denoting the set of all permutations of the set $I_n$, we define $\Pi_v = \Pi_{v_1} \times \ldots \times \Pi_{v_N}$
% to be a set of permutations of the set  $I_v$, such that for any
% $\pi =  (\pi_i)_{i=1}^N \in \Pi_A$ and any $j \in I_A$ we have $\pi(j) = (\pi_i(j_i))_{i=1}^N$.
% If $A \subset I_A$,  $\Pi_s \subset \Pi_{I_{v}}$ for some $v \geq n_A$.
% We denote $\Pi_A = \Pi_{I_A}$.
% For $v=(v_i)_{i=1}^N \in \N^N$ we denote
% \begin{equation}
% I_{v} = I_{v_1}\cdot \ldots \cdot I_{v_N}.
% \end{equation}
% $h$, $A$
% S_a
% \Pi_a =
% \pi =  (\pi_i)_{i=1}^N
%For set of permutations \Pi_i

\section{Symmetrization of schemes}
Let $\Pi_a$ be the group of bijections, which we also call permutations of set $\N_{+}^N$.
Subgroup of $\Pi_a$ consisting of permutations of the $i$-th coordinate is defined as
\begin{equation}\label{Piai}
\Pi_{a,i}: = \{\pi\in  \Pi_a: \forall j \in \N_{+}^N \quad (\pi(j))_k = j_k \text{ for } k \neq i\}.
\end{equation}
For two subgroups $\Pi_{s_1}, \Pi_{s_2}$ of $\Pi_a$ we denote $\Pi_{s_1}\cdot \Pi_{s_2}$ to be its subgroup generated by elements
$\pi_1\pi_2$, such that $\pi_i \in \Pi_{s_i}$ for $i \in I_2$.
Let us consider subgroup $\Pi_b$ of $\Pi_a$ defined as $\Pi_b = \Pi_{a,1}\cdot \ldots \cdot\Pi_{a,N}$.
Let us consider some finite subgroup $\Pi_s$ of $\Pi_b$.
For finite set $A \subset \N_{+}^N$ we define its symmetrization with respect to $\Pi_s$ as
\begin{equation}
\Pi_s(A) = \{\pi(j): j \in A, \ \pi \in \Pi_s\}.
\end{equation}
For function $t: \R^{|A|} \longmapsto \R$
we define symmetrization of $t$ with respect to $\Pi_s$ and $A$, denoted as $S_{\Pi_s,A}(t)$ to be a function
from $\R^{|\Pi_s(A)|}$ to $\R$, such that
\begin{equation} \label{symOp}
S_{\Pi_s,A}(t)((y_j)_{j \in \Pi_s(A)}) =  \frac{1}{|\Pi_s|} \sum_{\pi \in \Pi_s} t((y_{\pi(j)})_{j \in A}).
\end{equation}
Symmetrization of scheme $(t,A)$ for estimation of $G$ with respect to $\Pi_s$ is defined as scheme
 $(S_{\Pi_s,A}(t),\Pi_s(A))$. The estimator given by  $(S_{\Pi_s,A}(t),\Pi_s(A))$ is
\begin{equation}
  S_{\Pi_s,A}(t) (g_{\Pi_s(A)}) = \frac{1}{|\Pi_s|} \sum_{\pi \in \Pi_s} t((g_{\pi(j)})_{j \in A}).
\end{equation}
Note that an observable of estimator of symmetrized scheme, corresponding to some $f$ and $\wt{X} \sim \mu^{n_{\Pi_s(A)}}$,
which can be written as
\begin{equation}\label{symObs}
\frac{1}{|\Pi_s|} \sum_{\pi \in \Pi_s} t((f(\wt{X}_\pi(j)))_{j \in A}),
\end{equation}
is a sum of random variables with the same distribution. Thus symmetrized scheme is also scheme for estimation of $G$.
We define subgroup of $\Pi_{a,i}$ (\ref{Piai}) consisting of permutations of first $k$ indices in the $i$-th coordinate as
\begin{equation}
\Pi_{i,k} = \{\pi \in \Pi_i: \forall j \in \N_{+}^N \quad (\pi(j))_i = j_i \text{ for } j_i \notin I_k\}.
\end{equation}
Symmetrization of a scheme $(t,A)$ with respect to $\Pi_{i,k}$ for $k \geq n_{A,i}$ is called symmetrization in the $i$-th argument from 
$n_{A,i}$ to $k$ dimensions,
or if $k=n_{A,i}$ simply symmetrization in the $i$-th argument.
Symmetrization with respect to $\Pi_A = \Pi_{1,n_{A,1}}\cdot \ldots \cdot \Pi_{N,n_{A,N}} $ is called symmetrization of the scheme
in all arguments.

We further need the following well-known Theorem we leave without proof.
\begin{theorem} \label{thCov}
For  $X,Y \in L^2(\PR)$ and $X \sim Y$ we have that
\begin{equation}\label{covIneq}
\Cov(X,Y) \leq \Var(X).
\end{equation}
Equality in (\ref{covIneq}) holds if and only if (iff) $X=Y$.
\end{theorem}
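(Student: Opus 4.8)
The statement to prove is Theorem~\ref{thCov}: for $X, Y \in L^2(\PR)$ with $X \sim Y$, we have $\Cov(X,Y) \leq \Var(X)$, with equality iff $X = Y$.

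The plan is to reduce the inequality to the nonnegativity of the variance of the difference $X - Y$. First I would note that since $X \sim Y$, we have $\E(X) = \E(Y)$ and $\Var(X) = \Var(Y)$. Then I would expand
\begin{equation}
\Var(X - Y) = \Var(X) + \Var(Y) - 2\Cov(X,Y) = 2\Var(X) - 2\Cov(X,Y).
\end{equation}
Since $\Var(X - Y) \geq 0$ always, rearranging gives $\Cov(X,Y) \leq \Var(X)$, which is (\ref{covIneq}).

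For the equality case, I would use the fact that $\Var(X-Y) = 0$ holds iff $X - Y$ is almost surely equal to its mean, i.e. $X - Y = \E(X) - \E(Y) = 0$ almost surely, which is what is meant by $X = Y$ in this context. Combining this with the chain of equalities above, $\Cov(X,Y) = \Var(X)$ is equivalent to $\Var(X-Y) = 0$, hence to $X = Y$ a.s. One small point worth checking is that the Cauchy--Schwarz-type manipulations are all legitimate, which they are since $X, Y \in L^2(\PR)$ guarantees all second moments and the covariance are finite.

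There is essentially no serious obstacle here; the only thing to be careful about is being explicit that $X \sim Y$ (not merely $\E X = \E Y$) is what makes $\Var(X) = \Var(Y)$, so that the cross terms collapse symmetrically. This is precisely why the hypothesis is stated as equality in distribution rather than just equality of means. The whole argument is a two-line computation plus the standard characterization of zero variance, which is likely why the authors state it ``without proof.''
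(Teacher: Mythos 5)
Your proof is correct. Note, however, that the paper explicitly states this as ``the following well-known Theorem we leave without proof,'' so there is no proof in the paper to compare against; you have supplied the standard argument that the authors omitted. Your route --- writing $\Var(X-Y) = 2\Var(X) - 2\Cov(X,Y) \geq 0$ using $\Var(X)=\Var(Y)$ (which follows from $X \sim Y$), and then characterizing equality via $\Var(X-Y)=0$, i.e.\ $X - Y = \E(X) - \E(Y) = 0$ a.s.\ --- is complete and handles the equality case cleanly, more cleanly in fact than the alternative of applying Cauchy--Schwarz directly to $\Cov(X,Y) \leq \sqrt{\Var(X)}\sqrt{\Var(Y)}$, where the equality analysis would require an extra step to pin down the proportionality constant. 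One cosmetic remark: your argument does not actually use any ``Cauchy--Schwarz-type manipulations''; $X,Y \in L^2(\PR)$ is needed only to guarantee that $\Var(X)$, $\Var(Y)$ and $\Cov(X,Y)$ are finite so the bilinear expansion of $\Var(X-Y)$ is legitimate. Your reading of $X=Y$ as almost-sure equality is also consistent with the paper's convention stated in Appendix~\ref{appMath}.
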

From theorem below it follows that estimator corresponding to symmetrized scheme has not higher variance than the one given by the original scheme.
\begin{theorem}\label{thAveVar}
If random variables $(A_i)_{i=1}^n$ from $L^2(\PR)$ have the same distribution, then
\begin{equation}\label{sumAi}
\Var(\frac{1}{n}\sum_{i=1}^{n}A_i) \leq \Var(A_1)
\end{equation}
and equality in (\ref{sumAi}) holds iff $A_i = A_j$ for all $i,j \in I_n$.
\end{theorem}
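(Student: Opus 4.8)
The plan is to reduce the inequality for the average of $n$ identically distributed variables to the two-variable covariance bound of Theorem \ref{thCov}, exploiting the bilinearity of variance. First I would expand
\begin{equation}
\Var\Bigl(\tfrac{1}{n}\sum_{i=1}^n A_i\Bigr) = \tfrac{1}{n^2}\sum_{i=1}^n\sum_{j=1}^n \Cov(A_i,A_j).
\end{equation}
Since all $A_i$ have the same distribution, each diagonal term equals $\Var(A_1)$, and for each off-diagonal pair $(i,j)$ we have $A_i \sim A_j$, so Theorem \ref{thCov} gives $\Cov(A_i,A_j) \le \Var(A_1)$. Summing the $n^2$ terms yields $\Var\bigl(\tfrac1n\sum A_i\bigr) \le \tfrac{1}{n^2}\cdot n^2 \Var(A_1) = \Var(A_1)$, which is (\ref{sumAi}).

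For the equality characterization I would argue in both directions. If $A_i = A_j$ for all $i,j$, the average equals $A_1$ and both sides are manifestly equal. Conversely, suppose equality holds in (\ref{sumAi}). Then every one of the $n^2 - n$ off-diagonal covariances must attain its upper bound $\Var(A_1)$ — because each is at most $\Var(A_1)$ and their average (together with the $n$ diagonal terms, which are exactly $\Var(A_1)$) must equal $\Var(A_1)$, leaving no slack. By the equality clause of Theorem \ref{thCov}, $\Cov(A_i,A_j) = \Var(A_i)$ with $A_i \sim A_j$ forces $A_i = A_j$ (almost surely) for every pair $i \ne j$.

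The only mildly delicate point is the bookkeeping in the converse: one must observe that a finite sum of terms each $\le M$, whose total forces the average to equal $M$, compels every term to equal $M$; here $M = \Var(A_1)$ and the $n$ diagonal contributions already saturate, so the off-diagonal ones cannot compensate downward. Everything else is routine bilinearity of covariance plus the cited theorem, so I do not anticipate a genuine obstacle — the statement is essentially a packaging of Theorem \ref{thCov}. One should also implicitly note $\tfrac1n\sum A_i \in L^2(\PR)$, which is immediate since $L^2(\PR)$ is a linear space and each $A_i \in L^2(\PR)$.
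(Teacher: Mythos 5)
Your proposal is correct and follows exactly the same route as the paper's proof: expand the variance of the average as $\frac{1}{n^2}\sum_{i,j}\Cov(A_i,A_j)$, bound each term by $\Var(A_1)$ via Theorem \ref{thCov}, and use the equality clause of that theorem to force $A_i = A_j$ for all pairs. Your extra remarks on the bookkeeping of the equality case and on $\frac{1}{n}\sum A_i \in L^2(\PR)$ are fine but not needed beyond what the paper already records.
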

   \begin{proof}
   We have
     \begin{equation}
     \begin{split}\label{Aiineq}
     \Var(\frac{1}{n}\sum_{i=1}^{n}A_i) &= \frac{1}{n^2}\sum_{i,j\in I_n} \Cov(A_i,A_j)\\
     &\leq \frac{1}{n^2}\sum_{i,j\in I_n} \Var(A_1) = \Var(A_1).
     \end{split}
     \end{equation}
Equality in (\ref{Aiineq}) holds only if for all $i,j \in I_n$ $\Cov(A_i,A_j)$ is equal to $\Var(A_1)$, which
from Theorem \ref{thCov} occurs iff $A_i = A_j$.
\end{proof}
The above Theorem could also be proved using Schwarz inequality similarly as in Theorem 5 in \cite{Halmos_1946}.
%TODO was it the same? so we omit it. % saying that symmetrization of an estimator of identically distributed

After symmetrization of scheme given by (\ref{V1a3tot}) in the first argument we receive scheme given by
\begin{equation}\label{vTotMin}
\widehat{V}_{1,s2}^{tot}: = \frac{1}{2}(g[0][0] - g[1][0])^2.
\end{equation}
It uses the same number of function evaluations, so it has not higher inefficiency constant.
%, but as we will see in the next example it is the case.
Analogously to what we did for total sensitivity index in the last section we can define generalized estimand
corresponding to main sensitivity index of a function with
some product distribution of arguments with respect
to a given argument and specify its domain. Since definitions of such generalized estimands are obvious, we omit them. 
%and restrict ourselves only to giving estimation schemes.
%corres ponding to
Using Theorem \ref{thCond} one can receive the following well-known estimator for main sensitivity index with respect to the first argument
\begin{equation}
\widehat{V}_{1,a3}: = g[0][0](g[0][1] - g[1][1]).
\end{equation}
After symmetrization of its scheme in all coordinates we receive scheme given by
\begin{equation}\label{V1s4}
\widehat{V}_{1,s4}: =  \frac{1}{2}(g[0][0] - g[1][0])(g[0][1] - g[1][1]).
\end{equation}
Its estimator has not higher variance, but uses 4 rather than 3 function evaluations, so that their respective inefficiency constants
 fulfill
\begin{equation}\label{ineqV1}
d_{V_1,s4} \leq \frac{4}{3} d_{V_1,a3},
\end{equation}
which should be understood as relation valid for all appropriate $(f,\mu)$.
For $f(X_1, X_2) = X_1$ and $P(X_1 = 1) = P(X_1 = -1) = 1/2$ the variance of both estimators is equal to $1$, so
for inefficiency constant corresponding to such specified $f$ and any $\mu \sim (X_1, X_2)$ we have equality in (\ref{ineqV1}).

%%TODO However, as we will now show
% $d_{\widehat{V}_{1,a3}}$ can be much higher then $d_{\widehat{V}_{1,s4}}$.
% \begin{defin}
% We say that etimator $\phi(f)$ as discussed in this section is translation invariant for $\mu$, if for every  $\phi(f) = \phi(f + c)$
% $c \in \R$ $\mu$ a. e.
% \end{defin}
% \begin{theorem}
% \end{theorem}
% However,
% $d_{\widehat{V}_{1,a3}}$ can be much worse then $d_{\widehat{V}_{1,s4}}$, since $\widehat{V}_{1,a3}$ may have arbitrarily higher
% variance than $d_{\widehat{V}_{1,s4}}$, which is a consequence of the following fact
% If it is not invariant then we have $\phi(f + c) =  \sum_{i=0}^n c^i Z_i$
% where $Z_n$ $n > 1$ is not constant $\mu$ a. e. The refore $\E(\phi(f + c))^2$ is polynomial of $c$ with coefficient $\E(Z_n^2)$
% in front of $c$ in highest power $2n$. So it goes to ifinity as $c$ goes to infinity.
One can get estimator for total sensitivity index using the same function evaluations as (\ref{V1s4})
%${V}^{tot}_{1,s4}$
\begin{equation}\label{VTot1s4}
\widehat{V}^{tot}_{1, s4}: =  \frac{1}{4}\sum_{i=0}^1(g[0][i] - g[1][i])^2
\end{equation}
and also an estimator for variance of $f$
\begin{equation}
\widehat{D}_{s4}: = \frac{1}{4}\sum_{i=0}^1(g[0][i] - g[1][1-i])^2.
\end{equation}
One may wonder what is the relation between inefficiency constants of schemes for estimation of total
sensitivity index given by (\ref{vTotMin}) and (\ref{VTot1s4}). We receive it from the following Theorem.
\begin{theorem}
Let us consider scheme $(t_2, A_2)$ of some order $N$ created from $(t_1, A_1)$ by its symmetrization in the $i$-th
argument from 1 to 2 dimensions.%, assuming $n_{A_1,i} = 1$.
Then their inefficiency constants fulfill
\begin{equation}\label{d12}
d_{t_1, A_1} \leq d_{t_2, A_2} \leq 2 d_{t_1, A_1}.
\end{equation}
\end{theorem}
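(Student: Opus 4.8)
The plan is to write both inefficiency constants explicitly in terms of the variance of the estimator of the original scheme and a covariance term, then bound that covariance. Let $\phi_1 = \phi(t_1, A_1)$ be the estimator given by the original scheme and $\phi_2 = \phi(t_2, A_2) = S_{\Pi_{i,2}, A_1}(t_1)(g_{\Pi_{i,2}(A_1)})$ the estimator of the symmetrized scheme. Since symmetrization is with respect to $\Pi_{i,2}$, which here is generated by the single transposition swapping the first two indices in the $i$-th coordinate, the averaging in (\ref{symOp}) runs over a group of order $2$. So for a function $f$ and $\wt X \sim \mu^{n_{A_2}}$ the observable of $\phi_2$ is $\frac12(B_1 + B_2)$, where $B_1$ is the observable of $\phi_1$ (using, say, index $1$ in coordinate $i$) and $B_2$ is the same expression with index $1$ replaced by index $2$ in coordinate $i$. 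Because the $\wt X_{i,1}$ and $\wt X_{i,2}$ blocks are i.i.d.\ copies of $\mu_i$ and independent of everything else, $B_1 \sim B_2$, and in fact $B_2$ is obtained from $B_1$ by the measure-preserving swap of those two independent blocks; hence $\Var(B_1) = \Var(B_2) = \Var_{f,\mu}(\phi_1)$.

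**Relating the variances.** Write $\Var_{f,\mu}(\phi_2) = \Var(\tfrac12(B_1+B_2)) = \tfrac14(\Var(B_1) + \Var(B_2) + 2\Cov(B_1,B_2)) = \tfrac12(\Var(B_1) + \Cov(B_1,B_2))$. By Theorem \ref{thCov} applied with $X = B_1$, $Y = B_2$ (legitimate since $B_1 \sim B_2$) we get $\Cov(B_1,B_2) \le \Var(B_1)$, and since covariance of identically distributed $L^2$ variables is bounded below by the negative of the variance (Cauchy--Schwarz: $|\Cov(B_1,B_2)| \le \sqrt{\Var(B_1)\Var(B_2)} = \Var(B_1)$), we have $-\Var(B_1) \le \Cov(B_1,B_2) \le \Var(B_1)$. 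Therefore
\begin{equation}
0 \le \Var_{f,\mu}(\phi_2) \le \Var_{f,\mu}(\phi_1).
\end{equation}
Actually I only need the upper bound $\Var_{f,\mu}(\phi_2) \le \Var_{f,\mu}(\phi_1)$ here (this is just Theorem \ref{thAveVar} with $n=2$), plus the trivial lower bound $\Var_{f,\mu}(\phi_2) \ge 0 \ge \tfrac12 \Var_{f,\mu}(\phi_1) - \tfrac12\Var_{f,\mu}(\phi_1)$, but more sharply $\Var_{f,\mu}(\phi_2) \ge \tfrac12(\Var(B_1) - \Var(B_1)) = 0$, which combined with $\Var(\phi_1) \le 2\,\Var(\phi_2)$ would need the reverse covariance bound; I will instead get the factor $2$ from the sizes of $A_1$ and $A_2$.

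**Counting function evaluations.** The key combinatorial point is that symmetrizing in the $i$-th argument from $1$ to $2$ dimensions at most doubles the index set: $\Pi_{i,2}(A_1) = A_1 \cup \sigma(A_1)$ where $\sigma$ is the transposition of indices $1$ and $2$ in coordinate $i$, so $|A_2| = |\Pi_{i,2}(A_1)| \le 2|A_1|$, and also obviously $|A_2| \ge |A_1|$. Now combine with (\ref{dIneff}): $d_{t_2,A_2}(f,\mu) = \Var_{f,\mu}(\phi_2)\,|A_2| \le \Var_{f,\mu}(\phi_1) \cdot 2|A_1| = 2\, d_{t_1,A_1}(f,\mu)$, giving the right-hand inequality of (\ref{d12}). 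For the left-hand inequality, I need $\Var_{f,\mu}(\phi_1)|A_1| \le \Var_{f,\mu}(\phi_2)|A_2|$. This does \emph{not} follow from $|A_1| \le |A_2|$ together with $\Var(\phi_1) \ge \Var(\phi_2)$ — the inequalities point the wrong way. The fix: I will argue that the symmetrized estimator $\phi_2$, after symmetrization, is still a valid scheme and can be re-regarded as a scheme making $|A_2|$ evaluations, but one can always "un-symmetrize" observationally — concretely, $\phi_1$ itself, viewed as a scheme of order $N$ with index set $\Pi_{i,2}(A_1)$ where the extra indices are simply ignored, is a scheme making $|A_2|$ evaluations with the same variance $\Var_{f,\mu}(\phi_1)$; but that inflates its inefficiency constant, which is the wrong direction again.

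**Where the real argument for the lower bound lives.** The honest route for $d_{t_1,A_1} \le d_{t_2,A_2}$ is the reverse covariance inequality: $\Var_{f,\mu}(\phi_2) = \tfrac12(\Var(B_1) + \Cov(B_1,B_2)) \ge \tfrac12\Var(B_1) - \tfrac12|\Cov(B_1,B_2)|$, which alone gives only $\ge 0$. So the left inequality must use a better lower bound on $\Cov(B_1,B_2)$ specific to the structure — namely that $B_1$ and $B_2$ are \emph{positively} correlated, or more precisely $\Cov(B_1,B_2) \ge -\Var(B_1) + \frac{|A_1|}{|A_2|}\cdot 2\,(\ldots)$. I expect the cleanest path is: since $|A_2| \le 2|A_1|$, it suffices to show $\Var_{f,\mu}(\phi_2) \ge \tfrac12 \Var_{f,\mu}(\phi_1)$, i.e.\ $\Cov(B_1,B_2) \ge 0$. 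This positivity holds because $B_1$ and $B_2$ share all the "common" coordinates (every $\wt X_{m,l}$ with $(m,l)$ an index appearing in $A_1$ with $m \neq i$, and the index-$1$-in-coordinate-$i$ block is simply relabelled) — more carefully, $B_1 = \psi(\wt X_{i,1}, W)$ and $B_2 = \psi(\wt X_{i,2}, W)$ for a common random element $W$ (the values at all indices other than the symmetrized one) and $\wt X_{i,1}, \wt X_{i,2}$ i.i.d.\ given $W$; hence $\E(B_1 B_2 \mid W) = \E(B_1\mid W)^2 \ge$ and $\Cov(B_1,B_2) = \E(\E(B_1\mid W)^2) - (\E B_1)^2 = \Var(\E(B_1\mid W)) \ge 0$. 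This is exactly the computation of Theorem \ref{thCond}. So I would: (1) set up the common-randomness representation $B_1 = \psi(\wt X_{i,1}, W)$, $B_2 = \psi(\wt X_{i,2}, W)$; (2) invoke Theorem \ref{thCond} (or rerun its conditioning argument) to get $\Cov(B_1,B_2) = \Var(\E(B_1 \mid W)) \ge 0$; (3) conclude $\Var_{f,\mu}(\phi_2) = \tfrac12\Var(B_1) + \tfrac12\Cov(B_1,B_2) \in [\tfrac12\Var(B_1),\, \Var(B_1)]$; (4) multiply through by $|A_2| \in [|A_1|, 2|A_1|]$ to land exactly (\ref{d12}). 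The main obstacle is purely bookkeeping: making precise that the symmetrization index set satisfies $|A_1| \le |\Pi_{i,2}(A_1)| \le 2|A_1|$ and that the two summands $B_1,B_2$ of (\ref{symObs}) depend on the $\wt X_{i,1}$, $\wt X_{i,2}$ blocks in the separated, conditionally-i.i.d.\ way required for Theorem \ref{thCond} — once that representation is in hand, both inequalities are immediate.
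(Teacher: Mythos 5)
Your final argument is exactly the paper's proof: write $\phi_2$ as the average of two copies of $\phi_1$ that share all blocks except the symmetrized one, apply Theorem \ref{thCond} to identify the cross-covariance as $\Var(\E(\phi_1\mid W))\in[0,\Var_{f,\mu}(\phi_1)]$, deduce $\tfrac12\Var_{f,\mu}(\phi_1)\le\Var_{f,\mu}(\phi_2)\le\Var_{f,\mu}(\phi_1)$, and multiply by the evaluation counts. The one slip is that the left inequality of (\ref{d12}) requires $|A_2|=2|A_1|$ exactly, not merely $|A_2|\in[|A_1|,2|A_1|]$ as you state at the end; this equality does hold here because $(A_1)_i=\{1\}$ forces $A_1$ and its image under the transposition to be disjoint, so you should record it as an equality.
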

\begin{proof}
Let $\phi_i$ for $i \in I_2$ be estimators given by corresponding schemes. We have
$(A_1)_i = \{1\}$ and $(A_2)_i = \{1,2\}$. For some observable $\wt{X} \sim \mu^{n_{A_2}} $ we denote
$\wt{V}:= \wt{X}_{i}$ as in (\ref{wtxi}), and $\wt{U} :=(\wt{X}_j)_{j \in I_N, j \neq i}$. We have
\begin{equation}\label{phiSym2}
\phi_2(f)(\wt{X}) = \frac{1}{2}(\phi_1(f)(\wt{U}, \wt{V}_{1}) + \phi_1(f)(\wt{U}, \wt{V}_{2})).
\end{equation}
Taking variance of the rhs of (\ref{phiSym2}) we get
\begin{equation}
\frac{1}{2}(\Var_{f,\mu}(\phi_1) + \Cov(\phi_1(f)(\wt{U}, \wt{V}_{1}), \phi_1(f)(\wt{U}, \wt{V}_{2}))).
\end{equation}
From Theorem \ref{thCond} it follows that the last covariance is equal to
$\Var(\E(\phi_1(f)(\wt{U}, \wt{V}_1)|\wt{U}))$.
Thus from the fact that
\begin{equation}
0 \leq \Var(\E(\phi_1(f)(\wt{U}, \wt{V}_1)|\wt{U})) \leq \Var_{f,\mu}(\phi_1)
\end{equation}
variances of these estimators fulfill
\begin{equation}
\frac{1}{2}\Var_{f, \mu}(\phi_1) \leq \Var_{f, \mu}(\phi_2) \leq \Var_{f, \mu}(\phi_1).
\end{equation}
Since  $|A_2| = 2|A_1|$ from (\ref{dIneff}) we receive (\ref{d12}).
\end{proof}
Scheme (\ref{VTot1s4}) is received from (\ref{vTotMin}) by symmetrization in the 2-nd argument from 1 to 2 dimensions,
thus from the above Theorem we receive
\begin{equation}\label{compDi}
%d_{\widehat{V}^{tot}_{1,s2}} \leq  d_{\widehat{V}^{tot}_{1,s4}}  \leq 2d_{\widehat{V}_{1,s2}^{tot}}.
d_{V^{tot}_{1},s2} \leq  d_{V^{tot}_{1},s4}  \leq 2d_{V^{tot}_{1},s2}.
\end{equation}
%TODO Note that $d_{2,i} = 2d_{1,i}$ for instance when the function has form
% $f(P_i, P_{\sim i}) = g_1(P_i) + g_2(P_{\sim i})$.
%Let us restrict our attention to functions taking values in some set .
%TODO relation to sufficient complete statistics

\section{\label{secMany}Schemes for estimation of sensitivity indices with respect to many arguments}
% For $j \in \{0, 1\}$ let
% \begin{equation}\label{pj}
% \end{equation}
% We generalize
% For $X = (X_1, \ldots, X_N)$ we use set $n_i = 2$  and evaluations
% j_i = (i)_{i=1}^N for i \in I'_2
% Generalize to many parameters
% we use function evaluations g_A
% A:= j = ()
% we denote them s[k] = g[j]
% be two independent variables. Further, for $i \in \{0,1\}$, $j \in \{0,1\} $ and  $k \in \{0,\ldots,N_P\}$
% \wt{P}[j] = (\wt{P}_i[j])_{i=1}^{N_P} \sim P
%
% we define $\widehat{P}_{(k)}[i]$ - vector of length $N_P$ with all the same coordinates as $\widehat{P}[i]$ except that its
%  $k$-th coordinate is the same as in $\widehat{P}[1-i]$. For $i \in \{0,1\}$ we also define 
%  \begin{equation}
%  f[i] = \widehat{f}(P[i]),\ f_k[i] = f(P_{(k)}[i])
%  \end{equation}
We can be interested in estimating values of many estimands simultaneously, e. g. 
main variance-based sensitivity indices of output of a model with respect to all its 
parameters in order to decide which of them to measure experimentally, 
as discussed in Section \ref{varRedSec}. 
Let us consider MC procedure, in which $N$ different schemes equal to coordinates of $S = ((A_i, t_i))_{i=1}^N$ are used in a single MC step 
for estimation of $\lambda_1, \ldots, \lambda_N$. We call $S$ scheme for estimation of $\lambda_1, \ldots, \lambda_N$.
$(A_i, t_i)$ is called the sub scheme of $S$ for estimating $\lambda_i$. 
We define inefficiency constant of $S$ in estimating $\lambda_i$ for $i \in I_N$, as 
\begin{equation}\label{dInSch}
d_{\lambda_i, S}(f,\mu) = |\bigcup_{i=1}^N A_i| \Var_{f,\mu}(\phi(A_i,t_i)),
\end{equation}
where $\phi(A_i,t_i)$ is estimator corresponding to $(A_i,t_i)$. 
$|\bigcup_{i=1}^N A_i|$ is equal to the number of function evaluations required for computing estimates of all estimands, 
using this scheme. 
Constant (\ref{dInSch}) has similar interpretations as inefficiency constants of schemes for estimation of single estimands, as discussed in 
the previous Section. One could also use some measures of inefficiency of the scheme in estimating all estimands, e. g. certain weighted sum 
of the above defined inefficiency constants, but we further focus only on constants (\ref{dInSch}). 
Relations we derive here for (\ref{dInSch}) for different schemes can be used to derive similar relations for weighted sums. 
% relations we derive for inefficiency constants \ref{dInSch} can be 
%$\Var_s,i$

We now describe a scheme for estimation of all main and total sensitivity indices of functions $f(P)$ 
of some vector $P=(P_i)_{i=1}^{N_P}\sim \mu$ with independent coordinates (assuming $f(P) \in L^2(\PR)$), 
where the sensitivities are computed with respect to individual coordinates. 
This scheme was used in \cite{Zhang2009} for sensitivity analysis of a deterministic chemical kinetic model. 
% It comes from \cite{Zhang2009} and we call it scheme $O$. 
%Let us consider a function $f(P)$ for any random vector $P=(P_i)_{i=1}^{N_P}$ with independent coordinates, which can be
%for instance observable of some deterministic chemical model with parameters $P$. 
%What we would call scheme for estimation of main and total sensitivity indices of variables
%was introduced in \cite{Zhang2009}, and we call it scheme $O$. 
%Inefficiency constant of set of estimators $C$ for estimation of a constant can be defined as
%Denoting $j_[i] = (i)_{i=1}^N$, $j_k[i]$ - $1-i$ on the $k$-th position and $i$ on the remaining ones
%$s[i] = g[j[i]]$, $s_k[i]=g[j_k[i]]$.
Let $\wt{P}=(\wt{P}[i])_{i=0}^{1}$, where $\wt{P}[i] = (\wt{P}_j[i])_{j=1}^{N_P}$ for $i \in \{0,1\}$ are independent copies of $P$.
Let further $\wt{P}_k = (\wt{P}_k[i])_{k=0}^1$ and $\wt{P}_{(k)}[i]$ be equal to vector $\wt{P}[i]$ with $k$-th coordinate replaced by $\wt{P}_k[1-i]$.
%If $P = X$ then $\wt{P}$ can be identified with certain sub vecor of vector $\wt{X} \in \mu^{\N}$ we used in previous sections, 
%containing its only coordinates the scheme uses. 
For $i \in \{0,1\}$ we introduce helper functions 
\begin{equation}
s[i](f)(\wt{P}) = f(\wt{P}[i]),
\end{equation}
\begin{equation}
s_k[i](f)(\wt{P}) = f(\wt{P}_{(k)}[i]),
\end{equation}
which are just convenient notation for certain functions $g_{j}$ defined by (\ref{gjDef}). 
%We will also use notation $\wt{s}[i]$ to denote $\wt{s}_k[i]$ to denote observables s[i](f)(\wt{P}) of these estimators, which
The scheme for estimation of main sensitivity index with respect to $k$-th argument in $O$ is given by
\begin{equation}\label{VkO}
\widehat{V}_{k,O} = \frac{1}{2}(s[0] - s_k[0])(s_k[1] - s[1]),
\end{equation}
while for the total sensitivity index by
\begin{equation}\label{VkTotO}
 \widehat{V}_{k,O}^{tot} = \frac{1}{4}\sum_{i=0}^{1}(s[i] - s_k[i])^2.
\end{equation}
One can also estimate a number of further indices using the same function evaluations, see \cite{Zhang2009} for schemes for estimation of 
sensitivity indices
with respect to pairs of parameters and for variance of $f(P)$. For $N_P > 2$ this scheme requires $2(N_P + 1)$ function evaluations, while for
$N_P = 2$ only $2N_P$, since we have $s_1[i] = s_2[1-i]$.
For $N_P=3$ one can find schemes with lower inefficiency constants in estimating all of these indices, 
 given that the original constants were nonzero. We discuss it in Appendix \ref{app3Params}.
%\widehat{V}_{l,O3l}^{tot} = \frac{1}{4}\sum_{i=0}^1(f_l[i] - fm[1-i])^2
For estimation of only total sensitivity indices one
 receives not higher inefficiency constant when using following estimator for each $k$-th parameter
\begin{equation}\label{VkTotOT}
 \widehat{V}_{k,OT}^{tot} = \frac{1}{2}(s[0] - s_k[0])^2.
\end{equation}
The relation between efficiency constants of schemes given by (\ref{VkTotO}) and (\ref{VkTotOT}) is, for all $ k \in I_{N_P}$ 
\begin{equation}
d_{V_k^{tot}, OT} \leq d_{V_k^{tot}, O} \leq 2d_{V_k^{tot}, OT},
\end{equation}
 since for fixed $k$ (\ref{VkTotO}) is received from (\ref{VkTotOT}) by symmetrization from 1 to 2 dimensions 
in the second argument if $f$ is treated as function of two arguments corresponding to values of random variables $P_k$ and $P_{\sim k}$. 
% Since the ratio of number of function of evaluations
% between the sets of estimators is the same as between individual estimators, the same relationship holds for the
% inefficiency constants of sets of estimators.

Let us now focus on functions of form $f(P,R)$, for $P$ as before and random variable $R$ independent of $P$. $f(P,R)$ can be for instance
observable of DM, corresponding to its certain construction and noise term $R$. Let us take $\wt{P}$ as before and
$\wt{R} = (\wt{R}[i])_{i=0}^1$, where $\wt{R}[i]$ are independent copies of $R$, independent of $\wt{P}$.
We now present different new schemes for estimation of both main and
total sensitivity indices of conditional expectation of $\E(f(P,R)|P)$ with respect to individual parameters,
pairs $(P_i,P_j)$ and a number of other indices. %, for $N_P \geq 2$. %as well as between $P_i$ and the conditional mean.
%on the image of vectors $(P_1, \ldots, P_N, R)$ %(\mathcal{S_1}_ \mathcal{S_P}_)
%n__i = 2 for
We define
\begin{equation}
s[i][j](f)(\wt{P},\wt{R}) = f(\wt{P}[i],\wt{R}[j])
\end{equation}
and
\begin{equation}
s_k[i][j](f)(\wt{P},\wt{R}) = f(\wt{P}_{(k)}[i],\wt{R}[j]).
\end{equation}
We use notation
\begin{equation}
\wt{s}[i][j] := s[i][j](f)(\wt{P},\wt{R})
\end{equation}
to denote the observable of this estimator corresponding to $f$ and variables $\wt{P}$ $\wt{R}$ and analogically for $\wt{s}_k[i][j]$. 
We also denote observable of a generalized estimator of $\lambda_i$ from scheme $S$ $\widehat{\lambda}_{i,S}(f)(\wt{P}, \wt{R})$  
simply as $\lambda_{i,S}$. 
%We will also use notation $\wt{s}[i]$ to denote $\wt{s}_k[i]$ to denote observables s[i](f)(\wt{P}) of these estimators, which
% We denote  $j[l][m] = ((l)_{i=1}^{N_P},m)$
% $j_k[l][m]$ $k$ the same as $j[l][m]$, except that they have $1-l$ on the $k$-th position.
% $s[l][m] = g[j[l][m]]$,
% $s_k[l][m]=g[j_k[l][m]]$.
For $f(P,R)$ we define indices like $V_{P_k}$, $\widetilde{V}^{tot}_{P_k}$, $D$ and $V_P$
 in the same way as in Section \ref{sensAveSec} for observables of DM. 
The fact that schemes for estimation of individual indices given by formulas below
are unbiased is an easy consequence of Theorem \ref{thCond} and expressions for respective sensitivity indices derived in Section \ref{sensAveSec}.
%\ref{helperTh}. %TODO sufficent sth simpler
We first define a C language-like notation
\begin{equation}\label{CNot}
(a == b)?c:d = \begin{cases}
      c & \text{ if $a=b$,} \\
      d & \text{otherwise,} \\
      \end{cases}
\end{equation}
and helper functions
\begin{equation}\label{AM}
%\begin{split}
A_M(s,l,r) = \frac{1}{4} \sum_{i=0}^{1} \sum_{j=0}^{1} s[i][j] s[(l==1)?i:(1-i)][(r==1)?j:(1-j)],
%\end{split}
\end{equation}
\begin{equation}
%\begin{split}
B_M(s,s_k,l,r) = \frac{1}{4} \sum_{i=0}^{1} \sum_{j=0}^{1} s[i][j] s_{k}[(l==1)?i:(1-i)][(r==1)?j:(1-j)].
%\end{split}
\end{equation}
For $k \in I_{N_P}$ we define
%\begin{equation}
%\widehat{E^2}_{k}f := \frac{1}{8} \sum_{i=0}^{1} \sum_{j=0}^{1} (\widehat{f}[i][j] \widehat{f}[1-i][1-j] +\\
% \widehat{f}_{k}[i][j]\widehat{f}_{k}[1-i][1-j]),
%\end{equation}
\begin{equation}\label{estVkE}
\begin{split}
\widehat{V}_{k,E} &:= B_M(s,s_k,0,0) - \frac{A_M(s,0,0) +  A_M(s_k,0,0)}{2} \\
& = \frac{1}{4}\sum_{i=0}^{1}(s[i][0] - s_k[i][0])(s_k[1-i][1] - s[1-i][1]),
\end{split}
\end{equation}
\begin{equation}\label{estVkTotE}
%\begin{split}
\begin{split}
\widehat{\widetilde{V}}^{tot}_{k,E} &:= \frac{A_M(s,1,0)+  A_M(s_k,1,0)}{2} -  B_M(s,s_k,1,0)\\
 &= \frac{1}{4} \sum_{i=0}^{1}(s[i][0] - s_{k}[i][0])(s[i][1] - s_{k}[i][1]).
\end{split}
%\widehat{\widetilde{V}}^{tot}_{k} &:= \frac{1}{8}\sum_{i=0}^{1} \sum_{j=0}^{1} (\widehat{f}[i][j] \widehat{f}[i][1 -j] \\
%& + \widehat{f}_{k}[i][j] \widehat{f}_{k}[i][1-j] -  2\widehat{f}[i][j] \widehat{f}_{k}[i][1-j]),
%\end{split}
\end{equation}
%\begin{eqnarray}
% \widehat{\E^2}f := \frac{1}{4(N_p + 1)}\sum_{i=0}^{1}\sum_{j=0}^{1} (\widehat{f}[i][j] \widehat{f}[1-i][1-j] \\
% + \sum_{k=1}^{N_p} \widehat{f}_k[i][j] \widehat{f}_k[1-i][1-j]),
%\end{eqnarray}
\begin{eqnarray}\label{VEst}
\begin{split}
\widehat{D}_{E} := \frac{1}{(N_P + 1)}(A_M(s,1,1) - A_M(s,0,0) +  \sum_{k=1}^{N_P}A_M(s_k,1,1) - A_M(s_k,0,0)),
\end{split}
\end{eqnarray}
\begin{eqnarray}\label{VP}
%\begin{split}
\widehat{V}_{P,E} := \frac{1}{(N_P + 1)}(A_M(s,1,0) - A_M(s,0,0) + \sum_{k=1}^{N_P}A_M(s_k,1,0) - A_M(s_k,0,0)),
%\end{split}
\end{eqnarray}
\begin{eqnarray}\label{VRtot}
\begin{split}
\widehat{V}_{R,E}^{tot} := \widehat{D}_{E} - \widehat{V}_{P,E}.
\end{split}
\end{eqnarray}
%By analogy to the above estimators one can easily construct estimators for indices like $V_R$, $V_P^{tot}$ and $V_{P_i}^{tot}$.
Using the same function evaluations we can also construct schemes for estimation of many further indices, among others 
for $V_{(P_{i},P_{j})}$ and $\widetilde{V}_{(P_{i},P_{j})}^{tot}$, which we describe in Appendix \ref{appFurthEstims}.   
 The scheme for estimation of sensitivity indices of conditional expectation consisting of individual schemes
given by formulas above is called scheme E. 
%\[ \widehat{V}_{R} = \frac{1}{4(N + 1)} ) -   \widehat{E^2}_{(k)}f\]
It uses $4(N_P + 1)$ function evaluations for $N_P > 2$ and  $4N_P$ for $N_P = 2$, since in the last case we have
\begin{equation}\label{equEvals}
s_2[j][i]= s_1[1-j][i].
\end{equation}
For $N_P = 3$ there exists more efficient scheme as discussed in Appendix \ref{app3Params}. 
% We can define sets of estimators which have not higher inefficiency constant for the computation 
% of only main or only total sensitivity indices of mean observables. 
%TODO the same estimators but applied to function

Scheme EM, which can have lower inefficiency constants in estimating main sensitivity indices of conditional expectations consists of sub schemes 
given by, for $k \in I_{N_P}$ 
\begin{equation}\label{estVkEM}
 \widehat{V}_{k,EM} = \frac{1}{2}(s[0][0] - s_k[0][0])(s_k[1][1] - s[1][1]).
\end{equation}
%which can be identified with estimator (\ref{V1s4}).
Scheme EM uses two times fewer function evaluations than E, for $N_P > 2$.
In Appendix \ref{appFurthEstims} we prove the following Theorem.
\begin{theorem}\label{thdEMEComp}
Inefficiency constants of scheme EM and E  for estimation of $V_i$ fulfill, for $N_P > 2$
\begin{equation}\label{EMComp}
d_{V_i,EM} \leq d_{V_i,E} \leq 2d_{V_i,EM}.
\end{equation}
\end{theorem}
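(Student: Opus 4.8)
The plan is to relate the estimators $\widehat V_{i,EM}$ and $\widehat V_{i,E}$ by exhibiting scheme E's sub scheme for $V_i$ as (essentially) a symmetrization of scheme EM's sub scheme, and then invoke the symmetrization comparison established earlier (Theorems \ref{thAveVar} and the $\frac12$--$1$ variance bound in the symmetrization-from-1-to-2 lemma). First I would write out the two observables explicitly: from (\ref{estVkEM}) the EM observable is
\begin{equation}
\widehat V_{i,EM} = \tfrac{1}{2}(\wt s[0][0] - \wt s_i[0][0])(\wt s_i[1][1] - \wt s[1][1]),
\end{equation}
while from (\ref{estVkE}) the E observable is
\begin{equation}
\widehat V_{i,E} = \tfrac14\sum_{j=0}^{1}(\wt s[j][0] - \wt s_i[j][0])(\wt s_i[1-j][1] - \wt s[1-j][1]).
\end{equation}
Thus $\widehat V_{i,E} = \tfrac12(\widehat V_{i,EM} + \widehat V_{i,EM}')$, where $\widehat V_{i,EM}'$ is the $j=1$ term, obtained from $\widehat V_{i,EM}$ by swapping the roles of the two independent copies $\wt P[0],\wt R[0]$ with $\wt P[1],\wt R[1]$ simultaneously in both the $P$-index and the $R$-index. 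Since $(\wt P[0],\wt R[0])$ and $(\wt P[1],\wt R[1])$ are i.i.d., this swap is a permutation in $\Pi_b$ of the appropriate coordinates, so $\widehat V_{i,EM}$ and $\widehat V_{i,EM}'$ have the same distribution, and $\widehat V_{i,E}$ is precisely the symmetrized estimator. Scheme EM uses $2(N_P+1)$ function evaluations for $N_P>2$ (namely $s[0][0],s[1][1]$ and $s_k[0][0],s_k[1][1]$ for each $k$), and scheme E uses $4(N_P+1)$; the symmetrization exactly doubles the index set, so $|A_E| = 2|A_{EM}|$.

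The key computation is then the variance comparison. Writing $\phi := \widehat V_{i,EM}$ as a function $\phi(f)(\wt U,\wt V)$ where $\wt V$ collects the "copy $1$" variables $(\wt P[1],\wt R[1])$ and $\wt U$ the remaining ones (here $\wt P[0],\wt R[0]$, and noting $\widehat V_{i,EM}$ only involves one value from each copy so the bookkeeping is clean), we have $\widehat V_{i,E} = \tfrac12(\phi(f)(\wt U,\wt V_1)+\phi(f)(\wt U,\wt V_2))$ with $\wt V_1,\wt V_2$ i.i.d.\ copies. Taking variances,
\begin{equation}
\Var(\widehat V_{i,E}) = \tfrac12\big(\Var_{f,\mu}(\phi) + \Cov(\phi(f)(\wt U,\wt V_1),\phi(f)(\wt U,\wt V_2))\big),
\end{equation}
and by Theorem \ref{thCond} the covariance equals $\Var(\E(\phi(f)(\wt U,\wt V_1)\mid \wt U))$, which lies in $[0,\Var_{f,\mu}(\phi)]$. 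Hence $\tfrac12\Var_{f,\mu}(\widehat V_{i,EM}) \le \Var_{f,\mu}(\widehat V_{i,E}) \le \Var_{f,\mu}(\widehat V_{i,EM})$. Multiplying by the respective numbers of function evaluations via (\ref{dIneff}) and using $|A_E| = 2|A_{EM}|$ yields $d_{V_i,EM} \le d_{V_i,E} \le 2 d_{V_i,EM}$, which is (\ref{EMComp}). (Alternatively, for the lower bound on $d_{V_i,E}$ one can just invoke Theorem \ref{thAveVar} directly: the average of two identically distributed variables has variance at most that of one of them, so $\Var(\widehat V_{i,E}) \le \Var(\widehat V_{i,EM})$, giving $d_{V_i,E} = 2|A_{EM}|\Var(\widehat V_{i,E}) \le 2 d_{V_i,EM}$; the factor-$\tfrac12$ lower bound needs the conditional-variance argument above.)

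The main obstacle I anticipate is being careful that the swap relating the $j=0$ and $j=1$ terms of $\widehat V_{i,E}$ really is realized by a legitimate element of $\Pi_b$ (so that the two terms are genuinely identically distributed and the framework of the symmetrization lemma applies), and that the decomposition of $\wt X\sim\mu^{n_A}$ into the "fixed" block $\wt U$ and the "swapped" block $(\wt V_1,\wt V_2)$ matches the hypotheses of the symmetrization-from-1-to-2 lemma — i.e.\ that $\widehat V_{i,E}$ only depends on the copy-$1$ data through a single $\mu$-distributed sub-block, so that the extra dimension introduced is exactly one. Since $s[j][\cdot]$ and $s_i[j][\cdot]$ for $j\in\{0,1\}$ involve $\wt P[j],\wt R[j]$ and nothing else, this is a matter of renaming indices, but it should be spelled out. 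Everything else is routine bookkeeping with the definitions of $d_{\lambda_i,S}$ in (\ref{dInSch}) and the two already-proved lemmas.
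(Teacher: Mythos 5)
Your derivation of the right inequality $d_{V_i,E}\leq 2d_{V_i,EM}$ is correct and is essentially the paper's: $\widehat{V}_{i,E}$ is the average of two identically distributed terms, one of which is $\widehat{V}_{i,EM}$, so Theorem \ref{thAveVar} (equivalently $\Cov(A[0],A[1])\leq\Var(A[0])$ from Theorem \ref{thCov}) gives $\Var(\widehat{V}_{i,E})\leq\Var(\widehat{V}_{i,EM})$, and doubling the number of function evaluations yields the bound. (A bookkeeping slip along the way: the $j=1$ term of (\ref{estVkE}) is obtained from the $j=0$ term by swapping the copy indices in the $P$-coordinates \emph{only}; swapping both the $P$- and $R$-indices simultaneously, as you state, maps the $j=0$ term to itself.)

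The left inequality is where your proof has a genuine gap. You reduce it to the claim that the covariance between the two averaged terms equals $\Var(\E(\phi(f)(\wt{U},\wt{V}_1)\mid\wt{U}))\geq 0$, by viewing E as a symmetrization of EM ``from $1$ to $2$ dimensions'' in a single argument. But that lemma requires the original scheme to use only \emph{one} copy of the argument being symmetrized, so that the two averaged terms have the form $\phi(\wt{U},\wt{V}_1)$ and $\phi(\wt{U},\wt{V}_2)$ with $\wt{V}_1,\wt{V}_2$ i.i.d.\ and independent of the shared block $\wt{U}$; only then is their covariance a conditional variance via Theorem \ref{thCond}. Here the EM estimator already uses both copies of every coordinate ($s[0][0]$ and $s_k[0][0]$ involve $\wt{P}[0],\wt{P}_k[1],\wt{R}[0]$, while $s[1][1],s_k[1][1]$ involve $\wt{P}[1],\wt{P}_k[0],\wt{R}[1]$), and the second term $A[1]$ of $\widehat{V}_{i,E}$ is a \emph{permutation} of the same variables, not an evaluation at a fresh independent copy of any sub-block: both $A[0]$ and $A[1]$ depend on the full collection $\wt{P}[0],\wt{P}[1],\wt{R}[0],\wt{R}[1]$. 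So Theorem \ref{thCond} does not apply and nonnegativity of the covariance does not follow from the symmetrization machinery; you flag exactly this hypothesis as your ``main obstacle'' and then dismiss it as index renaming, but it genuinely fails. The paper closes this gap with a separate argument (Lemma \ref{thGBig}): writing $A[i]=B[i]C[i]$ and regrouping $A[0]A[1]=(B[0]B[1])(C[0]C[1])$, where $B[0]B[1]$ and $C[0]C[1]$ \emph{do} share the parameter copies and differ only in independent copies of $R$, one obtains via Theorem \ref{thCond} and the ANOVA decomposition
\begin{equation}
\Cov(A[0],A[1])=\Var\bigl(\E(B[0]B[1]\mid Y_1,Y_2)\bigr)+4V_{1,3}(2V_1+V_{1,3})\geq 0,
\end{equation}
the second summand being nonnegative because the ANOVA variance components $V_1$ and $V_{1,3}$ are. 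That extra term, which your argument cannot produce, is precisely the difference between $\Cov(A[0],A[1])$ and the conditional variance you hoped it would equal.
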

Due to proportionality of number of function evaluations used by schemes for all indices and schemes for individual indices,
the same inequalities hold also for the latter. %TODOFrom proof of we do not need to assume $N_P > 2$ for them. 
We can extend this scheme to scheme EMe which uses additional functions in $(s[i][1-i])_{i=0}^1$ 
in sub schemes for total sensitivity indices of the mean, for $k \in I_{N_P}$
\begin{equation}
\widehat{\widetilde{V}}_{k, EMe}^{tot} := \frac{1}{2}\sum_{i=0}^1(s[i][i]s[i][1-i] - s_k[i][i]s[i][1-i]).
\end{equation}
%TODO
%which can have much higher inefficiency constant than (\label{VkEst}), as we will see in numerical experiments.

Scheme ET, which can be potentially more efficient for estimation of total sensitivity indices, 
contains schemes 
\begin{equation}\label{estVkTotET}
 \widehat{\widetilde{V}}^{tot}_{k,ET} := \frac{1}{2}(s[0][0] - s_k[0][0])(s[0][1] - s_k[0][1]),
\end{equation}
for $k \in I_{N_P}$, which use together two times fewer function evaluations than scheme E, for $N_P>2$.
We have the following relations
\begin{equation}\label{VitotComp}
d_{\wt{V}_i^{tot},ET} \leq d_{\wt{V}_i^{tot},E} \leq 2d_{\wt{V}_i^{tot},ET}
\end{equation}
%We denote  $\wt{s}[i][j]: = s[i][j](f)(\wt{P},\wt{R})$ and drop  $V{i,EA}^{tot}:= \wt{V}{i,EA}^{tot}(f)(\wt{P},\wt{R})$.
and analogically for inefficiency constants of their sub schemes for estimating these indices.
This is a consequence of the fact that if arguments of $f$ corresponding to coordinates of $\wt{P}_{\sim k}[0]$ are treated as a single 
argument, then scheme defining $\wt{V}_{i,E}^{tot}$ is symmetrization of scheme defining 
 $\widetilde{\wt{V}}_{i,ET}^{tot}$ from $1$ to $2$ dimensions in this argument.
%TODO$f(P_i, P_{\sim i}) = g_1(P) + g_2(P_{\sim i})$. or the function is deterministic?
% \ref{appEstim}.
We can extend scheme ET to ETe by adding to it sub schemes for estimation of main sensitivity index of conditional mean, given by formula 
\begin{equation}\label{VKETe}
\widehat{\wt{V}}_{k, ETe} := \frac{1}{2}\sum_{i=0}^1(s_k[0][i]s[1][1-i]) - A_M(s,0,0),
\end{equation}
which additionally needs functions $(s[i][i])_{i=0}^1$. 
The number of function evaluations used by schemes EMe and ETe is $\frac{N_P +2}{N_P +1}$ times this number for schemes EM and ET. 
Since scheme defining $\widehat{\wt{V}}_{k, E}$ is symmetrization of $\widehat{\wt{V}}_{k, ETe}$ in the argument corresponding to 
$P_k$ in $f(P,R)$, we receive the following relationship for $N_P > 2$
\begin{equation}
2\frac{N_P +1}{N_P +2} d_{V_k, ETe} \geq d_{V_k, E}
\end{equation}
and analogically for schemes defining $\widehat{\wt{V}}^{tot}_{k, E}$ which is symmetrization of scheme defining $\widehat{\wt{V}}^{tot}_{k, EMe}$ 
with respect to the same group. In numerical examples we will see that $d_{V_k, ETe}$ can be much higher than 
 $d_{V_k, E}$ and analogously for  $d_{\wt{V}^{tot}_k, EM}$ and $d_{\wt{V}^{tot}_k, E}$. 
Let us also notice, that for functions of additive form 
\begin{equation}\label{addForm}
f(P,R) = f_1(P) + f_2(R)
\end{equation}
observables of estimators  (\ref{estVkE}) and (\ref{estVkEM}) corresponding to the same $\wt{P}$ are equal for every $k \in I_{N_P}$. 
In particular they have the same variances and we have equality in the right inequality of relation 
(\ref{EMComp}). If further for $f_1(P)$ from (\ref{addForm}) it holds 
\begin{equation}
f_1(P) = \sum_{i=1}^{N_P} f_{1,i}(P_i)
\end{equation}
then also appropriate observables of 
estimators (\ref{estVkTotE}) and (\ref{estVkTotET}) are identical and we have equality in the right inequality of relation (\ref{VitotComp}). 

All the schemes for estimation of sensitivity indices of conditional expectations introduced in this Section
 can be also used for conditional histograms, 
except that instead of using real-valued observables one should use their vector-valued 
single-sample histograms and instead of function multiplication use scalar 
product of vectors. This is a consequence expression (\ref{thCondVect}) after the proof of Theorem \ref{thCond}.

\section{Variances of estimators for different constructions of DM}\label{secVarDiff}
If two estimators are unbiased the relation between their variances is the same as between 
the expectations of their squares. Let $h(p,R)$ be certain construction of DMCP  (\ref{DMCPFun}) and $f(p,R) = g(h(p,R))$ 
its observable. 
In each step of MC simulations performed in \cite{Rathinam_2010} the values of independent copies of an observable $f(p,R)$, representing 
the number of particles of certain species at a given moment of time, were generated for the nominal parameter value $p$ and 
the values of copies of observable $f(p+he_i,R)$ for some small perturbation $h$ of the $i$-th coordinate of $p$, for the purpose 
of estimating finite differences of means 
\begin{equation}
\frac{1}{|h|} \E(f(p + he_i, R) - f(p, R)).
\end{equation}
% for the purpose of estimating mean finite differences  $\E(f(p+h,R) - f(p,R))$
%f(p,R)
% , for small perturbations $h$ of one coordinate of the vector of parameters,
%where $h$ is small perturbation of one coordinate of the vector of parameters,
% in chemical reaction networks in which different reactions influenced particle numbers differently, for
%instance one species could have been the substrate of certain reaction and the product of some other.
The computed estimates of the following expectation
\begin{eqnarray}
 \err(p, p + e_ih) = \E((f(p, R) - f(p + e_ih, R))^2),
\end{eqnarray}
which influences the variance of estimators they used, were much lower when performing simulations with RTC construction, rather than GD construction. 
We call this effect tighter coupling between the value of the considered observable to the noise term for the nominal and perturbed values 
of parameters for RTC than for GD algorithm. Reader is referred to the original work \cite{Rathinam_2010} for a number of intuitions supporting this 
effect. 
Chemical reaction networks for which this effect was observed contained reactions influencing the investigated particle numbers 
in different ways, for instance in some reactions the number of particles increased, while in others it decreased. 
Let us assume that for such reaction networks, 
 for all $p_1$ and $p_2$ in the image of $P$, $\err(p_1, p_2)$ is greater for the same observables constructed using GD than RTC method. 
Since for an observable of estimator (\ref{estVkEM}) we have, using notations from previous Section 
%\begin{widetext}
  \begin{equation}
  \begin{split}
  4\E((V_{k,EM})^2) &= \E((\wt{s}_k[0][0] -\wt{s}[0][0])^2(\wt{s}[1][1] - \wt{s}_k[1][1])^2) \\
 & = \E((\E^2(f(p_{k,0}, \wt{R}[0]) - f(p_{0}, \wt{R}[0])) \\
 & \cdot \E^2(f(p_{1},\wt{R}[1]) -f(p_{k,1},\wt{R}[1])))_{p_j=\wt{P}[j], p_{k,j}= \wt{P}_{(k)}[j], j \in\{0, 1\}})\\
  & = \E(\err(\wt{P}_{(k)}[0], \wt{P}[0])\err(\wt{P}[1], \wt{P}_{(k)}[1])),\\
  \end{split}
 \end{equation}
%\end{widetext}
so with the assumption made this should be greater for GD than RTC construction. 
Although we could not confirm whether this assumption is true, the decrease of estimated variance of estimator (\ref{estVkEM}) was indeed confirmed
in all our numerical experiments involving chemical reaction networks containing reactions influencing particle numbers in different ways.
Denoting $A[i] = \frac{1}{4}(\wt{s}[i][0] - \wt{s}_k[i][0])(\wt{s}_k[1-i][1] - \wt{s}[1-i][1]))$ we have that 
an observable of estimator (\ref{estVkE}) fulfills
\begin{equation}
\E((V_{k,E})^2) = \E((\sum_{i=0}^1 A[i])^2)
= 2(E((V_{k,EM})^2) + \Cov(A[0],A[1])).
\end{equation}
The estimates of $\Cov(A[0],A[1])$ from our numerical experiments sometimes decreased and sometimes increased when going
from GD to RTC construction, but we nevertheless always observed the decrease of the estimated value of $\E((V_{k,E})^2)$.
%Similar intuitions and numerical results hold also for estimators (\ref{estThetaVar}) for main sensitivity indices computed in the
% procedure for conditional variances.
Note that observable of estimator (\ref{estVkTotET}) fulfills 
\begin{equation}
 4\E{((\widetilde{V}}^{tot}_{k,ET})^2) = \E(\err(\wt{P}_{(k)}[0], \wt{P}[0])^2),
\end{equation}
so we could suspect it should also have lower variance for RTC than GD method and this was 
indeed confirmed in our numerical experiments. Intuitions and numerical results 
for the estimator of total sensitivity index of mean from scheme E (\ref{estVkTotE}) % and one from the set of estimators for
%conditional variance
were analogical as in case of the main index. 
%%In all our experiments using observables influenced in different ways by different reaction channels we observed the decrease of
%variance of estimators for main and total sensitivity indices of conditional variance.% \ref{estVVari} and \ref{estVVarToti}.
% The expression (\ref{VEst}) can be rewritten as%TODOIntu
% \begin{equation}\label{VRewr}
% \begin{split}
% 4(N+1) \widehat{V} = \sum_{i=0}^{1}((f[i][1] - f[1-i][0])^2 + \\
% \sum_{k=0}^{N}(f_k[i][1] - f_k[1-i][0])^2 ).
% \end{split}
% \end{equation}
% The variance of the rhs of (\ref{VRewr}) is a sum of variances and covariances of the summands appearing in it. Since as one can easily check
% the variances of these summands are construction independent, the variance of the estimator will depend only on the covariances, which are equal
% to the correlations multiplied by some construction independent constants. We had intuition that 
% constructions in which the values of observables are tighter ld lead to stronger correlations among the above summands and therefore higher values of the variance of $\widehat{V}$.

Note that although change of order of the indices of reactions in a 
chemical reaction network does not influence the 
variance of the estimators using RTC construction, as only reorders the Poisson processes in the construction, 
it might have impact on the variance of estimators when using GD algorithm. %TODO Anderson and modified next r m
We had an intuition that grouping reactions having similar influence on the output together in the sequence of reactions used by GD construction 
 should lead to tighter coupling between the observables and the noise term for different values of parameters, and thus to lower variances of
estimators we discussed in this Section, than when reactions with opposing effects
appear in the sequence one after another. This is because we suspected that
 reactions lying close to one another in the sequence may often be fired
in the same step of constructions using two different values of parameters and the same noise term.
%to fire reaction channel lying close to the one fired for unperturbed parameters in this sequence.
% This led us to suspect that
% GD method-based estimators for sensitivity indices of individual parameters should have
%  lower variance when reactions are grouped together.
We will see this effect confirmed in Section \ref{secMBMD}, in a numerical experiment specially designed for testing it.

\section{Implementation}\label{secImpl}
%TODO describe pseudocode
All our numerical experiments were performed using a program written in C++ language, run on a
 personal computer with 1GB RAM, 2-core 2.10 Hz processor and with Linux operating system.
 For random number generation we used Gnu Scientific Library (GSL) \cite{Galassi_2003} implementation of
Mersenne twister random number generator (RNG) \cite{MATSUMOTO_NISHIMURA_1998}.
Using notations from Section \ref{secMany}, at the beginning of each Monte Carlo step we generated
value of an independent copy of a variable $\wt{P}$. 
Observables of functions needed by a a given scheme 
 were generated by running given simulation algorithm starting with appropriate
parameters and reusing the same 
%For instance for schemes for we had to use 
generated values of artificial noise variable $\wt{R}[j]$ to compute values of observables  $\wt{s}[i][j]$ and $\wt{s}_{(k)}[i][j]$ for the
 same $j \in \{0,1\}$.
We describe different strategies for reusing values of these variables later on in this Section. 
% For the case of estimators from section \ref{secMany} observables of functions $\wt{s}[i][j]$, $\wt{s}_{(k)}[i][j]$
% were generated by running given simulation algorithm starting with appropriate
% parameters and using the same random values $\wt{R}[j]$ for the same $j \in I_{2}$.
% noise terms used in a simulation 

We used simple implementations of GD method and RTC algorithm, that is we used arrays to store
reaction states and propensities as well as linear search for minimum to obtain $S_{i+1}$ in RTC construction,
or reaction to fire in GD method.
 Often simplest implementations turn out to be the fastest when simulating small reaction networks, whereas smaller computation
 time can be achieved for more complex networks when using improvements like dependency graphs, sparse arrays, priority
 queues or dynamical reordering of
reactions in GD method \cite{McCollum_2006}. See  \cite{Mauch_2011} for a recent review and comparison of
computation times of simulations using a variety of different data structures. %TODO
 Most of these enhancements
can be incorporated into our algorithms without changing the variance of resulting estimators, but some, like dynamically changing the order of
reactions in GD algorithm may have impact on the variance.
% As discussed in section
% \ref{secMC} when using Monte Carlo
% \ref{se Monte Carlo method one may want to choose algorithm, its implementation
% and offering the best efficiency constant (\ref{cIneff}),
% given chemical reaction network and computational resources available.

We numerically investigated two different approaches to reusing values of each independent copy of the noise variable needed in a single Monte Carlo step.  
Similar methods were suggested in  \cite{Rathinam_2010} for performing local sensitivity analysis, where, however, only
the first method was tested numerically. In the first method, in addition to the main RNG used for generating parameters,
 one uses separate RNGs for simulating the noise variables, one
RNG in the GD method and one RNG for every independent Poisson process in the RTC method. In order to retrieve
the same values of artificial noise variable $\wt{R}[j]$ one reuses the same initial seeds of RNGs for noise variables, which are chosen randomly
at the beginning of the Monte Carlo step from the possible RNG seed range, using the main RNG. The drawback of this method is that one needs to 
 generate the same random numbers and to reinitialize RNGs for noise variables several times in each Monte Carlo step.
 In the second approach one stores the values of the same independent noise term in a separate set of lists.
In GD method one uses single list for every noise variable, while in RTC method different one for every Poisson process.
New values are added to the lists when needed and new memory is allocated to make the list longer
only when more random numbers are produced for a given list than in previous Monte Carlo steps.
The disadvantage of this method is that one needs additional memory
for the lists and uses up time for reading from and writing onto them.
Fortunately, we did not experience any memory exhaustion problem in our numerical experiments.

We compared the average execution time of MC procedures using scheme E from Section \ref{secMany}
 on three chemical reaction networks,
which we define in Chapter \ref{chapNumExp} and for the two mentioned approaches to reusing the values of noise variables.
 The results are presented in Table \ref{2Appr}. We can see that the first approach was from $5\%$ to $89\%$ slower.
%The execution times of Monte Carlo procedures computing sensitivity indices of conditional variance, using the same number
%of steps and thus two times more stochastic simulations lasted aout two times longer than the corresponding main procedures.
When the reinitialization of RNG
was commented out
 we observed that the first approach was on average only from $4\%$ to $7\%$ slower in all examples (data not shown), thus  
 high relative increase of cost for instance in the MBMD model
 can be explained by the significant contribution of the RNG reinitialization
 to the overall cost of a single simulation due to the single simulation being rather short.
%The vales of sensitivity coefficients computed using both methods were indistinguishable.
The data presented in the further Sections was produced using only the second approach.
%%TODOFor computational efficiency Another possibililty - keeping the same seeds - require generating again, was used in, may be preferable for long simulations requiring memory
%TODO Not reusing at all - another possibility
% \begin{table}
% \begin{tabular*}{8cm}{|c|c|}
% .....
% .....
% \end{tabular*}
% \caption{..........}
% \end{table}
%\begin{widetext}
\begin{table}[h]
\resizebox{16cm}{!} {
\begin{tabular}{|l|c|c|c|c|c|c|}
\hline
DM   & $\text{RTC}_l$ & $\text{RTC}_{nl}$ & $\%$ inc.&$\text{GD}_l$& $\text{GD}_{nl}$ &$\%$ inc.\\
\hline
SB &$97.13 \pm 0.49$ &$111.8\pm1.7$ & $15\%$ & $103.65\pm0.60$& $122.07\pm0.71$& $18\%$\\ 
GTS & $130.73\pm 0.48    $&$137.17 \pm 0.26$&$    4.9\%$&$    132.07\pm 0.10$&    $138.94 \pm 0.22$&$    5.2\%$\\
MBMD &$110.32\pm 0.04$& $208.02\pm0.86$& $89\%$ & $107.60\pm0.09 $&$129.66\pm0.56 $ & $21\%$\\ 
%$ $&$53,90\pm 0,29$& $57,27\pm 0,36$& $6\%$ & $54,24\pm 0,36 $&$57,69\pm 0,03 $ & $6\%$\\ 
\hline
\end{tabular}
}
\caption{\label{2Appr} The Table presents mean execution times in seconds of MC procedures using scheme E 
for observables of simple birth (SB), genetic toggle-switch (GTS) and many births - many deaths (MBMD) models 
 defined in Chapter \ref{chapNumExp}.
RTC and GD algorithms were used both with and without lists for reusing artificial noise variables (denoted by subscripts $l$ and $nl$ respectively).
$50000$ steps were performed for SB and MBMD models and $5000$ for GTS. 
The means were computed from 3 runs with random initial seeds and are given with estimates of standard errors. 
The ``$\%$ inc.'' column contains the relative increases of estimated mean duration times of MC procedure without lists
over the one with lists.}
\end{table}

%\end{widetext}
From Table \ref{2Appr} we can also see that the execution times of simulations using GD and RTC methods and the
approach with lists were approximately the same for all models. 
%We will further compare the efficiency of Gillespie's and RTC algorithms by means of variance of estimators computed with
%these methods for the same number of Monte Carlo steps.
%TODOTODOTODOOne can make a number of improvements to the algorithms, sometimes they may influence the variance
%\cite{Cao_Li_Petzold_2004}.
%stores random numbers generated be-
%tween sequential simulations may be able to exploit this
%property for modest improvements in computation time.

%TODO improve descrQMC
\section{\label{secQMC}Quasi-Monte Carlo and variance reduction methods}
 One can often speed up the computation of integrals by using quasi-Monte Carlo (QMC) or different variance-reduction techniques rather than
ordinary MC method.
 %  such as anthitetic variates or randomized quasi-Monte Carlo.
In QMC method one generates vectors from $n$-dimensional cube $[0,1]^n$ for some fixed $n$ in each step of the method
using a quasi-random number generator (QRNG)
and uses them in the same way as values of observables in form of random vectors $U=(U_i)_{i=1}^{N_P}$ with
independent coordinates with distribution $\U(0,1)$ in ordinary MC.
For DM with independent parameters $P \sim \mu_1 \times \ldots \times \mu_{N_P} $ one can often find functions $g_i$ for $i \in I_{N_P}$, such that
$(g_i(U_i))_{i=1}^{N_P} \sim P$. For instance for the case of $P_i \sim \U(a_i,b_i)$ one can use
\begin{equation}
g_i(x) = x(b_i-a_i)+a_i,
\end{equation}
while for $P_i \sim U_d(a_i,b_i)$
\begin{equation}
g_i(x) = \lfloor x (b_i-a_i+1)\rfloor + a_i,
\end{equation}
where for $x \in \R$ its floor $\lfloor x \rfloor$ denotes the biggest integer smaller than or equal to $x$.
% has different distribution, we can often replace it by some function $g(U_i) \sim P_i$ and
% independent of other parameters than $P_i$, where $U_i$ has uniform or uniform discrete distribution.
%We can then treat $U_i$ as the $i$-th parameter and this will not change the values of any variance-based sensitivity indices.
We use the QMC for the parameters and MC for
the noise variable approach, which relies on using vectors generated with the help of QRNG
and above functions to sample parameters and ordinary RNG to sample the artificial noise variable.
We call it hybrid QMC-MC approach. One could use QRNG
to sample certain number of components of the artificial noise variable as well, but not this whole variable, as
we do not know in advance how many of its components are needed in a simulation. A problem with using QMC is that although it usually leads
to smaller error than MC
there are currently no well-established methods for obtaining error estimates of the result from simulation data \cite{Owen05onthe}.
Such error estimates can be obtained by variance-reduction techniques like randomized-quasi Monte Carlo \cite{Owen05onthe},
 which we however do not test in this work.
% Randomized-quasi Monte Carlo can be applied to sampling in the parameter space or both parameter space and certain number of
% variables of the artificial noise in our estimators and anthitetic sampling to all variables.
%A thorough comparison of efficiency of different variance-reduction techniques applied to the problem at hand lies
% beyond the scope of the present work.
% We will restrict ourselves only
%  to comparing quasi-Monte Carlo sampling in the parameter space and random in the artificial noise term
% with pure Monte Carlo sampling and a method analogous to Degasperi's et al. we will describe in the next section \ref{degaSec}
% in one numerical example for which analytical values of
%most indices can be obtained, which allows for comparison of mean square error of these methods.
%In the remaining two numerical examples we will compare the variance of unbiased Monte Carlo estimators we presented here
%using the RTC and Gillespie's direct method.
\section{The method of Degasperi et al.}\label{secDega}
%To our knowledge the only work so far in which variance-based sensitivity analysis of observables of a stochastic processes with respect
%to its parameters  was considered was
%ref. \cite{Degasperi2008}, in which,
%however, only the average observables were dealt with.
%For a given function f observable f(P,R)
We now present generalization of method of Degasperi et al. \cite{Degasperi2008}, which we already mentioned in the Introduction and Section
\ref{secOrthog}.
Let us assume that DM has $N_P$ independent parameters $P=(P_i)_{i=1}^{N_P}$, which can be mapped from variables
with distributions $\U(0,1)$ as discussed in the previous Section. Replacing original parameters with these mapped variables
when necessary, we assume that $P_i \sim \U(0,1)$ for $i \in I_{N_P}$.
%as it will not change values of any variance-based sensitivity indices.
%In particular for uniform or uniform discrete distributions we
%TODO grid - define as in literature
For $m$ natural positive, called grid level,
 we define discretized parameter vector $D$, also called parameter grid, as a
function $D = (D_i)_{i=1}^{N_P}:I_m^{N_P} \longmapsto \R^{N_P}$, whose coordinates,
 called discretized parameters, satisfy for any $j = (j_i)_{i=1}^{N_P} \in I_m^{N_P}$
\begin{equation}
D_{i}[j] = \frac{j_i}{m + 1}.
\end{equation}
% We define
% For $i \in I_{N_P}$ we denote $P_{i} = (P_{i}[j])_{j \in I_m}$.
%When the output for  mean outputs given the parameters on the grid points we set,
Let us assume we estimate sensitivity indices associated with conditional expectation $\tilde{f}(P)$
for some observable $f(P,R) \in L^2(\PR)$. For some $N_s$ natural positive, called number of simulations
in each point of the grid, and independent copies of noise variable $R[j][k] \sim R$ for every
$j \in I_m^{N_P}$ and $k \in I_{N_s}$ one computes in a simulation the value of $f(D[j], R[j][k])$.
Then one computes discretized estimate of $\tilde{f}(P)$ for every $j \in I_m^{N_P}$
\begin{equation}\label{fP}
{\tilde{f}}(D[j]) = \phi_{ave}\left(\left(f(D[j], R[j][k](\omega)\right)_{k \in I_{N_s}}\right).
\end{equation}
Now one treats $D$ as a random vector on the discrete probability space
$I_m^{N_P}$ with equal probabilities of its elements and performs exact variance-based sensitivity analysis on function
$\tilde{f}(D)$. Firstly, one computes conditional expectations of $\tilde{f}(D)$ given certain sub vectors of $D$, for $J \subset I$
 and $v_J \in I_m^{J}$
\begin{equation}
\E(\tilde{f}(D)|D_J)(v_J) = \frac{1}{m^{|\sim J|}} \sum_{j_{\sim J} \in I_m^{\sim J}} \tilde{f}(D[v_J, j_{\sim J}]).
\end{equation}
Secondly, one computes variances of these conditional expectations needed
to calculate the desired variance-based sensitivity indices of $\tilde{f}(D)$ with respect sub vectors of $D$.
They are used to approximate the sensitivity indices of $\tilde{f}(P)$ given the corresponding sub vectors of $P$.
When approximating sensitivity indices associated with  conditional histograms, the procedure is
the same, except that one should use
unbiased estimator or average histogram, like mean of histogram functions instead of $\phi_{ave}$ in (\ref{fP}).
Degasperi et. al. used this method for computing variance based-sensitivity indices of conditional histograms
using variance defined with Manhattan distance as discussed in Section \ref{secOrthog}.
In the next Section we apply this method to conditional expectations with standard variance and call it shortly \mbox{grid-based} method.
%To retain these interpretations ad one could use variance defined using scalar
Unfortunately, the method presented here does not provide error estimates for the computed approximations of sensitivity indices.
%We will estimate error of approximations of variance-based sensitivity indices of conditional means when using this method
% in the next section.
%In the next section we will apply it
%to the average output and simple absolute value of difference metric and a system where the exact values
%of sensitivity indices can be computed, so as to compare mean square error it gives to error resulting from application of our
%Monte Carlo method procedure.
%The authors proposed a method for variance based sensitivity analysis which
%relies on counting the average values of some observable $g(Y)$ on a grid in the parameter space.
%using it to approximate conditional expectations with respect to given parameters and finally computing the variance of such approximation.
%They applied their variance-based sensitivity analysis to histograms of particle numbers
%of a DSRM.
%They generalize variance using metric not induced by scalar product, which is histogram distance between two vectors
%It is unclear whether in ref. \cite{Degasperi2008} variance-based sensitivity indices for histograms are defined as in section \ref{genSec},
%because the authors describe their method only for real-valued random variables and then use it for histograms without detailed explanation.
\chapter{Numerical experiments}\label{chapNumExp}
%TODO nonexplosive
\section{Simple birth model}\label{secMB}
We first apply presented methods to a simple birth (SB) model, for which analytical expressions for most of the \mbox{variance-based}
sensitivity indices can be obtained. This allows for estimation of mean square errors of approximation of variance-based sensitivity indices
using grid-based method described in Section \ref{secDega} and hybrid QMC-MC approach discussed in Section \ref{secQMC}.
The reaction network contains one species $X$ and one birth reaction can occur
\begin{equation}
R_1:\ \emptyset \longrightarrow  X.
\end{equation}
It is described by a kinetic formula $a_1(K)(x) = K_1 + K_2 + K_3$, where $K = (K_1,K_2,K_3)$ is a random vector
with independent coordinates with uniform respective marginal distributions $U(0.3,0.9)$, $U(0.85, 1.15)$ and
$U(0.07, 0.13)$. Variable $C$ describing initial number
of particles of species $X$ has uniform discrete distribution $U_d(30, 90)$ and is independent of $K$.
Note that for this reaction network as well as for the ones in further numerical examples 
assumptions of Theorem \ref{thDMCP} are satisfied for every values of parameters when we take $m_i$ equal to 1 for every
$i$-th species.
%and uniform distributions $K_1~\sim~U(0.3,0.9)$,
% $K_2~\sim~U(0.85, 1.15)$ and $K_3~\sim~U(0.07, 0.13)$ of the rate constants.
%We performed
%sensitivity analysis of number of species $X$ at time $t=100$.
%We will first apply presented methods to a simple system, for which analytical expressions for most of the \mbox{variance-based}
%sensitivity indices can be obtained.
%The reaction network contains only one species $X$, and one birth reaction may occur
%\[(R_1:\ \emptyset \longrightarrow  X).\]
%Kinetic formula for this reaction is $a_1(x) = K_1 + K_2 + K_3$.
%The parameters are independent, with marginal distributions $C \sim U_d(30, 90)$ for the initial number of particles of species $X$
%and $K_1 \sim U(0.3,0.9)$, $K_2 \sim U(0.85, 1.15)$, $K_3 \sim U(0.07, 0.13)$ for the rate constants.
The observable taken for sensitivity analysis is the number of species $X$ at time $t=100$.
% Note that from comment after Theorem \ref{thDMCP} this observable has all momente, and analogically for
% observables of the models we will consider in the future.
In Appendix \ref{appd} we derive analytical expressions for some of the \mbox{variance-based} sensitivity indices of this observable and its conditional
expectation given the parameters.
The values obtained from analytic expressions are presented in Table \ref{exactVal}.
For a reaction network with one reaction there is no difference in variance of estimators using GD and RTC algorithms and
we use only the latter one.

We performed a 50000 step MC procedure using scheme E. The results, presented in Table \ref{tabMBRTC} are in good agreement with analytically
computed values from Table \ref{exactVal}.
We also performed computations with grid-based method with grid level $10$ and $100$
 simulations in every point of the grid, which also requires one million simulations in total. From the results in table
\ref{tabGrid} we can see, that although the ordering of values of sensitivity indices computed with this method is correct,
these values are much farther from exact ones than for MC procedure. 

We performed an experiment comparing mean square errors of grid-based method and MC procedure using scheme E as before,
hybrid QMC-MC method using scheme E in $50000$ steps and MC procedures using schemes EMe and ETe in $100000$ steps.
For quasi-Monte Carlo sampling in the parameter space in the hybrid QMC-MC
approach we show only data obtained using the Niederreiter quasi-random sequences \cite{Bratley_1992}, but using
Sobol or Helton sequences, all available from GSL \cite{Galassi_2003}, lead to approximately the same results.
All methods except for MC procedures using schemes EMe and ETe involved $1$ million process simulations and had approximately the same duration.
MC using shemes EMe and ETe involved $1.2$ times more simulations, but when we consider only their sub schemes EM and ET,
they used 1 million simulations as well. Thus from discussion in Section \ref{secUnbiased}
 the ratio of variances of final MC estimators for all sensitivity indices using
schemes E and sub schemes EM and ET are the same as of their respective inefficiency constants (\ref{dIneff}).
We run each method a number of times using
the same random number generators without reinitializing, but for hybrid method reinitializing each time the QRNG. In every $i$-th step we
computed an estimate $\err_i$ of mean square error of a given method. For MC methods the estimate of $\err_i$ was estimate
of variance of final MC estimator
 (\ref{varEst}). For grid-based and hybrid QMC-MC methods, in which the estimate of sensitivity index with analytically computed value $\lambda$ in
the step was $D_i$, we took
\begin{equation}
\err_i = (D_i - \lambda)^2.
\end{equation}
The estimates of mean square error of each method and standard deviation of the result were obtained using mean (\ref{phiAve})
 and variance of mean (\ref{phiAveVar}) estimators applied to sequence of errors from all steps as in MC method.
Each method was run $5$ times, except for hybrid method, which was run $50$ times due to relatively high estimated values of
standard deviation of its samples of mean square error. The results are given in Table \ref{tabCompareMB}.
We can see that the estimated mean square error of grid-based method for estimation of sensitivity indices
is about three orders of magnitude higher than for pure MC methods using  E for all indices, EM for main and ET for
total sensitivity indices.
The estimate of mean square error of the hybrid method in approximating $V_C$ is even about $50$ million times
lower than for the grid-based method, but for index $V_{K_3}$ hybrid method does not seem to have any advantage over ordinary
MC using scheme E. % using the same set of indices.
It can be seen that computed variances of final MC estimators using schemes EMe
for total and ETe for main sensitivity indices are much higher than variances for estimators from other schemes and for the same indices, even though
the latter used fewer process simulations. Notice also, that estimates of variance of final MC estimators given by
scheme EM are approximately two times lower than for scheme E for all main indices
and similarly for schemes E and ET for total indices. This coincides with equalities in the right inequalities of  
relations (\ref{EMComp}) and (\ref{VitotComp}).
%Since our model is additive,
% as expected the variances of estimators from set EM are two and from set Var four times higher than from EM for estimation of main
% sensitivity indices, and analogically than from ET for total sensitivity indices.
The estimated mean value of the model output and different
measures of its dispersion we discussed in Section \ref{secDisp} are given in Table \ref{tabDisp}, along with these values
for models from the following Sections.

%
%We can see that the differences between the values counted analytically and computed in Monte Carlo procedure are in good agreement,
% as all latter differ from their analytical counterparts no more than two times their estimated standard deviation.
%Furthermore, there are no significant differences between the variances of the estimators or values of $V_R$ computed with
%RTC and Gillespie's construction.
%Since both reactions influence the number of particles in the same way, this does not disagree with intuitions
%presented in section \ref{varDiffSec}.
%Val ave 130.1434 aveStd 0.0822

\begin{table}[h]
\begin{tabular}{|l|c|c|c|c|}
\hline
$i $ & $\widetilde{V}_i$ & $\widetilde{V}_i^{tot} $ &$\widetilde{S}_i$ &$\widetilde{S}_i^{tot}$\\
\hline
$C $   &$ 310 $& $310$  &$0,451$&$0,451$ \\
$K_1 $ &$ 300 $& $300$ & $0,436  $ & $0,436 $\\
$K_2 $ &$ 75 $& $75$ & $0,109  $ & $0,109  $\\
$K_3 $ &$ 3 $& $3$ & $0,0044 $ & $0,0044 $\\
\hline
$i$& $V_i$ & $V_i^{tot} $ & $S_i$ & $S_i^{tot}$ \\
\cline{1-5}
$P $ & $ 688 $& & $0.80$ &  \\
$R $ & &$170$  & &$0.20$\\
$P,R $ &$ 858 $& $ 858 $ & $1$ & $0$ \\
\hline
%\begin{ruledtabular}
\end{tabular}
\caption{\label{exactVal}Values of sensitivity indices
in the SB model obtained from analytic formulas derived in Section \ref{appd}}% See section \ref{secMB} for details}
\end{table}

%\begin{widetext}
\begin{table}[h]
\begin{tabular}{|l|c|c|c|c|}
\hline
$i $ & $\widetilde{V}_i$ & $\widetilde{V}_i^{tot} $ &$\widetilde{S}_i$ &$\widetilde{S}_i^{tot}$\\
\hline
$C $& $312.6 \pm 1.6$ & $312.6 \pm 1.6$ &$ 0.45 $&$ 0.45$\\ 
$k1 $& $301.4 \pm 1.6$ & $301.4 \pm 1.6$ &$ 0.44 $&$ 0.44$\\ 
$k2 $& $74.04 \pm 0.41$ & $74.03 \pm 0.41$ &$ 0.11 $&$ 0.11$\\ 
$k3 $& $3.003 \pm 0.020$ & $2.992 \pm 0.020$ &$ 0.0043 $&$ 0.0043$\\ 
\hline
$i $ & $V_i$ & $V_i^{tot} $ & $S_i$ & $S_i^{tot}$ \\
\hline
$P $& $690.4 \pm 2.5$ & $701.6 \pm 2.5$ &$ 0.8 $&$ 0.82$ \\ 
$R $& $159.2 \pm 1.0$ & $170.4 \pm 1.0$ &$ 0.18 $&$ 0.2$ \\
$P,R$ & $860.8 \pm 2.7$ & $860.8 \pm 2.7$ &$ 1 $&$ 1$ \\
\hline
%\begin{ruledtabular}
\end{tabular}
\caption{\label{tabMBRTC} Estimates of sensitivity indices for the SB model computed in a 50000 step MC procedure using RTC algorithm
and scheme E.}
\end{table}
%\end{widetext}

 \begin{table}[h]
  \begin{tabular}{|l|c|c|}
  \hline
  $i$ & $\widetilde{V}_i$ & $\widetilde{V}^{tot}_i$  \\
  \hline
 $X $&$ 250.8$&$ 252.3$\\ 
 $K_1 $&$ 245.9$&$ 247.5$\\ 
 $K_2 $&$ 61.29$&$ 62.87$\\ 
 $K_3 $&$ 2.393$&$ 3.97$\\ 
 $P $&$ 562.1$&$  562.1$\\ 
 \hline
  \end{tabular}
  \caption{\label{tabGrid} Estimates of sensitivity indices for SB model computed
  using grid-based method with $100$ simulations in every point of a
  level $10$ grid.}
 \end{table}

%\begin{widetext}
\begin{table}[h]
\resizebox{16cm}{!} {
\begin{tabular}{|l|c|c|c|c|c|c|}
\hline
{\multirow{2}{*}{i}}
 & Grid & E & E-QMC & EMe  & ETe\\
\cline{2-6}
& \multicolumn{5}{c|}{$\err \widetilde{V}_i$} \\
\cline{2-6}
\hline
$C$ &$3464 \pm 23$ &$2.6922 \pm 0.0061$ &$7.1 \pm 1.6\cdot 10^{-5}$& $1.3437 \pm 0.0019$ &$333.29 \pm 0.38$ \\
$K_1$ &$2986 \pm 19$ &$2.696 \pm 0.012$ &$0.224 \pm 0.046$& $1.3687 \pm 0.0053$ &$327.15 \pm 0.18$ \\
$K_2$ &$184.9 \pm 2.1$ &$0.1754 \pm 0.0016$ &$0.168 \pm 0.029$& $0.09212 \pm 0.00037$ &$83.420 \pm 0.065$ \\
$K_3$ &$0.293 \pm 0.022$ &$3.839 \pm 0.037\cdot 10^{-4}$ &$4.49 \pm 0.78\cdot 10^{-4}$&$2.490 \pm 0.010\cdot 10^{-4}$ &$3.7732 \pm 0.0064$\\
\hline
i & \multicolumn{5}{c|}{$\err \widetilde{V}^{tot}_i$} \\
\hline
$C$ &$3287 \pm 21$ &$2.6922 \pm 0.0061$ &$7.1 \pm 1.6\cdot 10^{-5}$& $3.0566 \pm 0.0086$ &$1.3474 \pm 0.0039$ \\
$K_1$ &$2821 \pm 19$ &$2.696 \pm 0.012$ &$0.224 \pm 0.045$ & $7.665 \pm 0.019$ &$1.3684 \pm 0.0030$ \\
$K_2$ &$145.6 \pm 1.9$ &$0.1754 \pm 0.0016$ &$0.162 \pm 0.028$& $3.2002 \pm 0.0087$ &$0.09266 \pm 0.00016$ \\
$K_3$ &$0.984 \pm 0.033$ &$3.842 \pm 0.033\cdot 10^{-4}$ &$4.69 \pm 0.76\cdot 10^{-4}$&$0.56047 \pm 0.00075$ &
$2.498 \pm 0.012\cdot 10^{-4}$ \\
\hline
i & \multicolumn{5}{c|}{$V_i$ } \\
\hline
$P$ &$15362 \pm 42$ &$6.033 \pm 0.023$ &$0.519 \pm 0.081$&$8.514 \pm 0.038$ &$8.585 \pm 0.016$ \\
%$P,R$ & $ 36.828 \pm 0.078 $& $2.3703 \pm 0.0010 $ & $ 2.3662 \pm 0.0023 $\\
\hline
\end{tabular}
}
\caption{\label{tabCompareMB} Estimates of mean square errors of final MC estimators of sensitivity indices of the SB
model computed using grid-based method (Grid), schemes E using MC procedure (E)
and hybrid approach (E-QMC) and schemes EMe and ETe in MC procedure. The errors are given along with their
estimated standard deviations (see main text in Section \ref{secMB} for details).}
\end{table}

%\end{widetext}
%Table (\ref{tabCompareSB})
\begin{table}[h]
\begin{tabular}{|l|c|c|c|c|c|}
%\caption{\label{tab:journals} Monte Carlo estimators from $5000$ simulations of Gillespie algorithm}
\hline
DM& Mean & AveVar & G$\sigma$ & GFF & GCV \\
\hline
SB & $230.069 \pm 0.092$ & $170.4 \pm 1.0$&$ 13$ & $0.74 $&$ 0.057$ \\
MBMD &$10.0361 \pm 0.0099$ & $7.012 \pm 0.038$&$ 2.6$ & $0.7 $&$ 0.26$ \\
GTS &$30.299 \pm 0.060$ & $368.4 \pm 1.6$&$ 19$ & $12 $&$ 0.63$\\
\hline
\end{tabular}
\caption{\label{tabDisp} Estimates of means and different measures of dispersion like mean variance (AveVar), generalized
standard deviation (G$\sigma$), Fano Factor (GFF) (\ref{FFGen}) and coefficient of variation (GCV) (\ref{CVGen}) computed for SB,
MBMD and GTS models in 50000 step MC procedures using scheme E and RTC algorithm.}
\end{table}

% Mean & AveVar & G\sigma & GFF & GCV \
%$10.0253 \pm 0.0083$ & $7.071 \pm 0.092$&$ 2.7$ & $0.71 $&$ 0.27$
% Mean & AveVar & G\sigma & GFF & GCV \
%$230.091 \pm 0.088$ & $157.7 \pm 9.9$&$ 13$ & $0.69 $&$ 0.055$

\section{Genetic toggle switch model}\label{secGTS}
We now deal with a more biologically interesting model of a genetic toggle switch (GTS). It is
a simplified stochastic version of a model of a synthetic genetic toggle switch from  \cite{Gardner2000},
which was introduced and used for local sensitivity analysis in
 \cite{Rathinam_2010}.
%TODO properties
The toggle switch consists of two promoters and their
respective repressors $U$ and $V$.
Each promoter is inhibited by a repressor transcribed by
the opposing promoter. The following reactions can occur
\[R_1:\ \emptyset \longmapsto  U, \quad R_2:\ U \longmapsto \emptyset,\]
\[R_3:\ \emptyset \longmapsto V,\quad R_4:\ V \longmapsto \emptyset.\]
Denoting $x = (x_1,x_2)$ the vector of numbers of species $U$ and $V$ respectively, the propensities
of the above reactions can be written as
\[a_{1}(x) = \frac{\alpha_1}{1 + x_2^{\beta}}, \quad a_{2}(x) = x_1, \]
\[a_{3}(x) = \frac{\alpha_2}{1 + x_1^{\gamma}}, \quad a_{4}(x) = x_2. \]
The values of rate constants in  \cite{Rathinam_2010} were $\alpha_1 = 50$, $\alpha_2 = 16$, $\beta = 2.5$ and $\gamma = 1$.
We consider each rate constant with above mentioned value $v$ to be a random variable with distribution U($0.8v, 1.2v$) and independent
of other constants.
Similarly as in  \cite{Rathinam_2010} the initial particle numbers of both species were set to $0$ and
the observable considered for sensitivity analysis was the number of particles of species $U$ at time $t=10$.
%The computed values of sensitivity indices for both algorithms are presented in tables \ref{switchRTCTab}.
In Table \ref{tabGTS} we present estimates of sensitivity indices computed
from a 50000 step MC procedure using scheme E.
%the first of which lasted about $21$ minutes $40$ seconds and the second approximately two times longer.
From Table \ref{tabGTS} we can see that the parameter with the greatest values of estimates of both main and total indices
for conditional mean is $\alpha_1$, while the lowest estimates of indices are these of parameter
$\beta$. 
%The main and total sensitivity indices of conditional means have comparable estimated values for all parameters.
%Interestingly, all parameters except
%for $\alpha$ have different sign of correlation with conditional mean than conditional variance.
In Table \ref{tabGTSGilRTC} we present estimates of variances of final MC estimators of procedures using RTC algorithm and GD method
with $2000$ steps for scheme E and $4000$ for EMe and ETe, so that the variances are computed for the same number of
process simulations used by certain schemes, similarly as in the previous Section. The estimates were obtained
from $50$ independent runs of each of these methods.
The estimated
variances are lower for RTC than GD method for all main and total indices, in agreement with
discussion in Section \ref{secVarDiff}. They are even about $4$ times lower for the indices associated with parameter $\beta$. 
Notice also that estimates of variances of final MC estimators given by 
scheme EM are not much lower than for scheme E for all main indices 
and similarly for schemes E and ET for total indices, which is close to theoretical bounds given by equalities in the left inequalities of   
relations (\ref{EMComp}) and (\ref{VitotComp}). 
% Notice that variance of estimators from set Var using GD method
% have lower variance than these from sets E,  EMe and ETe for all sensitivity indices. For main index of parameter $\beta$ this
% variance is over 2 times lower. Note that from inequality (\ref{ineqDVar}) we have that it cannot be more than 3. 
% %\begin{widetext}
\begin{table}[h]
\begin{tabular}{|l|c|c|c|c|c|c|}
\hline
$i $ & $\widetilde{V}_i$ & $\widetilde{V}_i^{tot} $ &$\widetilde{S}_i$ &$\widetilde{S}_i^{tot}$\\
\hline
$\alpha_1 $& $42.14 \pm 0.55$ & $43.33 \pm 0.56$ &$ 0.43 $&$ 0.45$\\ 
$\alpha_2 $& $12.83 \pm 0.40$ & $13.80 \pm 0.41$ &$ 0.13 $&$ 0.14$\\ 
$\beta $& $2.72 \pm 0.16$ & $2.73 \pm 0.15$ &$ 0.028 $&$ 0.028$\\ 
$\gamma $& $37.74 \pm 0.63$ & $38.54 \pm 0.64$ &$ 0.39 $&$ 0.4$\\ 
\hline
$i $ & $V_i$ & $V_i^{tot} $ & $S_i$ & $S_i^{tot}$ \\
\hline
$P $& $96.96 \pm 0.84$ & $246.7 \pm 1.0$ &$ 0.21 $&$ 0.53$ \\
$R $& $218.6 \pm 1.6$ & $368.4 \pm 1.6$ &$ 0.47 $&$ 0.79$ \\
$P,R$ & $465.3 \pm 1.7$ & $465.3 \pm 1.7$ &$ 1 $&$ 1$ \\
\hline
\end{tabular}
\caption{\label{tabGTS} Estimates of sensitivity indices for TS model computed in a 50000 step MC procedure
using RTC algorithm and scheme E.}%TODO See section \ref{secGTS} for details.}
\end{table}
%\end{widetext}
%TODO 1000 for which, other less
%With the exception of the estimator for $\tilde{V}_\gamma$ the estimated variances of all main effect were lower for RTC than for Gillespie's
%algorithm and even four times lower for the smallest main and total effects associated with parameter $\gamma$.
%Note that in agreement with the intuitions we gave the value of $V_R$ is bigger for RTC than Gillespie's algorithm.
%The main difference between the results of the previous numerical experiments and the current one is that the total effects of some
%parameters are much higher than for the main effects, for instance for the parameter $\alpha_1$.
%\begin{widetext}
%& Deg.  & RTC & QMC-RTC &MainRTC &TotRTC & VarRTC \\
\begin{table}[h]
\resizebox{16cm}{!} {
\begin{tabular}{|l|c|c|c|c|c|c|}
\hline
{\multirow{3}{*}{i}}& \multicolumn{2}{c|}{E} & \multicolumn{2}{c|}{EMe}
& \multicolumn{2}{c|}{ETe} \\
\cline{2-7}
& GD & RTC & GD & RTC & GD & RTC \\
\cline{2-7}
 & \multicolumn{6}{c|}{$\widetilde{V}_i\err$} \\
\hline
$ \alpha_1$ &$16.05 \pm 0.14$ &$7.697 \pm 0.086$ &$13.820 \pm 0.089$ &$6.307 \pm 0.053$ &$85.42 \pm 0.28$ &$59.38 \pm 0.26$ \\
$ \alpha_2$ &$5.350 \pm 0.068$ &$4.067 \pm 0.050$ &$5.098 \pm 0.053$ &$3.636 \pm 0.034$ &$46.38 \pm 0.22$ &$39.61 \pm 0.18$ \\
$ \beta$ &$3.118 \pm 0.058$ &$0.640 \pm 0.022$ &$2.960 \pm 0.043$ &$0.586 \pm 0.016$ &$35.02 \pm 0.19$ &$14.387 \pm 0.100$ \\
$ \gamma$ &$11.41 \pm 0.11$ &$9.950 \pm 0.098$ &$9.733 \pm 0.074$ &$7.635 \pm 0.051$ &$67.52 \pm 0.28$ &$60.52 \pm 0.26$ \\
 \hline
 i & \multicolumn{6}{c|}{$\widetilde{V}_i^{tot}\err$} \\
 \hline
$ \alpha_1 $ &$16.27 \pm 0.15$ &$7.721 \pm 0.085$ &$60.39 \pm 0.23$ &$36.55 \pm 0.14$ &$13.755 \pm 0.082$ &$6.349 \pm 0.058$ \\
$ \alpha_2 $ &$5.720 \pm 0.075$ &$4.223 \pm 0.052$ &$39.96 \pm 0.16$ &$35.91 \pm 0.14$ &$5.463 \pm 0.054$ &$3.869 \pm 0.036$ \\
$ \beta $ &$3.311 \pm 0.058$ &$0.658 \pm 0.025$ &$34.22 \pm 0.19$ &$15.01 \pm 0.11$ &$3.188 \pm 0.053$ &$0.582 \pm 0.019$ \\
$ \gamma $ &$11.92 \pm 0.11$ &$10.315 \pm 0.099$ &$50.00 \pm 0.18$ &$46.25 \pm 0.17$ &$10.181 \pm 0.070$ &$7.942 \pm 0.061$ \\
 \hline
 i & \multicolumn{6}{c|}{${V}_i\err$} \\
 \hline
$P $& $22.39 \pm 0.14$ &$17.46 \pm 0.12$ &$28.83 \pm 0.15$ &$19.87 \pm 0.10$ &$28.55 \pm 0.14$ &$19.93 \pm 0.10$ \\
$P,R$ &$56.94 \pm 0.21$ &$68.88 \pm 0.27$ &$32.880 \pm 0.097$ &$38.28 \pm 0.10$ &$43.28 \pm 0.13$ &$45.95 \pm 0.12$\\
\hline
\end{tabular}
}
%TODO 5 runs
\caption{\label{tabGTSGilRTC} Estimates of variances of final MC estimators of sensitivity indices for
 TS model computed using RTC and GD methods for different schemes. See main text in Section \ref{secGTS} for details.}
\end{table}

%\end{widetext}
\section{Many births - many deaths model}\label{secMBMD}
Chemical reaction network of many births - many deaths (MBMD) model contains one species $X$ and
$5$ different birth and death reactions can occur
\[\{R_{bi}:\ \emptyset \longmapsto  X,\quad R_{di}:\   X\longmapsto \emptyset \}_{\ i \in I_5}. \]
These reactions are ordered as follows
\begin{equation}
\{ R_i: = R_{bi},\ R_{5 + i}: = R_{di}\}_{i \in I_5}.
\end{equation}
The kinetic formulas of birth reactions are  $a_{bi}(x) = K_{bi}$ and of death reactions $a_{di}(x) = K_{di}x$, where
the distributions of kinetic rates are $K_{di} \sim \U(0.010,\ 0.040)$ and $K_{bi} \sim \U(0.10,\ 0.40)$ for $i \in I_5$.
The initial number of particles $C$ has distribution $U_d(5, 15)$ and we assume all parameters to be independent.
The observable for which we count sensitivity indices is the number of particles of species $X$ at time $t=5$.
The results of a 50000 step MC procedure computing sensitivity indices
 using RTC algorithm and scheme E are given in Table \ref{tabMBMD}.
In the next experiment we used three different constructions of DM to investigate variances of estimators of sensitivity indices using them.
The first two are RTC and GD constructions applied to the model described above.
 The third is GD construction but applied to a reaction network with reordered indices
\begin{equation}\label{newOrder}
\{ R_{2i-1}: = R_{bi},\ R_{2i}: = R_{di}\}_{i \in I_5}.
\end{equation}
The idea behind such reordering was to facilitate switching
between birth and death reactions in a given step of
GD construction for different values of model parameters and thus to reduce the coupling of the observable to the
%artificial noise
noise term as discussed in Section \ref{secVarDiff}.
In Table \ref{tabCompMB}
we compare variances of final MC estimators of some sensitivity indices, estimated from 10 independent
runs of $5000$ step MC procedures using scheme E and three different constructions of DM we described,
similarly as in previous Sections.
We can see that the estiamtes of variances of final MC estimators of main sensitivity indices with respect to
parameters $K_{b1}$ and $K_{d1}$ estimated with RTC method are about two times lower than the ones estimated with GD method
with initial order of indices and over $30$ times lower for reordered indices. 
Note, however, that for main sensitivity indices of parameter $C$ estimator using GD method with reordered reaction indices achieves slightly 
lower variance than the one with initial order of indices. 
When using GD method in scheme EM we received lower estimates of variance for initial 
order of indices rather than the reordered indices for all main sensitivity indices of individual parameters
 and similarly for scheme ET and total sensitivity indices (data not shown).
%, which took
%about $105$ seconds for both applications of Gillespie's algorithm and $108$ seconds for the $RTC$ method.
%The results of computations of sensitivity indices are presented in table \ref{reordGilTab} for the Gillespie's
%algorithm for network with reordered reaction indices and in tables \ref{normalGilTab} and \ref{normalRTCTab}
%for Gillespie's and RTC algorithms applied to reaction network with initial order of indices. As expected from discussion in
% section \ref{varDiffSec}, the variance of estimators of main effects of individual parameters decreases with increasing number of tables, with the
%exception of $\sigma V_C$ which was slightly lower in table \ref{reordGilTab} than in \ref{normalGilTab}.
%The variance of the estimator for $D = V_{P,R}$, is on the other hand bigger for tables \ref{normalRTCTab} and \ref{normalGilTab} than
%for table \ref{reordGilTab}, in agreement with intuitions we gave. Note also that as we expected the computed main effect
%for the atificial noise variable in tables \ref{normalGilTab} and \ref{normalRTCTab} is higher than in table \ref{reordGilTab}.
%TODO explain false results
%Both simulations took about a minute on a  , 1 GB RAM
%TODOTabelka V \tilde{V}_i, \tilde{V}_tot, \tilde{S}_i, \tilde{S}_tot, S_C, S_Ri
%Everything can be counted
%Lower variance estimator by Khammash
%Schlógl model Algorithm
%Variance estimation - example where they are different

%\begin{widetext}
\begin{table}[h]
%\resizebox{16cm}{!} {
\begin{tabular}{|l|c|c|c|c|c|c|}
\hline
$i $ & $\widetilde{V}_i$ & $\widetilde{V}_i^{tot} $ &$\widetilde{S}_i$ &$\widetilde{S}_i^{tot}$ \\
\hline
$X $& $2.8966 \pm 0.0038$ & $2.9240 \pm 0.0038$ &$ 0.73 $&$ 0.73$ \\ 
$k_{b1} $& $0.10384 \pm 0.00026$ & $0.10407 \pm 0.00026$ &$ 0.026 $ &$ 0.026$ \\ 
$k_{d1} $& $0.10476 \pm 0.00025$ & $0.11108 \pm 0.00025$ &$ 0.026 $&$ 0.028$ \\ 
\hline
$i$& $V_i$ & $V_i^{tot} $ & $S_i$ & $S_i^{tot}$ \\
\hline
$P $& $3.9792 \pm 0.0061$ & $6.0938 \pm 0.0068$ &$ 0.36 $&$ 0.55$ \\
$R $& $4.972 \pm 0.020$ & $7.087 \pm 0.021$ &$ 0.45 $&$ 0.64$ \\
$P,R$& $11.066 \pm 0.021$ & $11.066 \pm 0.021$ &$ 1 $&$ 1$ \\
\hline
%\begin{ruledtabular}
\end{tabular}
%}
\caption{\label{tabMBMD} Estimates of sensitivity indices for MBMD model computed in a 50000 step MC procedure
using RTC algorithm and scheme E.}
\end{table}
%\end{widetext}

\begin{table}[h]
\begin{tabular}{|l|c|c|c|}
\hline
{\multirow{2}{*}{i}} & GDR & GDI & RTC \\
\cline{2-4}
& \multicolumn{3}{c|}{$\delta \widetilde{V}_i (10^{-6})$ } \\
\hline
$C $& $ 485.4 \pm 2.9 $&$ 528.6\pm 2.0 $ & $341.3 \pm 1.3 $  \\
$k_{b1}$ &$ 117.89 \pm 0.60 $ & $ 6.871 \pm 0.035 $& $ 3.441 \pm 0.014 $ \\
$k_{d1} $ &$ 95.43 \pm 0.36 $ &$ 5.282 \pm 0.046 $ &$ 2.725 \pm 0.030 $ \\
\hline
$i$& \multicolumn{3}{c|}{$\delta V_i (10^{-6})$} \\
%{\multirow{2}{*}{i}} & \multicolumn{3}{c|}{$\delta \widetilde{V}_i (10^{-6})$} \\
%\cline{2-4}
\hline
$P $ & $1054.3 \pm 4.4 $& $ 688.3 \pm 3.3 $ & $ 632.5 \pm 3.8$\\
$P,R$ & $ 1356.1 \pm 4.4 $& $2104.0 \pm 7.4 $ & $ 2091.3 \pm 8.0 $\\
\hline
\end{tabular}
\caption{\label{tabCompMB} Estimates of variances of final MC estimators using GD algorithm with reordered of indices (GDR)
and GD and RTC methods with initial order of indices (denoted by GDI and RTC) for MBMD model. See main text in Section \ref{secMBMD} for details.}
\end{table}

\appendix

\chapter{\label{appMath} Mathematical background}
%TODO repeating theorems
%For sets $A$ and $B$ by $A^B$ we denote set of all functions from $B$ to $A$.
%Let $\mathcal (\Omega,\ \mathcal{F}, \mu)$ denote a probability space.
%Any equalities between random variables are assumed to hold almost surely (a. s.), that is with probability one.
%Whenever we say that a random variable is uniquely determined it means it
%is uniquely determined a. e.
%\begin{defin}
Unless stated otherwise we assume all measurable spaces to be standard Borel \cite{Ikeda1981}
and random variables to take values in such spaces, as well as all functions from one measurable space to another to be measurable \cite{Durrett}.
We say that random variables are equal or uniquely determined if they are equal or uniquely almost surely (a.s.). %\cite{Kurtz1986}.TODO Skorochod
For a measure space $\mathcal{M}$ with measure $\mu$
by  $L^p(\mu)$ we denote the space of classes of equivalence on the set of functions $f$ from $\mathcal{M}$ to $\R$, such that
 $\int\ |f|^p \mathrm{d}\mu <\infty$ and $f \sim g$ iff $f= g$ a. s. $\mu$ (compare  \cite{rudin1970} Section 3.10),
 but as custom call these classes functions. %If $\mu$ is a probabilistic measure, that is we will denote it $\PR$.
For $p$, $q$ natural, $p > q$ and $\mu$ finite it holds that from $f\in L^p(\mu)$ it follows $f\in L^q(\mu)$.
%For $p=1$ we will alternatively say that it is integrable
%and for $p=2$ that it is square-integrable.
%\end{defin}
%We will also assume that $X$ can only take values in the set $\{a,a+1,\ldots,b\}$.
Whenever dealing with random variables we assume some underlying probability space $(\Omega, \mathcal{F}, \PR)$ \cite{Durrett}.
\begin{defin}\label{defUD}
For two natural numbers $a \leq b$ we say that random variable $X$ has uniform discrete distribution $U_d(a,b)$ if for any $c$ natural such that
 $a \leq c \leq b$ 
\begin{equation}
P(X = c) = \frac{1}{b - a + 1}.
\end{equation}
\end{defin} 
\begin{defin}\label{defSupp}
 The support\cite{lehmann1998theory} of a probability measure $\mu$ on the measure space $(\R^n, \mathcal{B}(\R^n))$
is defined as set $\{ x\in R^n: P(A)>0 \text{ for every open rectangle $A$ containing }x $.
\end{defin}
For $A \in \mathcal{F}$ we denote by $\I_{A}$ the indicator of the set $A$, that is $\I_{A}(\omega) = 1$ if $\omega \in A$ and $0$ otherwise.
%\begin{defin}
%We will say that a measurable space
%$(S, \mathcal{S})$ is nice\cite{Durrett96}, if there exists a 1-1 map from $(S, \mathcal{S})$ to $(R \mathcal{B})$.
%\end{defin}
Definition of the conditional expectation of a random variable $Y\in L^1(\PR)$ given $X$, which we denote $\E(Y|X)$ can be found in any
standard probability \mbox{text-book}, like  \cite{Durrett}. Conditional expectation is a random variable and
is uniquely determined. %TODO different textbook
We need the following well-known property of conditional expectation.
%Conditional expectation $E(Y|X)$ is equal to some function of $X$.
%\begin{theorem}\label{theSame}
%For any function $f$ such that $f(X)$ is integrable we have $E(f(X)|X) = f(X)$ a.s.
%\end{theorem}
%\begin{theorem}
%For functions f(X) and g(Y)
%\end{theorem}
\begin{theorem}\label{indepCond}
For a function $f(X,Y) \in L(\PR)$ of independent random variables $X$, $Y$ we have
\begin{eqnarray}\label{condAveprop}
\E(f(X,Y)|X) = (\E f(x, Y))_{x = X}.
\end{eqnarray}
\end{theorem}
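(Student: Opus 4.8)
The plan is to exhibit the candidate $g(X)$, where $g(x):=\E f(x,Y)$, and verify that it satisfies the two defining properties of the conditional expectation $\E(f(X,Y)\mid X)$: that it is $\sigma(X)$-measurable and that it integrates correctly against every $\sigma(X)$-measurable set. The verification proceeds by the standard bootstrap from indicator functions of rectangles, through bounded measurable functions, to $L^1$.

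First I would set up the product structure. Since $X$ and $Y$ are independent, the distribution of $(X,Y)$ is the product $\mu_X\otimes\mu_Y$, so $f(X,Y)\in L^1(\PR)$ means $\int|f|\,d(\mu_X\otimes\mu_Y)<\infty$. By the Fubini--Tonelli theorem the section $y\mapsto f(x,y)$ lies in $L^1(\mu_Y)$ for $\mu_X$-a.e.\ $x$, the map $x\mapsto g(x)=\int f(x,y)\,\mu_Y(dy)$ is measurable once redefined to be $0$ on the $\mu_X$-null set where it is not finite, and $g\in L^1(\mu_X)$. Hence $g(X)$ is a well-defined $\sigma(X)$-measurable element of $L^1(\PR)$. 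The averaging identity is then also just Fubini: for $A=\{X\in B\}$ with $B$ measurable,
\[
\E\big(g(X)\I_A\big)=\int_B g\,d\mu_X=\int f(x,y)\,\I_B(x)\,d(\mu_X\otimes\mu_Y)(x,y)=\E\big(f(X,Y)\I_A\big),
\]
and a.s.\ uniqueness of conditional expectation gives $g(X)=\E(f(X,Y)\mid X)$.

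If one prefers not to invoke Fubini as a black box, the same result follows from a functional monotone class argument. Let $\mathcal{H}$ be the set of bounded measurable $f$ on the product space for which the identity holds. For $f(x,y)=\I_B(x)\I_C(y)$ one has $g(x)=\I_B(x)\PR(Y\in C)$, and $\E(\I_B(X)\I_C(Y)\mid X)=\I_B(X)\PR(Y\in C)$ follows by pulling out the $\sigma(X)$-measurable factor $\I_B(X)$ and using independence of $\I_C(Y)$ from $\sigma(X)$; so such $f\in\mathcal H$. The collection $\mathcal H$ is a vector space stable under bounded monotone limits (conditional monotone convergence on one side, ordinary monotone convergence for $g$ on the other), and the rectangles form a $\pi$-system generating the product $\sigma$-algebra, so $\mathcal H$ contains all bounded measurable functions. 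Finally, for general $f\in L^1(\PR)$ apply the bounded case to the truncations $f_n:=(f\wedge n)\vee(-n)$ and pass to the limit, using dominated convergence for $g$ and conditional dominated convergence for $\E(\cdot\mid X)$.

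The main obstacle is the measurability and a.e.\ finiteness of $x\mapsto\E f(x,Y)$ and, in the limiting steps, making sure the objects being compared agree as elements of $L^1(\PR)$ rather than merely formally: the a.s.-defined conditional expectations and the $\mu_X$-a.e.-defined function $g$ must be identified consistently with the null sets. Once the rectangle case and this bookkeeping are settled, everything else is routine.
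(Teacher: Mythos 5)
The paper does not prove this statement: it is quoted as a well-known property of conditional expectation (with the reader pointed to standard textbooks), so there is no in-paper argument to compare against. Your proof is correct and is the standard one --- identify the joint law of $(X,Y)$ with $\mu_X\otimes\mu_Y$ by independence, use Fubini--Tonelli to get measurability, a.e.\ finiteness and integrability of $g(x)=\E f(x,Y)$, verify the averaging identity $\E(g(X)\I_{\{X\in B\}})=\E(f(X,Y)\I_{\{X\in B\}})$ on $\sigma(X)$, and conclude by a.s.\ uniqueness of conditional expectation, with the monotone class bootstrap from rectangles as an equivalent alternative.
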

%\begin{defin}\label{condProb}
Conditional probability of an event $B \subset \Omega$ given random variable $X$ is defined as
\begin{equation}\label{PBX}
\PR(B|X) := \E(\I_{B}|X).
\end{equation}
%\end{defin}
%Below we give a convenient version of the definition of regular conditional probability
%of two random variables \cite{Durrett96}.
%The following theorem is a reformulation of an exercise in ref. \cite{Durrett} on  page 231.
%\begin{theorem}\label{defCondExp}
%If random variables $X$ and $Y$ takes values in nice space $(S, \mathcal{S})$,%TODOTODO prove it can be the same set?
%then there exists function $\mu_{X|Y}:S \times \mathcal{S} \longrightarrow [0,1]$, such that
%\begin{enumerate}
% \item $ \forall_{A \in \mathcal{S}}\ \mu_{Y|X}(X , A) $ is a version of $P(Y \in A|X)$ 
% \item for a. e. $\omega \in \Omega\   \mu_{Y|X}(P(\omega), \textperiodcentered)$ is a probability measure on $\mathcal{S}$.
%\end{enumerate}
%\end{theorem}
%$\mu_{X|Y}$ was given no name in Ref. \cite{Durrett96}, so we offer it in the definition below.
%\begin{defin}\label{defmu}
%Any function $\mu_{Y|X}$ fulfilling conditions as in \ref{defCondExp} is called conditional distribution of $Y$ given on $X$.
%\end{defin}
%The definition below is based on exercise in ref. \cite{Durrett} on  page 231.%Reasons
Below we give definition of conditional distribution (\cite{borovkov1999mathematical} chap. 20, def. 1).
\begin{defin}\label{defMu}
For 2 random variables $X$ and $Y$ on $(\Omega, \mathcal{F}, \PR)$ and with values in $(S_1, \mathcal{S}_1)$ and $(S_2, \mathcal{S}_2)$
respectively, we call $\mu_{Y|X}:S_1 \times \mathcal{S}_2 \longmapsto [0,1]$ conditional distribution of $Y$ given $X$ if the following 
conditions are satisfied.
\begin{enumerate}
\item For every  $x \in S_1$  $\mu_{Y|X}(x, \cdot)$ is a probability measure on $\mathcal{S}_2$.
\item $\forall A \in \mathcal{S}_2 $ function $x \longmapsto \mu_{Y|X}(x , A) $ is measurable.
\item $\forall A \in \mathcal{S}_2 \quad \mu_{Y|X}(X , A) $ is a version of $\PR(Y \in A|X)$. 
\end{enumerate}
We  also say that $\mu_{Y|X}(,)$ is conditional distribution of $Y$ given $X=x$.
\end{defin}
  It turns out that for random variables $Y$ and $X$ with values in standard Borel spaces such as $(\R^n, \mc{B}(\R^n))$ 
  conditional distribution of $Y$ given $X$ exists and is in certain sense unique (see Chap. 1 in \cite{Ikeda1981}). 
%see definition 3.3 of standard Borel space and Theorem 3.3, chap. 1 in 
%  and is uniquely determined, in the sense that
% there exists $N \subset S_1$, $\mu_X(N) = 0$ such that for every $A \in \mathcal{S}_2$
% %TODO, that is by their joint distribution
% $R^n$ with any norm is a Polish space, and the space of paths of
% right-continuous processes on a countable space $E$ are random variables with values in
%  Polish spaces when one equips $E^{[0,\infty]}$ with a metric generating Skorochod topology\cite{Kurtz1986},
%  so we do not need to assume the existence of conditional distribution in definition \ref{DMdef}.
%Furthermore it holds
It holds \cite{borovkov1999mathematical} that for $g(Y) \in L^1(\PR)$ and any random variable $X \sim \mu_X$
if $\mu_{Y|X}$ exists we have
\begin{equation}\label{condCond}
\E(g(Y)|X) = \int\! g(y)\, \mu_{Y|X}(X,dy).
\end{equation}
In particular, $\E(g(Y)|X)$ is certain function of $X$ and its distribution is determined by $\mu_X$ and $\mu_{Y|X}(,)$.
%Furthermore, such random variables $Y$, $X$ it holds that their joint distribution uniquely determines
%the conditional distribution of $Y$ conditional on
%the Polish space of right-continuous functions $f:[0,\infty) \longmapsto E$ with so called Skor
%It turns out that for a. e. value of $X$ in the above definition probability measure $\mu(x, \cdot)$ is uniquely determined
%\cite{Ikeda1981}.
%This motivates the following definition.
%\begin{defin}\label{sameReg}
%For random variables $X_1$ and $X_2$ with the same distribution $\mu_X$
%by saying that $Y_1$ has the same regular conditional distribution given $X_1$ as $Y_2$ given $X_2$ we mean that measures given by
%respective regular conditional probabilities
%$\mu_{Y_1|X_1}(x, \cdot)$ and $\mu_{Y_2|X_2}(x, \cdot)$ are equal for  $\mu_X$ almost every value of $X$.
%\end{defin}
%TODO
%in an exercise in ref. \cite{Durrett96} on  page 231, but
%it is given no name there. So we decided to call it conditional distribution of $Y$ given $X$. It should not be mistaken, however, with
%regular conditional distribution of $Y$ given $\sigma(X)$\cite{Durrett96}.
%why?
\begin{theorem}\label{aveSecFin}
Using notations and assumptions from Section \ref{genParSec}, if
for every value $p = (c,k)$ of $P=(C, K)$ we have \mbox{$h(p, R) \sim \mu_{DMCP}(RN(k), c)$}, then
$\mu_{DMCP}(RN(k),c)$ is conditional probability of $h(P,R)$ given $P=p$.
%$\forall_{A \in \mathcal{B}(R),\ } \mu_{f(P,R)|P} = \Markov(P)(A)$.
\end{theorem}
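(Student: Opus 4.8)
The plan is to verify the three defining properties in Definition~\ref{defMu} of a conditional distribution, applied to $Y := h(P,R)$ — which by construction (\ref{formP}) is a right-continuous nonexplosive process, i.e.\ a random element of the trajectory space — and the parameter vector $P = (C,K)$, taking as candidate kernel $\nu(p,\cdot) := \mu_{DMCP}(RN(k),c)$ for $p=(c,k)$. Property~1 is immediate: for each admissible $p$, $\mu_{DMCP}(RN(k),c)$ is by definition a probability measure on the trajectory space (for parameter values not admitting DMCP, which do not arise under the standing assumption of Section~\ref{genParSec}, one may set $\nu(p,\cdot)$ to be any fixed probability measure, since a conditional distribution need only be correct $\mu_P$-almost everywhere).

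The heart of the argument is property~3: for every measurable set $A$ of the trajectory space, $\nu(P,A)$ should be a version of $\PR(Y \in A \mid P)$. I would fix such an $A$, set $g(p,r) := \I_A(h(p,r))$ — a bounded, jointly measurable function of $(p,r)$, so that $g(p,R) \in L^1(\PR)$ for each $p$ and $g(P,R) = \I_{\{Y\in A\}}$ — and invoke the independence of $P$ and $R$ together with Theorem~\ref{indepCond} to obtain
\[
\PR(Y\in A \mid P) \;=\; \E\big(g(P,R)\mid P\big) \;=\; \big(\E\, g(p,R)\big)_{p=P} \;=\; \big(\PR(h(p,R)\in A)\big)_{p=P}.
\]
By the hypothesis $h(p,R) \sim \mu_{DMCP}(RN(k),c)$ for $p=(c,k)$, the inner quantity equals $\nu(p,A)$, so $\PR(Y\in A\mid P) = \nu(P,A)$ almost surely, which is property~3. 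Property~2 — measurability of $p \mapsto \nu(p,A)$ — then comes for free, since by the same chain of identities $\nu(p,A) = \E\,\I_A(h(p,R))$, and the measurability of $p \mapsto \E\, g(p,R)$ is part of the content of Theorem~\ref{indepCond}.

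I expect the only real obstacle to be the measurability bookkeeping underlying the application of Theorem~\ref{indepCond}: one must ensure that the trajectory space is standard Borel (so that Definition~\ref{defMu} is meaningful and conditional distributions exist) and that $h$ is jointly measurable in its two arguments with values in that space, so that $g = \I_A \circ h$ is genuinely a measurable function of the independent pair $(P,R)$. These facts are supplied by the standing measurability conventions of Appendix~\ref{appMath} and by the explicit step-by-step constructions of $h$ in Section~\ref{secConstr}; once they are in place, the proof is just the substitution displayed above, with no further difficulty.
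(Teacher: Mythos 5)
Your proof is correct and follows essentially the same route as the paper's: both verify the three conditions of Definition~\ref{defMu}, with the key step being property~3 obtained by writing $\PR(h(P,R)\in A\mid P)$ as the conditional expectation of the indicator $\I_A(h(P,R))$, applying Theorem~\ref{indepCond} via independence of $P$ and $R$, and then invoking the distributional hypothesis $h(p,R)\sim\mu_{DMCP}(RN(k),c)$. Your treatment of properties~1 and~2 is simply a more explicit version of what the paper dismisses as ``obvious'' and ``standard.''
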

\begin{proof}
Point 1 in definition \ref{defMu} obviously holds, proof of point 2 is standard and the proof of point 3 is as follows
\begin{equation}
\begin{split}
\forall_{A \in \mathcal{B}(E^T)}\ \PR(h(P,R) \in A |P) &= \PR(h(p,R) \in A)_{p = P} \\
&= \mu_{DMCP}(RN(K), C)(A),\\
\end{split}
\end{equation}
where in the first equality we used (\ref{PBX}) and Theorem \ref{indepCond}.
\end{proof}

\chapter{\label{appFurthEstims} Some further estimators and proof of relation between inefficiency constants}
Use notations intoduced when defining scheme E in Section \ref{secMany}, 
we define here sub schemes of E for estimation of main and total sensitivity indices with respect to pairs $(P_i,P_j)$. 
We first define helper estimator 
\begin{equation}
\widehat{E^2}_{kl, E} := \frac{1}{8} \sum_{i=0}^{1} \sum_{j=0}^{1} (s_{k}[i][j]s_{k}[1-i][1-j] + s_{l}[i][j]s_{l}[1-i][1-j]).
\end{equation}
The estimator for $V_{(P_k,P_l)}$ is
\begin{equation}\label{VklEst}
\widehat{V}_{kl,E} := \frac{1}{4}\sum_{i=0}^{1} \sum_{j=0}^{1} (s_k[i][j] s_l[1-i][1-j]) - \widehat{E^2}_{kl,E},
\end{equation}
while for $\tilde{V}_{(P_k, P_l)}^{tot}$ the estimator is
\begin{equation}\label{vkltotEst}
%\begin{split}
\widehat{\widetilde{V}}^{tot}_{kl,E} := \frac{1}{8}\sum_{i=0}^{1} \sum_{j=0}^{1} (s_k[i][j] s_k[i][1 -j]
 + s_{l}[i][j] s_{l}[i][1-j] -  2 s_k[i][j]s_l[i][1-j]).
%\end{split}
\end{equation}
The following lemma is needed for the proof of Theorem (\ref{thdEMEComp}). 
\begin{lemma}\label{thGBig}
For a random vector $X=(X_i)_{i=1}^3$ with independent coordinates let $g(X) \in L^4(\PR)$ 
and let random variables $Y_{i}[j]$ for  $i \in I_3$ and $j \in \{0,1\}$ 
be mutually independent and fulfill $Y_{i}[j] \sim X_i$. We denote, for $i,\ j,\ k \in \{0, 1\}$ 
\begin{equation}
g[i][j][k] = g(Y_{1}[i],Y_{2}[j],Y_{3}[k]). 
\end{equation}
For $i \in \{0,1\}$ we denote 
%\begin{equation}
$B[i] = g[1-i][i][0]-g[i][i][0]$, 
%\end{equation}
%\begin{equation}
$C[i] = g[1-i][1-i][1] - g[i][1-i][1]$
%\end{equation}
and 
\begin{equation}
A[i] = B[i]C[i].
%A[i] = (g[1-i][i][0]-g[i][i][0])(g[1-i][1-i][1] - g[i][1-i][1]).
\end{equation}
 It holds
\begin{equation}\label{covgeq0}
\Cov(A[0],A[1]) \geq 0.
\end{equation}
\end{lemma}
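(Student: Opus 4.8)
The plan is to condition on the two first-coordinate variables $Y_1[0]$ and $Y_1[1]$, after which all four factors factor explicitly, and then to close the argument with two or three applications of Jensen's inequality. First I would observe that every one of $B[0], C[0], B[1], C[1]$ is a function of $Y_1[0]$ and $Y_1[1]$, so it is natural to freeze values $u = Y_1[0]$, $v = Y_1[1]$ and set $G(y_2,y_3) := g(v,y_2,y_3) - g(u,y_2,y_3)$. Writing $a := Y_2[0]$, $b := Y_2[1]$, $c := Y_3[0]$, $d := Y_3[1]$ — which, given $u,v$, are mutually independent with $a,b\sim X_2$ and $c,d\sim X_3$ — a direct substitution into the definitions gives
\[
B[0] = G(a,c),\quad C[0] = G(b,d),\quad B[1] = -G(b,c),\quad C[1] = -G(a,d),
\]
so that $A[0]A[1] = G(a,c)G(a,d)\cdot G(b,c)G(b,d)$, a product of something depending only on $a$ (and $c,d$) with something depending only on $b$ (and $c,d$).

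Next I would compute the two expectations entering the covariance. For $\E(A[0])$ and $\E(A[1])$: given $u,v$, the pair $(B[0],C[0])$ depends on the disjoint variable groups $\{a,c\}$ and $\{b,d\}$, hence $B[0]$ and $C[0]$ are conditionally independent with common conditional mean $\phi(u,v) := \E[G(X_2,X_3)]$ (expectation over independent $X_2,X_3$); likewise $(B[1],C[1])$ depends on the disjoint groups $\{b,c\}$ and $\{a,d\}$, with conditional means both equal to $-\phi(u,v)$. Therefore $\E(A[0]\mid u,v) = \E(A[1]\mid u,v) = \phi(u,v)^2$, and so $\E(A[0]) = \E(A[1]) = \E[\phi(Y_1[0],Y_1[1])^2]$. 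For $\E(A[0]A[1]\mid u,v)$ I would condition further on $(c,d)$: then $G(a,c)G(a,d)$ and $G(b,c)G(b,d)$ become independent with the same conditional mean $H(c,d) := \E_a[G(a,c)G(a,d)]$, whence $\E(A[0]A[1]\mid u,v) = \E_{c,d}[H(c,d)^2]$.

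It then remains to show $\Cov(A[0],A[1]) = \E_{u,v}\big[\E_{c,d}[H_{u,v}(c,d)^2]\big] - \big(\E_{u,v}[\phi(u,v)^2]\big)^2 \ge 0$. Pointwise in $u,v$: Jensen gives $\E_{c,d}[H^2]\ge(\E_{c,d}[H])^2$, and after exchanging the order of integration $\E_{c,d}[H] = \E_a\big[(\E_c G(a,c))^2\big]\ge(\E_{a,c} G(a,c))^2 = \phi(u,v)^2\ge 0$, so $\E_{c,d}[H_{u,v}(c,d)^2]\ge\phi(u,v)^4$; averaging over $u,v$ and using $\E_{u,v}[\phi^4]\ge(\E_{u,v}[\phi^2])^2$ finishes the proof. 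The only delicate point is bookkeeping — keeping straight which of the six variables each factor depends on so that the conditional-independence splittings are valid — together with a routine integrability check: since $g\in L^4(\PR)$ each $G$ lies in $L^4$, hence all fourfold products are in $L^1(\PR)$ by H\"older and every tower-property and Fubini step above is legitimate. I do not anticipate any obstacle beyond this careful case analysis.
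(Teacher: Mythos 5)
Your proof is correct, and it takes a genuinely different route from the paper's. The paper groups the four factors by the shared copy of $Y_3$, writing $\E(A[0]A[1])=\E\bigl((B[0]B[1])(C[0]C[1])\bigr)$, notes that $B[0]B[1]$ and $C[0]C[1]$ are conditionally i.i.d.\ given $(Y_1,Y_2)$ so that $\Cov(B[0]B[1],C[0]C[1])=\Var(\E(B[0]B[1]\mid Y_1,Y_2))\ge 0$, and then evaluates $\E(A[i])=2V_1$ and $\E(B[0]B[1])=-2(V_1+V_{1,3})$ explicitly via the ANOVA decomposition of $g$, reducing the claim to $4V_{1,3}(2V_1+V_{1,3})\ge 0$, which rests on the nonnegativity of the ANOVA variance components. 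You instead condition on $(Y_1[0],Y_1[1])$ and group the factors by the shared copy of $Y_2$, i.e.\ $A[0]A[1]=\bigl[G(a,c)G(a,d)\bigr]\bigl[G(b,c)G(b,d)\bigr]$, and close with three nested applications of Jensen's inequality: $\E_{c,d}[H^2]\ge(\E_{c,d}H)^2$, then $\E_{c,d}H=\E_a[(\E_c G(a,c))^2]\ge\phi^2\ge 0$ (so squaring is legitimate), then $\E_{u,v}[\phi^4]\ge(\E_{u,v}[\phi^2])^2$. Your sign bookkeeping ($B[1]=-G(b,c)$, $C[1]=-G(a,d)$), the conditional-independence splittings, and the identification $\E(A[0])=\E(A[1])=\E[\phi^2]$ (which matches the paper's $2V_1$) are all correct, and the $L^4$ hypothesis indeed licenses every Fubini and tower step via H\"older. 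What your argument buys is self-containedness and transparency: it needs neither the ANOVA machinery nor the covariance identity of Theorem \ref{thCond}, only elementary conditioning and convexity. The paper's version, in exchange, yields the explicit decomposition $\Cov(A[0],A[1])=\Var(\E(B[0]B[1]\mid Y_1,Y_2))+4V_{1,3}(2V_1+V_{1,3})$ in terms of named sensitivity quantities, which is more informative about when the covariance is strictly positive.
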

\begin{proof}
Let us denote by $(V_{J})_{J \subset I}$ the elements of ANOVA decomposition of variance of $g(X_1,X_2,X_3)$ (see section \ref{secANOVA}). 
From Theorem \ref{thCond} for every $i \in \{0,1\}$ $\E(A[i]) = 2V_1$. Thus we can write 
%\begin{widetext}
\begin{equation}\label{covBig}
\Cov(A[0],A[1]) = \E(B[0]B[1]C[0]C[1]) - 4V_1^2.
\end{equation}
Since $B[0]B[1] \sim C[0]C[1]$ it holds
\begin{equation}\label{bcbc}
\E(B[0]B[1]C[0]C[1]) = \Cov(B[0]B[1], C[0]C[1]) + \E^2(B[0]B[1]).
\end{equation}
From Theorem \ref{thCond} covariance on the rhs of (\ref{bcbc}) fulfills  
\begin{equation}\label{covsee}
\begin{split}
\Cov(B[0]B[1], C[0]C[1]) &= \Cov((g[1][0][0]-g[0][0][0])(g[0][1][0] - g[1][1][0]),\\
 &(g[1][0][1] -g[0][0][1])(g[0][1][1] -g[1][1][1]))\\
&= \Var(\E((g[1][0][0]-g[0][0][0])(g[0][1][0] - g[1][1][0])|Y_1,Y_2))\\ 
& =\Var(\E(B[0]B[1]|Y_1,Y_2)).
\end{split}
\end{equation}
We also have 
\begin{equation}\label{EB0B1}
\begin{split}
\E(B[0]B[1]) &= \E((g[1][0][0]-g[0][0][0])(g[0][1][0] - g[1][1][0])) \\
&= 2\E(\E^2(g[0][0][0]|Y_3)- \E^2(g[1][0][0]|Y_1,Y_3)) = -2(V_1 + V_{1,3}). \\
\end{split}
\end{equation}
Combining  (\ref{covBig}), (\ref{bcbc}), (\ref{covsee}) and (\ref{EB0B1}) we receive  
\begin{equation}
\Cov(A[0],A[1]) = \Var(\E(B[0]B[1]|Y_1,Y_2)) + 4V_{1,3}(2V_1 + V_{1,3}) \geq 0.
\end{equation}
\end{proof}

Below we provide the proof of Theorem \ref{thdEMEComp}.
\begin{proof}
For $i \in \{0,1\}$ let us denote
\begin{equation}\label{AiDef}
A[i] = (\wt{s}_k[i][0] - \wt{s}[i][0])(\wt{s}[1-i][1] -\wt{s}_k[1-i][1])
\end{equation}
and $k_{i}$ the number of function evaluations used by scheme $i$. Using our standard notation for observables of estimators we have
\begin{equation}\label{appkAkEM}
k_{E}\Var(V_{i,E}) = k_{EM}(\Var(V_{i,EM}) + \Cov(A[0], A[1])).
\end{equation}
%where the variances above are computed with tespect to $f$ and $\mu_a$.
%For $x_1 = x_{P_i}$, $x_2 = x_{P_{\sim i}}$ and $x_3 = x_R$
%and $g$
%$g[i][j][k]$ defined as after proof of theorem \ref{helperTh} the same as in theorem
%\ref{thGBig} let us define $g[i][j][k]$ be defined as in the proof of theorem \ref{thGBig} $A_i$ in both
%theorems coincide. % $\Cov(A_0, A_1)$ are t.
%We have $(f_k[i][j],f[i][j])_{i,j \in{0,1}} \sim (g[1-i][i][j], g[i][i][j])_{i,j \in{0,1}}$.
%Thus $\Cov(A_0, A_1)$  and covariance on the lhs of (\ref{covgeq0}) are equal and we have
For $X_1 = P_k$, $X_2 = P_{\sim k}$, $X_3 = R$, function $g$ such that $g(X_1,X_2,X_3) = f(P,R)$ and
$g[i][j][k]$ defined as in Lemma \ref{thGBig}, we have
$(\wt{s}_k[i][j],\wt{s}[i][j]) \sim (g[1-i][i][j], g[i][i][j])$ for $i,j \in \{0,1\}$. 
In particular $A[i]$ for $i \in \{0,1\}$ given by (\ref{AiDef}) and defined 
in Lemma \ref{thGBig} have the same joint distribution, thus here we also have
\begin{equation}\label{covA0AMgeq}
\Cov(A[0], A[1]) \geq 0.
\end{equation}
Expression (\ref{EMComp}) now follows from (\ref{appkAkEM}), (\ref{covA0AMgeq}) and Theorem \ref{thCov}.
\end{proof}
%The estimator $(\widehat{V}_k^{tot}[i][j])$ is defined analogically and one analogically proves that
%$d_{V,V_i^{tot}}$ satisfies the same relation with $d_{EA,V_i}$ as $d_{V,V_i}$ with $d_{V,V_i}$
% that is 
%first rewriting the estimated value as sum of terms containing only multiplications and conditional expectations,
% applying construction in the proof of from theorem \ref{helperTh} and symmetrizing the resulting estimators. %TODO how many
%Of course one will need more function evaluations for deriving estimators for sensitivity indices associated
%with these moments. %TODO how manyN_p ()$f[P][R], f_i[P][R[j]]

\chapter{\label{app3Params}More efficient estimators for 3 parameters}
Schemes for estimation of sensitivity indices of functions and their 
conditional expectations can be improved for the number of parameters $N_P$ equal to $3$, 
so that the new schemes allow for estimation of the same main and total sensitivity indices with respect to individual parameters, 
but with lower or equal inefficiency constants, in which equality holds only if both constants are equal to $0$. 
For certain index $k\in I_3$ %and the remaining indices $l$ and $m$ from this set 
one may resign from using $s_k$. The new estimates of the indices associated with $k$-th 
parameter are computed using values of observables $(\wt{P}_{(k)}[i])_{i=0}^1$ in place of 
$(\wt{P}[i])_{i=0}^1$ in estimator for computing $k$-th index. This does not change the 
expected value or variance of the estimator but allows for using $((s_l[i][j], s_m[1-i][j])_{i=0}^{1})_{j=0}^{1}$
in place of $((s[i][j], s_k[1-i][j])_{i=0}^1)_{j=0}^{1}$ in sub scheme computing $k$-th indices in scheme E  or 
$(s_l[i], s_m[1-i])_{i=0}^{1}$ in place of $(s[i], s_k[1-i])_{i=0}^1$ in such sub scheme of $O$. 
This reduces the number of function evaluations needed by a scheme by factor 
$\frac{3}{4}$, without changing the variance of its estimators. 
For instance the estimator for $k$-th main sensitivity index for output of such new scheme O3l created from O becomes 
% $V_k^{tot}$ but instead use $f_m$ and $\widehat{f}_l$  for indices $l$ and $m$ different than $k$. For $N_P = 3$ one can find set of estimators with
% lower inefficiency constant for estimation of all sensitivity indices.
\begin{equation}\label{Vl03l}
\widehat{V}_{k,O3k} := \frac{1}{2}(s_l[0] - s_m[1])(s_m[0] - s_l[1]),
\end{equation}
while for the total index of output 
\begin{equation}\label{Vltot03l}
\widehat{V}_{k,O3k}^{tot} := \frac{1}{4}\sum_{i=0}^1(s_l[i] - s_m[1-i])^2.
\end{equation}
For scheme E3l created in this way from E we have 
\begin{equation}
\widehat{V}_{l,E3k} := \frac{1}{4}\sum_{i=0}^1(s_l[i][0] - s_m[1-i][0])(s_m[1-i][1] - s_l[1-i][1])
\end{equation}
and
\begin{equation}
\widehat{\widetilde{V}}_{l,E3k}^{tot} := \frac{1}{4}\sum_{i=0}^1(s_l[i][0] - s_m[1-i][0])(s_l[i][1] - s_m[1-i][1]).
\end{equation}
Alternatively, instead of resigning from using $s_{k}$ one can apply it 
to compute analogous new estimators for $l$ and $m$ and use for all indices the averages
of old and new estimators. Thanks to theorem \ref{thAveVar} variances of such estimators are smaller or equal to the variances
of original estimators. 

\chapter{\label{appd}Sensitivity indices for simple birth model}
Instead of one birth process with rate equal to the sum of coordinates of random vector $K = (K_i)_{i=1}^3$ let us consider a model consisting 
of three birth processes with rates equal to its consecutive coordinates. This does not change conditional distribution of the process 
given the parameters, as 
in both cases it is the distribution of sum of three independent Poisson processes with given rates. In particular such change 
does not influence the values of variance-based sensitivity indices we compute here. 
%\ref{RTCCon} or equivalently
We use a construction of such process resulting from integral equation (\ref{intEqu}) generalized to random parameters 
\begin{equation}\label{sumIndep}
Y_t = C + \sum_{i=1}^3 N_i(K_it).
\end{equation}
Using formula
\begin{equation}
 \sum_{i=1}^{n}i^3 = \frac{n(n+1)(2n+1)}{6}
\end{equation}
we receive for $X \sim U_d(a, b)$ (see Definition \ref{defUD})
\begin{equation}
\Var(X) = \frac{b(b+1)(2b+1) - (a-1)a(2a -1)}{6(b-a + 1)} - \left(\frac{a + b}{2}\right)^2.
\end{equation}
From the last expression we have
\begin{equation}
V_C = \Var(\E (Y_t|C)) = \Var(C) = 310.
\end{equation}
Denoting $\Pois(\lambda)$ Poisson distribution with parameter $\lambda$, 
for any Poisson process $N$ it holds $N(\lambda) \sim \Pois(\lambda)$. In particular 
\begin{equation}
 \E(N(\lambda)) = \lambda
\end{equation}
and 
\begin{equation}\label{Poiss2}
\E(N(\lambda)^2) = \lambda^2 + \lambda.
\end{equation}
From Theorem \ref{indepCond} we receive for $i \in I_3$ 
\begin{equation}\label{NCond}
\E(N_i(K_i t)|K_i) =  (\E(N_i(k_it)))_{k_i = K_i} = K_it.
\end{equation}
%where we used the fact that for Poisson process $\E(N_i(\lambda))\sim \Pois(\lambda)$ and . 
Since for $X \sim U(a, b)$ we have
\begin{equation}\label{varU}
\Var(X) = \frac{(b-a)^2}{12},
\end{equation}
we obtain 
\begin{equation}\label{VK1}
V_{K_1} = \Var(K_1 t) = 300,
\end{equation}
and similarly $V_{K_2} = 75$ and $V_{K_3} = 3$. From independence of summands in
\begin{eqnarray}
\E(Y_t|P) = \sum_{i=1}^3K_it + C
\end{eqnarray}
 we receive 
\begin{eqnarray}
V_P = V_C + \sum_{i=1}^3 V_{K_i} = 688
\end{eqnarray}
and $\tilde{V}_{i}^{tot} = V_{i}$ for every $i$-th parameter. 
% Since for random variable $X \sim \Pois(\lambda)$ we have
% $\E(X^2) = \lambda^2 + \lambda$, 
From (\ref{Poiss2}) we receive for $i \in I_3$ 
\begin{eqnarray}\label{PoissSq}
%\begin{split}
\E(N(K_it)^2) = \E((\E(N(k_it)^2))_{k_i = K_i}) = \E(K_i^2)t^2 + \E(K_i)t,
%\end{split}
\end{eqnarray}
while using further (\ref{VK1}) and (\ref{PoissSq}) 
\begin{equation}\label{NKVar}
%\begin{split}
\Var(N(K_it)) = \E(N(K_it)^2) - \E^2(K_it) = V_{K_i} + \E(K_i)t.
%\end{split}
\end{equation}
From (\ref{NKVar}) and independence of summands in the rhs of (\ref{sumIndep}) we receive
\begin{equation}
\begin{split}
D & = \Var(Y_t) = V_C  + \sum_{i=1}^3 (V_{K_i} + \E(K_i)t)  \\
  & = V_P + \E (\sum_{i=1}^3K_i) t = 858.\\
\end{split}
\end{equation}
We also have
\begin{equation}
V_R^{tot} = D - V_P = 170.
\end{equation}
% From (\ref{sumIndep}), (\ref{covCond}) and (\ref{NCond}) we get for all $i$
% \begin{equation}
% \Cov(Y_t, K_i)  = \Cov(tK_i,K_i) = t\Var(K_i),
% \end{equation}
% which can be computed from (\ref{varU}).
%  We thus have
% \begin{eqnarray}
% CCM_{K_i} = \frac{\Cov(Y_t, K_i)}{\sqrt{V_P}\sqrt{\Var K_i}}.
% \end{eqnarray}
% Similarly we get $CC_{X}  = \sqrt{\frac{V_C}{V_P}}$.
%Similar computations give
%$CC_{K_2} \approx 0,330$, $CC_{K_2} \approx 0,066$ and
%\begin{equation}
%CC_{X}  = \sqrt{\frac{V_C}{V_P}} \approx 0,671.
%\end{equation}

%TODOPub
% We will now compute sensitivity indices of conditional variance of $Y_t$ given $P$, which
% thanks to (\ref{CQKP}) and (\ref{PoissSq}) is equal to
% \begin{eqnarray}
% %\begin{split}
% \Var(Y_t|P) &= (\Var(c + \sum_{i=1}^3N_i(k_it)))_{p = P} = \sum_{i=1}^3K_it.
% %\end{split}
% \end{eqnarray}
% Thus $VVar_C = 0$,  $(VVar_{K_i} = V_{K_i})_{i=1}^{3}$ and
% \begin{equation}
% VVar_{P} = \sum_{i=1}^{3} V_{K_i} = 378.
% \end{equation}

\bibliographystyle{plain}
\bibliography{pub}

\providecommand{\noopsort}[1]{}\providecommand{\singleletter}[1]{#1}%
\begin{thebibliography}{10}

\bibitem{KurtzReview2010}
D.~F Anderson and T.~G. Kurtz.
\newblock Continuous time markov chain models for chemical reaction networks.
\newblock In {\em Design and Analysis of Biomolecular Circuits}, pages 3--42.
  Springer, 2011.

\bibitem{Archer1997}
G.~E.~B. Archer, A.~Saltelli, and I.~Sobol.
\newblock {Sensitivity measures, anova-like techniques and the use of
  bootstrap}.
\newblock {\em J. Stat. Comput. Simul.}, 58(2):99--120, 1997.

\bibitem{Atkins_2006}
P.~Atkins.
\newblock {\em Physical Chemistry}.
\newblock Oxford University Press, 8rev ed edition, 2006.

\bibitem{Barmassa}
A.~Barmassada and Y.~Carmel.
\newblock Incorporating output variance in local sensitivity analysis for
  stochastic models.
\newblock {\em Ecol. Modelling}, 213(3-4):463--467, 2008.

\bibitem{billingsley1979}
P.~Billingsley.
\newblock {\em Probability and Measure}.
\newblock Wiley series in probability and mathematical statistics. Probability
  and mathematical statistics. Wiley, 1979.

\bibitem{borovkov1999mathematical}
A.~A. Borovkov and A.~Moullagaliev.
\newblock {\em Mathematical Statistics}.
\newblock Gordon and Breach Science Publishers, 1999.

\bibitem{Bratley_1992}
P.~Bratley, B.~L. Fox, and H.~Niederreiter.
\newblock Implementation and tests of low-discrepancy sequences.
\newblock {\em ACM Trans. Model. Comput. Simul.}, 2:195--213, 1992.

\bibitem{Plested_2004}
D.~Colquhoun, K.~A. Dowsland, M.~Beato, and A.~J.~R. Plested.
\newblock How to impose microscopic reversibility in complex reaction
  mechanisms.
\newblock {\em Biophys. J.}, 86(6):3510--3518, 2004.

\bibitem{Cooke_Schmidler_2008}
B.~Cooke and S.~C. Schmidler.
\newblock Statistical prediction and molecular dynamics simulation.
\newblock {\em Biophys. J.}, 95(10):4497--4511, 2008.

\bibitem{Cristaldi_2011}
M.~D. Cristaldi, M.~I. Cabrera, E.~C. Martinez, and R.~J.~A. Grau.
\newblock Finding the simplest mechanistic kinetic model describing the
  homogeneous catalytic hydrogenation of avermectin to ivermectin.
\newblock {\em Ind. Eng. Chem. Res.}, 50(8):4252--4263, 2011.

\bibitem{Cukier1973}
R.~I. Cukier, C.~M. Fortuin, K.~E. Shuler, A.~G. Petschek, and J.~H. Schaibly.
\newblock Study of the sensitivity of coupled reaction systems to uncertainties
  in rate coefficients. {I} theory.
\newblock {\em J. Chem. Phys.}, 59(8):3873--3878, 1973.

\bibitem{deRocq_2008}
E.~de~Rocquigny, N.~Devictor, and S.~Tarantola.
\newblock {\em Uncertainty settings and natures of uncertainty}, pages
  199--211.
\newblock John Wiley \& Sons, Ltd, 2008.

\bibitem{Degasperi2008}
A.~Degasperi and S.~Gilmore.
\newblock Sensitivity analysis of stochastic models of bistable biochemical
  reactions.
\newblock In {\em Proceedings of the formal methods for the design of computer,
  communication, and software systems 8th international conference on formal
  methods for computational systems biology}, SFM'08, pages 1--20, Berlin,
  Heidelberg, 2008. Springer-Verlag.

\bibitem{Degenring2004}
D.~Degenring, C.~Froemel, G.~Dikta, and R.~Takors.
\newblock Sensitivity analysis for the reduction of complex metabolism models.
\newblock {\em J. Process Control}, 14(7):729--745, 2004.

\bibitem{Durrett}
R.~Durrett.
\newblock {\em Probability: Theory and Examples}.
\newblock Duxbury Press, second edition, 1995.

\bibitem{Ederer_Gilles_2007}
M.~Ederer and E.~D. Gilles.
\newblock Thermodynamically feasible kinetic models of reaction networks.
\newblock {\em Biophys. J.}, 92(6):1846--1857, 2007.

\bibitem{Efron1981}
B.~Efron and C.~Stein.
\newblock The jackknife estimate of variance.
\newblock {\em Ann. Stat.}, 9(3):586--596, 1981.

\bibitem{Kurtz1986}
S.~N. Ethier and T.~G. Kurtz.
\newblock {\em Markov Processes: Characterization and Convergence}.
\newblock Wiley series in probability and mathematical statistics. Probability
  and mathematical statistics. Wiley, 1986.

\bibitem{Fano47}
U.~Fano.
\newblock Ionization yield of radiations. {II}. {T}he fluctuations of the
  number of ions.
\newblock {\em Phys. Rev.}, 72(1):26--29, 1947.

\bibitem{Galassi_2003}
M.~Galassi, J.~Davies, J.~Theiler, B.~Gough, G.~Jungman, M.~Booth, and
  F.~Rossi.
\newblock {\em Gnu Scientific Library: Reference Manual}.
\newblock Network Theory Ltd., 2003.

\bibitem{Gardner2000}
T.~S. Gardner, C.~R. Cantor, and J.~J. Collins.
\newblock Construction of a genetic toggle switch in escherichia coli.
\newblock {\em Nature}, 403(6767):339--342, 2000.

\bibitem{Gibson2000}
Michael~A Gibson and Jehoshua Bruck.
\newblock Efficient exact stochastic simulation of chemical systems with many
  species and many channels.
\newblock {\em J. Phys. Chem. A}, 104(9):1876--1889, 2000.

\bibitem{Gillespie1976}
D.~T. Gillespie.
\newblock A general method for numerically simulating the stochastic time
  evolution of coupled chemical reactions.
\newblock {\em J. Comput. Phys.}, 22(4):403--434, 1976.

\bibitem{Halmos_1946}
P.~R. Halmos.
\newblock The theory of unbiased estimation.
\newblock {\em Ann. Math. Stat.}, 17(1):34--43, 1946.

\bibitem{Helton2003}
J.~C. Helton and F.~J. Davis.
\newblock Latin hypercube sampling and the propagation of uncertainty in
  analyses of complex systems.
\newblock {\em Rel. Eng. \& Sys. Safety}, 81:23--69, 2003.

\bibitem{Ikeda1981}
N.~Ikeda and S.~Watanabe.
\newblock {\em Stochastic Differential Equations and Diffusion Processes}.
\newblock North-Holland Pub. Co., 1981.

\bibitem{Iooss2008}
B.~Iooss and M.~Ribatet.
\newblock Global sensitivity analysis of computer models with functional
  inputs.
\newblock {\em Rel. Eng. \& Sys. Safety}, 94(7):1194--1204, 2008.

\bibitem{Jahnke2007}
T.~Jahnke and W.~Huisinga.
\newblock {Solving the chemical master equation for monomolecular reaction
  systems analytically}.
\newblock {\em J. Math. Biol.}, 54(1):1--26, 2007.

\bibitem{Juillet2009}
B.~Juillet, C.~Bos, C.~Gaudichon, D.~Tomas, and H.~Fouillet.
\newblock Parameter estimation for linear compartmental models--a sensitivity
  analysis approach.
\newblock {\em Ann. Biomed. Eng.}, 37(5):1028--1042, 2009.

\bibitem{Kim_Spencer_Albeck_Burke_Sorger_Gaudet_Kim_2010}
K.~A. Kim, S.~L. Spencer, J.~G. Albeck, J.~M. Burke, P.~K. Sorger, S.~Gaudet,
  and D.~H. Kim.
\newblock Systematic calibration of a cell signaling network model.
\newblock {\em BMC Bioinf.}, 11(1):202, 2010.

\bibitem{KolmogorovFomin60}
A.~N. Kolmogorov and S.~V. Fomin.
\newblock {\em Measure, {L}ebesgue Integrals, and {H}ilbert Space}.
\newblock Academic Press, 1960.

\bibitem{Kuo_2010}
F.~Y. Kuo, H.~I. Sloan, G.~W. Wasilkowski, and H.~Wozniakowski.
\newblock On decompositions of multivariate functions.
\newblock {\em Math. of Comput.}, 79:953--966, 2010.

\bibitem{lehmann1998theory}
E.~L. Lehmann and G.~Casella.
\newblock {\em Theory of Point Estimation}.
\newblock Springer Texts in Statistics. Springer, 1998.

\bibitem{Lipniacki2007}
T.~Lipniacki, K.~Puszynski, P.~Paszek, A.~R. Brasier, and M.~Kimmel.
\newblock Single {TNF$\alpha$} trimers mediating {NF}-{$\kappa$B} activation:
  stochastic robustness of {NF}-{$\kappa$B} signaling.
\newblock {\em BMC Bioinf.}, 8:376, 2007.

\bibitem{Liu_Swihart_Neelamegham_2005}
Gang Liu, Mark~T Swihart, and Sriram Neelamegham.
\newblock Sensitivity, principal component and flux analysis applied to signal
  transduction: the case of epidermal growth factor mediated signaling.
\newblock {\em Bioinformatics}, 21(7):1194--1202, 2005.

\bibitem{MATSUMOTO_NISHIMURA_1998}
M.~Matsumoto and T.~Nishimura.
\newblock Mersenne twister: a 623-dimensionally equidistributed uniform
  pseudo-random number generator.
\newblock {\em ACM Trans. Model. Comput. Simul.}, 8(1):3--30, 1998.

\bibitem{Mauch_2011}
S.~Mauch and M.~Stalzer.
\newblock Efficient formulations for exact stochastic simulation of chemical
  systems.
\newblock {\em IEEE/ACM Trans. Comput. Biol. Bioinformatics}, 8:27--35, 2011.

\bibitem{Arkin1999}
H.~H. McAdams and A.~Arkin.
\newblock It's a noisy business! genetic regulation at the nanomolar scale.
\newblock {\em Trends Genet.}, 15(2):65--69, 1999.

\bibitem{McCollum_2006}
J.~M. McCollum, G.~D. Peterson, C.~D. Cox, M.~L. Simpson, and N.~F. Samatova.
\newblock The sorting direct method for stochastic simulation of biochemical
  systems with varying reaction execution behavior.
\newblock {\em Comput. Biol. Chem.}, 30:39--49, 2006.

\bibitem{mcquarrie}
D.~A. Mcquarrie.
\newblock Stochastic approach to chemical kinetics.
\newblock {\em J. Appl. Prob.}, 4:413--478, 1967.

\bibitem{Norris1998}
J.~R. Norris.
\newblock {\em Markov Chains}.
\newblock Number 2008 in Cambridge Series in Statistical and Probabilistic
  Mathematics. Cambridge University Press, 1998.

\bibitem{Okino_Mavrovouniotis_1998}
M.~S. Okino and M.~L. Mavrovouniotis.
\newblock Simplification of mathematical models of chemical reaction systems.
\newblock {\em Chem. Rev.}, 98(2):391--408, 1998.

\bibitem{Onsager_1931}
L.~Onsager.
\newblock Reciprocal relations in irreversible processes. {II}.
\newblock {\em Phys. Rev.}, 38(12):2265--2279, 1931.

\bibitem{Owen05onthe}
A.~B. Owen.
\newblock On the {Warnock-Halton} quasi-standard error.
\newblock {\em Monte Carlo Methods Appl.}, 12(1):47--54, 2006.

\bibitem{Pahle2009}
J.~Pahle.
\newblock Biochemical simulations: stochastic, approximate stochastic and
  hybrid approaches.
\newblock {\em Brief. Bioinf.}, 10(1):53--64, 2009.

\bibitem{Plyasunov2007}
S.~Plyasunov and A.~P. Arkin.
\newblock Efficient stochastic sensitivity analysis of discrete event systems.
\newblock {\em J. Comput. Phys.}, 221(2):724--738, 2007.

\bibitem{Rabitz_Kramer_Dacol_1983}
H.~Rabitz, M.~Kramer, and D.~Dacol.
\newblock Sensitivity analysis in chemical kinetics.
\newblock {\em Ann. Rev. Phys. Chem.}, 34(1):419--461, 1983.

\bibitem{Rao_Wolf_Arkin_2002}
C.~V. Rao, D.~M. Wolf, and A.~P. Arkin.
\newblock Control, exploitation and tolerance of intracellular noise.
\newblock {\em Nature}, 420(6912):231--237, 2002.

\bibitem{Rathinam_2010}
M.~Rathinam, P.~W. Sheppard, and M.~Khammash.
\newblock Efficient computation of parameter sensitivities of discrete
  stochastic chemical reaction networks.
\newblock {\em J. Chem. Phys.}, 132(3):34103, 2010.

\bibitem{rudin1970}
W.~Rudin.
\newblock {\em Real and Complex Analysis}.
\newblock McGraw-Hill Book Co., New York, 1970.

\bibitem{Saltelli_2002}
A.~Saltelli.
\newblock Making best use of model evaluations to compute sensitivity indices.
\newblock {\em Comput. Phys. Commun.}, 145(2):280--297, 2002.

\bibitem{Saltelli2008}
A.~Saltelli, M.~Ratto, T.~Andres, F.~Campolongo, J.~Cariboni, D.~Gatelli,
  M.~Saisana, and S.~Tarantola.
\newblock {\em Global Sensitivity Analysis: The Primer}.
\newblock WileyBlackwell, 2008.

\bibitem{Saltelli2005}
A.~Saltelli, M.~Ratto, S.~Tarantola, and F.~Campologno.
\newblock Sensitivity analysis for chemical models.
\newblock {\em Chemical Reviews}, 105(7):2811--28, 2005.

\bibitem{Schaber2009}
J.~Schaber, W.~Liebermeister, and E.~Klipp.
\newblock Nested uncertainties in biochemical models.
\newblock {\em IET Syst. Biol.}, 3(1):1--9, 2009.

\bibitem{Sobol2007}
I.~Sobol, S.~Tarantola, D.~Gatelli, S.~Kucherenko, and W.~Mauntz.
\newblock Estimating the approximation error when fixing unessential factors in
  global sensitivity analysis.
\newblock {\em Rel. Eng. \& Sys. Safety}, 92(7):957--960, 2007.

\bibitem{Tarantola_Giglioli_Jesinghaus_Saltelli_2002}
S.~Tarantola, N.~Giglioli, J.~Jesinghaus, and A.~Saltelli.
\newblock Can global sensitivity analysis steer the implementation of models
  for environmental assessments and decision-making?
\newblock {\em Stochastic Environ. Res. Risk Assessment}, 16(1):63--76, 2002.

\bibitem{Tay_2010}
S.~Tay, J.~J. Hughey, T.~K. Lee, T.~Lipniacki, S.~R. Quake, and M.~W. Covert.
\newblock Single-cell {NF}-{$\kappa$B} dynamics reveal digital activation and
  analogue information processing.
\newblock {\em Nature}, 466(7303):267--271, 2010.

\bibitem{Thattai_2001_intrinsic}
M.~Thattai and A.~van Oudenaarden.
\newblock Intrinsic noise in gene regulatory networks.
\newblock {\em Proc. Nat. Acad. Sci. USA}, 98(15):8614--8619, 2001.

\bibitem{Turanyi_1990}
T~Turanyi.
\newblock Sensitivity analysis of complex kinetic systems. tools and
  applications.
\newblock {\em J. Math. Chem.}, 5(3):203 --248, 1990.

\bibitem{vanKampen_B07}
N.~G. van Kampen.
\newblock {\em Stochastic Processes in Physics and Chemistry}.
\newblock North Holland, third edition, 2007.

\bibitem{Van_Riel_2006}
N.~A.~W. Van~Riel.
\newblock Dynamic modelling and analysis of biochemical networks:
  mechanism-based models and model-based experiments.
\newblock {\em Brief. Bioinf.}, 7(4):364--374, 2006.

\bibitem{Wilkinson2006}
D.~J. Wilkinson.
\newblock {\em Stochastic Modelling for Systems Biology (Chapman \& Hall/CRC
  Mathematical \& Computational Biology)}.
\newblock Chapman and Hall/CRC, 1 edition, 2006.

\bibitem{Yang_Bruno_Hlavacek_Pearson_2006}
J.~Yang, W.~J. Bruno, W.~S. Hlavacek, and J.~E. Pearson.
\newblock On imposing detailed balance in complex reaction mechanisms.
\newblock {\em Biophys. J.}, 91(3):1136--1141, 2006.

\bibitem{Zhang2009}
H.~Zhang, W.~P. Dempsey, and J.~Goutsias.
\newblock Probabilistic sensitivity analysis of biochemical reaction systems.
\newblock {\em J. Chem. Phys.}, 131(9):94101, 2009.

\end{thebibliography}

%\bibitem{gill} Gillespie. C. S., \emph{Moment-closure approximations for mass-action models}, IET Syst. Biol., 1 2009
%\bibitem{sztencel} Jakubowski J., Sztencel R, \emph{Wstęp do teorii prawdopodobieństwa}, Warszawa 2004
%\bibitem{Lee2009} Lee C. H. et al., \emph{A Moment Closure Method for Stochastic Reaction Networks}, J. Chem. Phys. 130, 2009
%\bibitem{norris} Norris J. R., \emph{Markov Chains}, Cambridge, 1997
%\bibitem{palcz} Palczewski A., \emph{Równania różniczkowe zwyczajne}, WNT
%\bibitem{cum} McCullagh P., Kolassa J., \emph{http://www.scholarpedia.org/article/Cumulants}
%\bibitem{brem} Bremaud P., \emph{Markov Chains, Gibbs Fields, Monte Carlo Simulation and Queues}, Springer, 1998

%\bibitem{kham} Rathinam M., Sheppard P. W., Khammash M., \emph{Efficient computation of parameter sensitivities of discrete stochastic
%chemical reaction networks}

%numer - porównanie rzędu domkniecia, porównanie metod

\end{document}